\documentclass[aps,prx,reprint,longbibliography,floatfix]{revtex4-2}

\usepackage{amsmath,amssymb,amsthm,mathtools}
\usepackage{stmaryrd}
\usepackage{graphicx}
\usepackage{xcolor}
\usepackage{hyperref}
\usepackage{bm}
\usepackage{enumitem}
\usepackage{tikz}
\usetikzlibrary{positioning,calc,decorations.markings}
\tikzset{
  zxZ/.style={circle,draw=black,fill=green!25,minimum size=5.4mm,
              inner sep=0pt,font=\small},
  zxX/.style={circle,draw=black,fill=red!30,minimum size=5.4mm,
              inner sep=0pt,font=\small},
  zxH/.style={rectangle,draw=black,fill=yellow!70,minimum size=3.6mm,
              inner sep=0pt},
  zxdot/.style={circle,draw=black,fill=green!25,minimum size=2.6mm,
                inner sep=0pt},
  bnd/.style={circle,draw=black,fill=white,minimum size=2.4mm,inner sep=0pt},
  hedge/.style={draw=blue!65,line width=0.6pt,dashed},
  zxwire/.style={draw=black,line width=0.5pt},
}

\newtheorem{theorem}{Theorem}
\newtheorem{lemma}{Lemma}
\newtheorem{proposition}{Proposition}
\newtheorem{corollary}{Corollary}

\newtheorem{assumption}{Assumption}

\newcommand{\red}{\mathrm{red}}
\newcommand{\sde}{\mathrm{sde}}
\newcommand{\tmin}{t_{\min}}
\newcommand{\Zb}{\mathbb{Z}}
\newcommand{\CM}{\mathrm{CM}}

\begin{document}

\title{An Exactness Barrier for ZX-Calculus Optimization\\
of Synthesized Clifford+$T$ Circuits}

\author{Chon-Fai Kam}
\email{dubussygauss@gmail.com}
\affiliation{University Paris City and University of Reunion, Paris, France}
\affiliation{Dipartimento di Fisica e Chimica Emilio Segr\`e, Universit\`a degli Studi di Palermo, Palermo, Italy}

\author{Anuradha Mahasinghe}
\affiliation{Department of Mathematics, University of Colombo, Sri Lanka}
\affiliation{School of Physics, Mathematics and Computing, University of Western Australia, Australia}
\affiliation{University Paris City and University of Reunion, France}

\author{Kaushika De Silva}
\affiliation{School of Physics, Mathematics and Computing, University of Western Australia, Australia}
\affiliation{University Paris City and University of Reunion, France}

\author{Frederic Cadet}
\affiliation{University Paris City and University of Reunion, France}
\affiliation{PEACCEL, AI for Biologics, Paris, France}

\author{Jingbo Wang}
\affiliation{School of Physics, Mathematics and Computing, University of Western Australia, Australia}

\date{\today}

\begin{abstract}
Gate synthesis and circuit optimization are studied largely
separately, and the evidence on how well they compose is
contradictory: ZX-calculus rewriting removes a substantial and
strikingly stable fraction of a Solovay--Kitaev circuit, yet almost
nothing from the number-theoretically synthesized circuits that
practice actually uses. We show that both behaviours follow from a
single bound. For any optimizer that preserves the implemented
element exactly, which includes all sound ZX rewriting with
extraction, the achievable $T$-count is bounded below by the
denominator exponent of the synthesized ring element. This exactness
barrier is computable per instance, and it separates exact
post-processing from approximation-aware resynthesis by a certified
factor reaching $101\times$ at recursion depth five. The two
behaviours are then the barrier operating at different distances from
the floor. For Solovay--Kitaev circuits we prove that the purely
local layer of ZX simplification, spider fusion and identity removal,
computes exactly the free-product normal form of $\Zb_2 * \Zb_8$,
which gives an exact per-instance accounting of the available
compression and, under a calibrated ergodicity hypothesis, a
depth-independent limit law whose parameter-free prediction we
confirm on two independently constructed nets without fitting. For
number-theoretically synthesized circuits the floor is already
saturated: on single-qubit words we prove that automated ZX
simplification attains it exactly, and we identify the mechanism as a
closed-form formula for the minimal $T$-count in terms of phase
linkage through the $Z$-axis normalizer. At two qubits and beyond the
same valuation yields unconditional per-instance rigidity
certificates, which on the quantum-Shannon-decomposition plus
\texttt{gridsynth} pipeline certify $99.4$--$99.9\%$ of the
synthesized $T$-count as incompressible, with rigidity strengthening
as the accuracy target tightens. This explains, and predicts the size
of, a near-null optimization result recently reported for that
pipeline.
\end{abstract}

\maketitle

\section{Introduction}
\label{sec:intro}

Fault-tolerant quantum computation restricts native operations to a
discrete gate set, typically Clifford+$T$: the Clifford group
(Hadamard, phase, and CNOT) augmented by the non-Clifford $T$ gate.
Clifford operations are comparatively cheap under quantum error
correction; the $T$ gate requires magic-state distillation or
cultivation, an overhead that dominates the resource cost of most
fault-tolerant
algorithms~\cite{nielsen-chuang,bravyi-kitaev2005,bravyi-haah,gidney-cultivation},
and that propagates directly into end-to-end resource
estimates~\cite{gidney-ekera}. Minimizing $T$-count is accordingly
a central objective of quantum circuit compilation, and the literature that pursues it splits into two
halves that rarely cite one another.

The first half is concerned with producing a Clifford+$T$ circuit at
all. The Solovay--Kitaev theorem, due to Kitaev building on
unpublished work of Solovay~\cite{kitaev1997} and made algorithmic
by Dawson and Nielsen~\cite{dawson-nielsen}, guarantees that any
target unitary can be approximated to precision $\epsilon$ by a gate
count $O(\log^c(1/\epsilon))$, $c\approx3.97$, via a recursive
group-commutator construction; the construction is prized for its
generality and its guaranteed convergence, not for the compactness
of what it outputs, and its circuits are well known to be long and
structurally redundant. A second, number-theoretic route instead
searches a ring of algebraic integers for a near-shortest exact
representative: Kliuchnikov, Maslov, and Mosca gave an exact
synthesis algorithm~\cite{kliuchnikov2013}, and Ross and Selinger's
\texttt{gridsynth}~\cite{ross-selinger2016} extended it to
approximate synthesis within $O(\log\log(1/\epsilon))$ of the
information-theoretic optimum $3\log_2(1/\epsilon)$ at the
single-qubit level~\cite{selinger2015,kmm2013prl,kmm2015}; the resulting circuits are
written in Matsumoto and Amano's normal form~\cite{matsumoto-amano},
whose $T$-optimality
is characterized through the exact $SO(3)$ Bloch representation by
Giles and Selinger~\cite{giles-selinger}, a characterization we
verify independently and exhaustively in Sec.~\ref{sec:prelim}.

At $n$ qubits the resource landscape is less settled, and the
contrast matters for what follows, since the pipelines used in
practice are multi-qubit ones. General lower
bounds on non-Clifford resources are due to Beverland, Campbell,
Howard, and Kliuchnikov~\cite{beverland2020} and to Low,
Kliuchnikov, and Schaeffer~\cite{low-kliuchnikov-schaeffer}, recent
upper-bound improvements to Tan~\cite{tan2025}, and the gap between
the two remains open. The exact $T$-count of a given multi-qubit
unitary can be determined by the exhaustive search of Gheorghiu,
Mosca, and Mukhopadhyay~\cite{gmm2022}, but at a cost exponential in
that count, which places the circuits studied here well out of
range. Closing the general gap is not the subject of this paper,
though we return to it in Sec.~\ref{sec:discussion}.

The second half of the literature takes a synthesized circuit as
given and asks how to make it smaller. The ZX-calculus, introduced
by Coecke and Duncan as a diagrammatic instantiation of categorical
quantum mechanics~\cite{coecke-duncan2011} and shown complete for
stabilizer quantum mechanics by Backens~\cite{backens2014} and, for
real stabilizers via pivoting, by Duncan and
Perdrix~\cite{duncan-perdrix2013}, represents a circuit as an
interacting network of ``spiders'' subject to a sound and complete
set of graphical rewrite rules. Two such rewriting strategies,
graph-like simplification (\texttt{full\_reduce}) and the finer
phase-teleportation strategy (\texttt{teleport\_reduce}), are due to
Duncan, Kissinger, Perdrix, and van~de~Wetering~\cite{duncan-kissinger2020}
and to Kissinger and van~de~Wetering~\cite{kissinger-tcount}, and
are implemented in the open-source library
\texttt{PyZX}~\cite{pyzx}, the de facto reference
implementation we use throughout. Non-ZX approaches to the same
objective include $T$-depth optimization via matroid
partitioning~\cite{amy-maslov-mosca}, compilation through
Reed--Muller-style reductions~\cite{heyfron-campbell}, treatment of
$T$ gates as $\pi/4$ Pauli rotations~\cite{zhang-chen}, and more
recent faster algorithms for $T$-count
reduction~\cite{vandaele2024}.

Because these two halves are studied separately, the question of how
much ZX post-processing recovers from synthesis's structural waste
-- and whether the answer depends on which synthesis algorithm
produced the circuit -- has, to our knowledge, no quantitative or
theoretical treatment spanning both. The empirical picture,
meanwhile, is contradictory. On the number-theoretic side, Hao, Xu,
and Tannu, in the course of proposing a tensor-network synthesis
method, observe that applying established ZX/\texttt{PyZX}
optimization to \texttt{gridsynth}-synthesized circuits yields
little, the gains being small and confined to a minority of cases,
and leave the anomaly unexplained~\cite{trasyn2025}.
On the Solovay--Kitaev side the behaviour is the reverse and, we
will show, equally in need of explanation: the same class of
optimizer removes a substantial but strikingly stable fraction of
the circuit, a fraction that does not grow as the accuracy target
is tightened. What neither observation identifies is what, if
anything, connects them: why the same optimizer, applied to
circuits computing the same class of objects to the same class of
accuracies, should behave in essentially opposite ways depending
only on which synthesis algorithm produced its input.

We show that the two behaviours are not in tension but are the same
fact seen from opposite ends of the synthesis literature: what
sound, semantics-preserving ZX rewriting can remove from a
synthesized circuit is governed by the quantization geometry of the
synthesis algorithm that produced it, not by the target unitary or
by any property of the circuit considered in isolation. Underlying
this is a single bound, which we state first because both
behaviours turn out to be instances of it. For any optimizer that
preserves the
implemented element exactly, a class that includes all sound ZX
rewriting followed by extraction, the $T$-count it can reach is
bounded below by the exact denominator exponent of the synthesized
ring element (Theorem~\ref{thm:B}); we verify this floor
exhaustively against Matsumoto--Amano~\cite{matsumoto-amano} ground
truth on all $18{,}384$ Clifford+$T$ elements with $T$-count up to
eight. This \emph{exactness barrier} separates exact rewriting from
approximation-aware resynthesis by a gap that grows geometrically
with accuracy and exceeds $100\times$ at the depths we test, and the
two contradictory behaviours are the barrier operating at different
distances from the floor.

Solovay--Kitaev synthesis leaves a large distance, and the question
is how much of it local rewriting recovers. Answering it requires
a change of category. The output of a synthesis algorithm looks
like a circuit and is treated as one, fed to an optimizer whose
yield is then measured; but a single-qubit Clifford+$T$ circuit is
equally a word over an alphabet whose letters satisfy relations,
phases adding modulo eight and $H$ squaring to the identity, and
asking what a rewriting system can remove from such a word is a
word problem in a group. Read that way the question has an
algebraic answer rather than an experimental one, and the answer is
exact: we show that Tier-1 simplification -- spider fusion and
identity removal, the rules available before any global
graph-theoretic strategy is invoked -- computes exactly the normal
form of the free product $\Zb_2 * \Zb_8$
(Lemma~\ref{lem:freeprod}),
which gives an exact, per-instance accounting of the compression a
Solovay--Kitaev circuit admits (Theorem~\ref{thm:A}) and, once the
unconditional dynamics of the recursion's residual scale are
combined with a calibrated hypothesis on the ergodicity of its
direction coordinate, a depth-independent stationary limit law
(Theorem~\ref{thm:Aprime}). That law makes a parameter-free
prediction -- that the stationary compressibility scale is set by
the square root of the synthesis net's covering radius -- which we
confirm on two independently constructed nets without adjusting any
free parameter, and which explains the depth-independence of the
compression fraction as a corollary rather than positing it.

Number-theoretic synthesis leaves no distance at all, and there the
barrier bites immediately. We show that on single-qubit words
automated ZX simplification does not merely respect the floor but
exactly saturates it, and give the mechanism: a closed-form
combinatorial formula for single-qubit $T$-count via phase linkage
through the $Z$-axis normalizer (the Linkage Theorem,
Theorem~\ref{thm:linkage}), built on a valuation lemma we prove by
finite, computer-assisted verification, from which the exact
optimality of automated ZX follows (Theorem~\ref{thm:L1}). At
$n\ge2$ qubits the same idea extends to an exact, sound collapse
criterion for $\pi/8$-rotation steps in the Pauli-channel
representation (Lemma~\ref{lem:mqval}), which certifies
incompressibility rather than merely observing it; applied to the
real quantum-Shannon-decomposition-plus-\texttt{gridsynth} pipeline
at $n=2$ and $n=3$, these certificates show that $99.4$--$99.9\%$
of synthesized $T$-count is provably irremovable -- a fraction that
grows as the accuracy target tightens -- which is why the near-null
ZX-optimization result of Ref.~\cite{trasyn2025} was correct and
exactly how large it should have been.

Kelly and Kissinger's phase squashing, which rewrites a ZX diagram
approximately with a rigorous bound on the error introduced on
graphs with generalized flow~\cite{kelly-kissinger2025}, sits on
the far side of the barrier we prove: where Theorem~\ref{thm:B} shows exact rewriting
cannot cross it, phase squashing is a tool for crossing it by
relaxing exactness, and we take up the relationship between the two
results in Sec.~\ref{sec:discussion}.

The remainder of the paper develops this account in order.
Section~\ref{sec:prelim} fixes notation, the Solovay--Kitaev
recursion, and the exact ring arithmetic used throughout.
Sections~\ref{sec:tier1}--\ref{sec:linkage} give the single-qubit
theory: free-product reduction and its exact accounting
(Sec.~\ref{sec:tier1}), the stationary limit law
(Sec.~\ref{sec:limit}), the exactness barrier
(Sec.~\ref{sec:barrier}), and the Linkage Theorem
(Sec.~\ref{sec:linkage}). Section~\ref{sec:multiqubit} extends the
valuation machinery to $n$ qubits and states the multi-qubit
rigidity result. Section~\ref{sec:numerics} describes the numerical
methodology and reports results, Section~\ref{sec:discussion}
discusses limitations and open problems, and
Section~\ref{sec:conclusion} concludes.

\section{Preliminaries}
\label{sec:prelim}

This section fixes notation and the three pieces of machinery used
throughout: the Solovay--Kitaev recursion, the ZX rewrite rules,
and the exact ring arithmetic behind our certificates.

Single-qubit words are written over the alphabet
$\Gamma = \{H\} \cup P$ with
$P = \{T, T^\dagger, S, S^\dagger, Z\}$. We write $\Gamma^*$ for the
set of all finite words over $\Gamma$, which under concatenation is
the free monoid on $\Gamma$: it carries no relations beyond
associativity and the empty word acting as an identity, so that a
map out of $\Gamma^*$ is determined by its values on single letters
once it is required to send concatenation to multiplication. Each letter of $P$ is a
$Z$-rotation, and we record its phase in units of $\pi/4$ by the
map $\phi: P \to \Zb_8$, so that $\phi(T)=1$,
$\phi(S)=2$, and $\phi(Z)=4$. Solovay--Kitaev synthesis
starts from a base net $\mathcal{N} \subset \Gamma^{\le l_0}$, the
set of all words of length at most $l_0$, chosen to approximate
$SU(2)$ to some accuracy $\epsilon_0$. Given a target $U$, the
Dawson--Nielsen recursion improves the approximation depth by
depth. At depth $k$ it returns
\begin{equation}
W_k = a_k\, b_k\, \bar a_k\, \bar b_k\, W_{k-1},
\end{equation}
where $a_k$ and $b_k$ are depth-$(k{-}1)$ words chosen so that
their group commutator balances the residual
$\Delta_k = U U_{k-1}^{-1}$: the two factors are rotations by a
common angle about orthogonal axes, and it is this balancing that
Sec.~\ref{sec:limit} turns into a contraction estimate.

The recursion is more usefully seen as a tree than as a formula.
A node at depth $k$ carrying a target spawns five children of
depth $k{-}1$: one carrying the same target one level coarser,
which we call a $U$ step, and four carrying the commutator
factors, namely $a_k$ and its inverse and $b_k$ and its inverse,
which we call $A$ and $B$ steps. Inverses are not independent
targets, so the four commutator children carry only two distinct
ones, each appearing twice, and the multiplicities this creates
are what Sec.~\ref{sec:limit} accounts for. Unfolding to the
leaves shows that $W_k$ consists of $N_k = 5^k$ \emph{blocks},
each a net word or the inverse of one, concatenated in the order
the tree prescribes. This block structure is what the free-product
analysis of Sec.~\ref{sec:tier1} is really about --- everything
interesting happens at the $5^k - 1$ junctions where consecutive
blocks meet --- and the tree itself is what Sec.~\ref{sec:limit}
turns into a probability space. Our implementation follows the
construction exactly, with a bisection solver for the balanced
commutator angle (Appendix~\ref{app:num}).

The quantity we track throughout is how much of such a word
survives rewriting. For a word $W$ we write $|W|$ for its letter
count, $\red(W)$ for its Tier-1 normal form, defined in
Sec.~\ref{sec:tier1}, and
\begin{equation}
\gamma_1(W) \;=\; 1 - |\red(W)|/|W|
\end{equation}
for the fraction of letters that Tier-1 rewriting removes, its
\emph{compression ratio}. Writing $L_k = |W_k|$, the object of
Secs.~\ref{sec:tier1} and~\ref{sec:limit} is the behaviour of
$\gamma_1(W_k)$ as the recursion deepens.

\begin{figure*}[t]
\centering
\begin{tikzpicture}[scale=0.90,every node/.style={transform shape}]

\node[anchor=west,font=\footnotesize\itshape] at (0,1.15) {(a) Clifford$+T$ generators};
\node[zxZ] (t) at (0.75,0.35) {$\tfrac{\pi}{4}$};
\draw[zxwire] (0.2,0.35)--(t); \draw[zxwire] (t)--(1.3,0.35);
\node[font=\footnotesize] at (0.75,-0.3) {$T=Z_1$};
\node[zxZ] (s) at (2.55,0.35) {$\tfrac{\pi}{2}$};
\draw[zxwire] (2.0,0.35)--(s); \draw[zxwire] (s)--(3.1,0.35);
\node[font=\footnotesize] at (2.55,-0.3) {$S=Z_2$};
\node[zxZ] (z) at (4.35,0.35) {$\pi$};
\draw[zxwire] (3.8,0.35)--(z); \draw[zxwire] (z)--(4.9,0.35);
\node[font=\footnotesize] at (4.35,-0.3) {$Z=Z_4$};
\node[zxH] (hh) at (6.05,0.35) {};
\draw[zxwire] (5.5,0.35)--(hh); \draw[zxwire] (hh)--(6.6,0.35);
\node[font=\footnotesize] at (6.05,-0.3) {$H$};

\node[anchor=west,font=\footnotesize\itshape] at (9.0,1.15)
  {(b) \textbf{Tier 1}: fusion, identity removal};
\node[zxZ] (f1) at (9.7,0.35) {$\alpha$};
\node[zxZ] (f2) at (10.8,0.35) {$\beta$};
\draw[zxwire] (9.15,0.35)--(f1); \draw[zxwire] (f1)--(f2);
\draw[zxwire] (f2)--(11.35,0.35);
\node[font=\small] at (11.8,0.35) {$=$};
\node[zxZ] (f3) at (12.7,0.35) {$\alpha{+}\beta$};
\draw[zxwire] (12.15,0.35)--(f3); \draw[zxwire] (f3)--(13.5,0.35);
\node[zxZ] (i1) at (14.7,0.35) {$0$};
\draw[zxwire] (14.15,0.35)--(i1); \draw[zxwire] (i1)--(15.25,0.35);
\node[font=\small] at (15.7,0.35) {$=$};
\draw[zxwire] (16.1,0.35)--(17.0,0.35);

\node[anchor=west,font=\footnotesize\itshape] at (0,-1.5)
  {(c) \textbf{Tier 2}: local complementation at $u$; pivoting $=\star u\,\star v\,\star u$};
\node[zxZ,minimum size=4.8mm,label={[font=\scriptsize]above:$u$}] (u) at (1.1,-2.5) {$\pm\tfrac{\pi}{2}$};
\node[zxdot] (n1) at (0.25,-3.25) {};
\node[zxdot] (n2) at (1.1,-3.55) {};
\node[zxdot] (n3) at (1.95,-3.25) {};
\draw[hedge] (u)--(n1); \draw[hedge] (u)--(n2); \draw[hedge] (u)--(n3);
\node[font=\small] at (2.85,-3.0) {$\longmapsto$};
\node[zxdot] (m1) at (3.85,-3.25) {};
\node[zxdot] (m2) at (4.7,-3.55) {};
\node[zxdot] (m3) at (5.55,-3.25) {};
\draw[hedge] (m1)--(m2); \draw[hedge] (m2)--(m3);
\draw[hedge] (m1) to[bend left=30] (m3);
\node[font=\footnotesize,anchor=west,align=left] at (6.1,-3.1)
  {$u$ is removed and its\\ neighbourhood complemented};

\node[anchor=west,font=\footnotesize\itshape] at (9.0,-1.5)
  {(d) \textbf{Tier 3}: phase gadget (teleportation)};
\draw[zxwire] (9.4,-2.6)--(11.9,-2.6);
\node[zxdot] (g1) at (10.15,-2.6) {};
\node[zxdot] (g2) at (11.15,-2.6) {};
\node[zxZ,minimum size=4.8mm] (gp) at (10.65,-3.55) {$\alpha$};
\draw[hedge] (g1)--(gp); \draw[hedge] (g2)--(gp);
\node[font=\footnotesize,anchor=west,align=left] at (12.3,-3.0)
  {a non-Clifford phase is carried off the wire, so that\\
   phases distant in the circuit may share one slot};

\end{tikzpicture}
\caption{The rewriting rules used in this paper, stratified into the
three tiers whose separate contributions our results isolate. Green
circles are $Z$ spiders carrying a phase, yellow boxes are Hadamard
gates, solid lines are wires, and dashed blue lines are Hadamard
edges. (a) A Clifford+$T$ word becomes a chain of phase spiders and
Hadamard boxes, with $Z_m$ the spider of phase $m\pi/4$. (b) Tier~1
is the purely local layer: two adjacent same-colour spiders fuse by
adding phases, and a zero-phase spider of degree two is an identity
and is deleted. Section~\ref{sec:tier1} shows that these two rules
compute exactly the normal form of $\Zb_2 * \Zb_8$. (c) Tier~2 adds
the graph operations, local complementation at a $\pm\pi/2$ spider
and pivoting, which is three local complementations along an edge.
(d) Tier~3 adds phase teleportation, which moves a non-Clifford
phase into an off-wire gadget so that phases far apart in the
circuit can be merged into a common slot; the mechanism by which
this attains the exact $T$-count floor at one qubit is the subject
of Sec.~\ref{sec:linkage}. Full definitions, with the side
conditions each rule carries, are collected in
Appendix~\ref{app:zx}.}
\label{fig:zxrules}
\end{figure*}

On the rewriting side we use the standard ZX-calculus rules ---
spider fusion, identity removal, local complementation, and
pivoting~\cite{coecke-duncan2011,duncan-perdrix2013,duncan-kissinger2020}
--- together with the two strategies implemented in
\texttt{PyZX}: \texttt{full\_reduce}, which performs global
graph-like simplification, and \texttt{teleport\_reduce}, which
adds phase teleportation through non-local phase
gadgets~\cite{pyzx,kissinger-tcount}. The calculus is complete for
Clifford+$T$ quantum mechanics~\cite{jeandel-perdrix-vilmart}, and
we follow the conventions of the standard
expositions~\cite{coecke-kissinger-book,vandewetering-review}. Our results attribute
different phenomena to different strata of this rule set, so we
give the strata names. \emph{Tier 1} consists of spider fusion and
identity removal alone, with no global graph strategy.
\emph{Tier 2} adds local complementation and pivoting.
\emph{Tier 3} adds phase teleportation. Figure~\ref{fig:zxrules}
displays the rules of each tier; what separates the tiers in
practice is shown in Fig.~\ref{fig:tiers} below, once the exact
$T$-count is available to measure them against. Tier 1 is the
purely local layer, and Sec.~\ref{sec:tier1} characterizes its
action exactly.
What the automated strategies add on top turns out, for
single-qubit words, to be exactly the normalizer linkage of
Sec.~\ref{sec:linkage} and nothing else (Theorem~\ref{thm:L1}).

Everything we prove about incompressibility rests on exact
arithmetic.\label{sec:exact-arith} Unitary entries of single-qubit
Clifford+$T$ words live in the ring $\Zb[\omega]$ with
$\omega = e^{i\pi/4}$, the setting in which exact synthesis is
decided~\cite{kliuchnikov2013}. The real $SO(3)$ Bloch
representation and the $n$-qubit Pauli-channel representation live
in the smaller ring $\Zb[\sqrt2]$, and it is there that our
invariant lives.

Because $\{1,\sqrt2\}$ is a basis of $\Zb[\sqrt2]$ as a free
$\Zb$-module, every matrix $M$ over $\Zb[\sqrt2]$ can be written
\begin{equation}
\label{eq:ringform}
M \;=\; \frac{P + Q\sqrt2}{\sqrt2^{\,e}},
\end{equation}
with $P$ and $Q$ integer matrices, and for a given $M$ the
representation with $e$ least is unique. We call that minimal $e$
the \emph{least denominator exponent} $\mathrm{lde}(M)$. Deciding
whether a given representation is minimal is elementary: dividing
numerator and denominator of~\eqref{eq:ringform} by $\sqrt2$ sends
$(P, Q) \mapsto (Q, P/2)$, which stays integral exactly when
\begin{equation}
\label{eq:redcrit}
P \equiv 0 \pmod 2 \quad\text{entrywise},
\end{equation}
so a representation can be reduced precisely
when~\eqref{eq:redcrit} holds, and repeating the step until it
fails yields $\mathrm{lde}(M)$. Criterion~\eqref{eq:redcrit} is
used repeatedly below: it is what makes the barrier of
Sec.~\ref{sec:barrier} computable, and refining it to a statement
about which \emph{steps} can lower the exponent is the content of
Lemma~\ref{lem:valuation} at one qubit and
Lemma~\ref{lem:mqval} at $n$.

We store every matrix in the form~\eqref{eq:ringform}, reducing
after each operation, including, for $n$-qubit circuits, the exact
Pauli-conjugation channel of each gate. Products, sums, and the
reduction step are then integer operations, so no rounding ever
enters the computation (implementation and validation in
Appendix~\ref{app:num}).

\begin{figure*}[t]
\centering
\begin{tikzpicture}[scale=0.93,every node/.style={transform shape}]

\node[anchor=west,font=\footnotesize\itshape] at (-0.2,0.78)
  {(a) the word $T\,H\,S\,S\,H\,T$, whose exact $T$-count is zero};
\draw[zxwire] (0.15,0)--(5.75,0);
\node[zxZ] at (0.7,0) {$\tfrac{\pi}{4}$};
\node[zxH] at (1.6,0) {};
\node[zxZ] at (2.5,0) {$\tfrac{\pi}{2}$};
\node[zxZ] at (3.4,0) {$\tfrac{\pi}{2}$};
\node[zxH] at (4.3,0) {};
\node[zxZ] at (5.2,0) {$\tfrac{\pi}{4}$};
\node[font=\footnotesize,anchor=west] at (6.6,0)
  {six syllables, two $T$ gates};

\node[anchor=west,font=\footnotesize\itshape] at (-0.2,-1.05)
  {(b) Tier 1 fuses the two $S$ spiders and then halts: the $T$ gates are untouched};
\draw[zxwire] (0.15,-1.83)--(4.85,-1.83);
\node[zxZ] at (0.7,-1.83) {$\tfrac{\pi}{4}$};
\node[zxH] at (1.6,-1.83) {};
\node[zxZ] at (2.5,-1.83) {$\pi$};
\node[zxH] at (3.4,-1.83) {};
\node[zxZ] at (4.3,-1.83) {$\tfrac{\pi}{4}$};
\node[font=\footnotesize,anchor=west] at (6.6,-1.83)
  {five syllables, two $T$ gates};

\node[anchor=west,font=\footnotesize\itshape] at (-0.2,-2.88)
  {(c) Tiers 2 and 3 recognise that $H\pi H$ preserves the $Z$ axis and annihilate the pair};
\draw[zxwire] (0.15,-3.66)--(1.25,-3.66);
\node[zxX] at (0.7,-3.66) {$\pi$};
\node[font=\footnotesize,anchor=west] at (1.75,-3.66)
  {$=\ \tmin$, the exact floor};
\node[font=\footnotesize,anchor=west] at (6.6,-3.66)
  {one syllable, no $T$ gates};

\end{tikzpicture}
\caption{What each tier can and cannot do, shown on a single word.
(a) The word $T\,H\,S\,S\,H\,T$ implements a Clifford, so its
minimal $T$-count is zero, but nothing about the syllable string
reveals this. (b) Tier 1 sees only adjacent syllables of the same
colour. It fuses $S$ with $S$ into a $\pi$ spider and then has no
further move, leaving both $T$ gates in place: local rewriting
alone certifies nothing about the $T$-count. (c) The interior
Clifford $HSSH$ equals $X$, which preserves the $Z$ axis up to
sign, so the two $T$ phases can be brought together and cancel.
Recognising this is exactly what the global strategies of Tiers 2
and 3 supply, and Sec.~\ref{sec:linkage} shows that at one qubit
they supply neither more nor less: what they find is precisely the
transport of phases through the $Z$-axis normalizer.}
\label{fig:tiers}
\end{figure*}

Carrying this arithmetic buys us a bridge between the algebra and
the resource count. For a single-qubit Clifford+$T$ element $V$,
write
$\hat V_{ij} = \tfrac12\,\mathrm{tr}(\sigma_i V \sigma_j
V^\dagger)$ for its exact $SO(3)$ Bloch representation and
$\mathrm{lde}(\hat V)$ for its least denominator exponent in the
sense of~\eqref{eq:ringform}.

\begin{lemma}\label{lem:bridge}
The minimal $T$-count of a single-qubit Clifford+$T$ element $V$
is
\begin{equation}
\tmin(V) = \mathrm{lde}(\hat V).
\end{equation}
\end{lemma}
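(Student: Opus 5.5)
The plan is to prove the two inequalities $\mathrm{lde}(\hat V)\le\tmin(V)$ and $\mathrm{lde}(\hat V)\ge\tmin(V)$ separately, working throughout with the representation~\eqref{eq:ringform} and the reduction criterion~\eqref{eq:redcrit}. The upper bound should be routine. For a single-qubit Clifford $C$, the matrix $\hat C$ is a signed permutation matrix, so $\mathrm{lde}(\hat C)=0$. The $T$ gate acts on the Bloch sphere as a rotation by $\pi/4$ about the $z$-axis, so $\hat T$ has entries in $\{0,\pm1,\pm1/\sqrt2\}$ and $\mathrm{lde}(\hat T)=1$. Since $V\mapsto\hat V$ is multiplicative, and since $\mathrm{lde}(MN)\le\mathrm{lde}(M)+\mathrm{lde}(N)$ (multiply the two numerators and add the exponents), any word realizing $V$ with $t$ occurrences of $T^{\pm1}$ gives $\mathrm{lde}(\hat V)\le t$. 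Taking $t=\tmin(V)$ gives the first inequality.

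For the lower bound I will use the Matsumoto--Amano normal form~\cite{matsumoto-amano} as a black box. Every element $V$ can be written as $V=N\,C$ with $C$ Clifford, where $N$ is a product of $m$ syllables each of the form $T$, $HT$ or $SHT$, and $m$ equals the minimal $T$-count $\tmin(V)$. It therefore suffices to show that appending one syllable raises the least denominator exponent by exactly one, after which induction on $m$ gives $\mathrm{lde}(\hat V)=m$.

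To organize that step I will track a mod-$2$ residue invariant, following Giles--Selinger~\cite{giles-selinger}. Write $\hat N=(P+Q\sqrt2)/\sqrt2^{\,k}$ in minimal form with $k\ge1$. Orthogonality, $\hat N^{\!\top}\hat N=I$, constrains the parity pattern of $P$. The claim to prove is that exactly two rows of $P$ are nonzero mod $2$, and that these two rows are equal mod $2$. Which pair of rows it is (the $\{x,y\}$, $\{y,z\}$ or $\{x,z\}$ pair) is determined by the first letter of the normal form. In the base case $k=1$ this is read off directly from $\hat T$, $\widehat{HT}$ and $\widehat{SHT}$.

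The inductive step then left-multiplies by $\hat T$, $\widehat{HT}$ or $\widehat{SHT}$ and checks two things. First, the normal-form grammar guarantees that the new $z$-rotation mixes one "odd" row with one "even" row. Second, the combinations $(r_1\pm r_2)/\sqrt2$ then have odd leading part, so the exponent rises to $k+1$, the new numerator violates~\eqref{eq:redcrit}, and the two-row residue pattern reappears in the correct positions.

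This case analysis is the main obstacle. Everything turns on the compatibility between the syllable grammar (when $T$, $HT$ or $SHT$ may follow which) and the parity pattern. The dangerous case is a $T$ rotating the two odd rows into each other, which would instead let the exponent drop; this is exactly the case the normal form forbids. I would first reduce the analysis to a finite check on residues mod $2$, namely $3\times3$ patterns times three syllable types, and verify each case by hand. The paper's exhaustive verification on the $18{,}384$ elements of $T$-count at most eight serves as a consistency check on this finite table, not as the proof.
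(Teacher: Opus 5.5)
Your proof is correct, and it takes a genuinely different route from the paper. The paper does not prove Lemma~\ref{lem:bridge} at all. It cites Matsumoto--Amano and Giles--Selinger, and backs the citation by enumerating all $18{,}384$ normal forms of $T$-count at most eight and checking the identity on each.

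You instead reconstruct the structural Giles--Selinger argument. Submultiplicativity of $\mathrm{lde}$ gives $\mathrm{lde}(\hat V)\le\tmin(V)$. An induction over normal-form syllables then tracks the mod-$2$ pattern of two equal nonzero rows, with the zero row ($z$, $x$, $y$ for a leading $T$, $HT$, $SHT$ respectively) fixed by the leftmost syllable. The grammar's refusal to prepend anything to a leading $T$ is exactly what stops the two odd rows from being mixed.

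What your route buys is a proof valid at every $T$-count. The paper's enumeration, combined with the trivial upper bound, settles only $t\le 8$ (and also serves to validate its exact-arithmetic code); beyond that it leans on the literature. Your residue bookkeeping is also a hand-checked instance of the mod-$2$ automaton that the paper later explores by computer in Lemma~\ref{lem:valuation}.

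Three tightenings are worth making:
\begin{itemize}
\item You need only the existence of the normal form, not its $T$-optimality. The bound $\tmin(V)\le m$ is trivial, and $m=\mathrm{lde}(\hat V)\le\tmin(V)$ follows from your first inequality, so optimality comes out of the argument rather than going in as an assumption.
\item You should say why the trailing Clifford $C$ is harmless: right multiplication by a signed permutation, which is invertible mod $2$, preserves reducedness.
\item The appeal to orthogonality is unnecessary, since the induction carries the parity pattern on its own.
\end{itemize}
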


This is the Matsumoto--Amano $T$-optimality
characterization~\cite{matsumoto-amano,giles-selinger}, whose
algorithmic side is the exact-synthesis procedure of Kliuchnikov,
Maslov, and Mosca~\cite{kliuchnikov2013}, read
through the Bloch representation as in Giles and
Selinger~\cite{giles-selinger}. Every certificate in this paper
ultimately rests on it, so we do not take it on faith. Enumerating
all Matsumoto--Amano normal forms with $T$-count $t \le 8$ yields
$18{,}384$ distinct projective Clifford+$T$ elements, and the
relation holds on every one of them (Appendix~\ref{app:num}).
Since $\hat V$ is computable with linearly many exact ring
operations in the word length, the lemma turns the minimal
$T$-count --- in general the object
one is trying to find --- into a quantity one can evaluate
exactly, instance by instance. We use it throughout as the exact
floor against which optimizer output is judged.

\section{Local Rewriting as Free-Product Reduction}
\label{sec:tier1}

With the exact floor in hand, we turn to the weakest layer of
rewriting and ask what it actually computes on a synthesized word.
Tier 1 sees a circuit as a string of letters and acts only on
adjacent pairs. It can add the phases of neighbouring rotations
and it can cancel a doubled Hadamard, and it can do nothing else.
In particular it is blind to which unitaries the letters
represent: any identity of the concrete group
$\langle H, T\rangle \le U(2)$ that does not already follow from
``phases add modulo $8$'' and ``$H$ squares to the identity'' is
out of its reach. There is a canonical algebraic object with
exactly those relations and no others, the free product
$\Zb_2 * \Zb_8$, and the natural guess is that Tier 1 computes
precisely its word problem. The first lemma of this section
confirms the guess exactly, and the rest of the section converts
it into a quantitative account of how much an SK word can shrink.

\begin{lemma}\label{lem:freeprod}
Let $F = \Zb_2 * \Zb_8$ and let $\pi: \Gamma^* \to F$ be the
morphism of monoids, that is, the map sending concatenation of
words to multiplication in $F$, determined by $\pi(H) = h$ (the
generator of the $\Zb_2$ factor) and $\pi(p) = z^{\phi(p)}$ for
$p \in P$ ($z$ the generator of the $\Zb_8$ factor). The Tier-1 rewriting system is
terminating and confluent; its unique normal form $\red(w)$ is
computable in $O(|w|)$ time and equals the free-product normal
form of $\pi(w)$.
\end{lemma}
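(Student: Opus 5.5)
I will first make the Tier-1 system precise as a string rewriting system on $\Gamma^*$, with words read as sequences of maximal syllables. The rules are: (i) \emph{fusion}, which replaces two adjacent letters $p,q\in P$ by the single letter of $P$ with phase $\phi(p)+\phi(q) \bmod 8$, or deletes both when that sum is $0$; (ii) \emph{identity removal}, which deletes $HH$. Because $\phi$ only takes the values $\{1,7,2,6,4\}$, a phase letter with phase $3$ or $5$ cannot be written as a single letter. I will therefore work on the syllable level: a maximal run of phase letters is identified with its total phase $m\in\Zb_8$. A nonzero phase syllable is the spider $Z_m$ of Fig.~\ref{fig:zxrules}(a), and a zero-phase syllable is removed by identity removal, exactly as spider fusion acts in the ZX picture. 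With this convention every rule strictly decreases the number of syllables plus the number of $H$ letters, so the system is \emph{terminating}.

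For \emph{confluence} I will invoke Newman's lemma, which reduces the claim to local confluence, and then check the critical pairs. These are the overlaps $HHH$ (both reductions give $H$), three consecutive phase syllables (fusion is associative in $\Zb_8$), and the mixed overlap in which deleting $HH$ makes two phase syllables adjacent. In the mixed case fusion is then available, and either order of application reaches the same word.

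Next I will identify the normal forms. An irreducible word contains no $HH$, no two adjacent phase syllables and no zero-phase syllable. Hence it alternates between $H$ and nonzero elements $z^{m}$, $m\in\{1,\dots,7\}$, which is precisely a reduced word in the free product $\Zb_2*\Zb_8$. Each rule preserves $\pi$, since $h^2=1$ and $z^a z^b=z^{a+b}$, so $\pi(\red(w))=\pi(w)$. The normal form theorem for free products says that every element of $F$ has a unique reduced representative. Combining this with confluence, $\red(w)$ equals the free-product normal form of $\pi(w)$, and irreducible words with the same image under $\pi$ coincide.

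For the $O(|w|)$ bound I will give a single left-to-right stack algorithm. On reading a phase letter, it adds that letter's phase to the top of the stack if the top is a phase, popping the entry when the sum is $0$, and otherwise pushes the letter. On reading $H$, it pops if the top is $H$ and otherwise pushes. Each letter causes $O(1)$ amortized work, and the invariant that the stack is always reduced follows from the cancellation cascade: after a pop, the new top has the type opposite to the incoming letter, so no further merge is triggered. The step I expect to need the most care is not the algebra but the formalization of fusion, specifically the treatment of phases $3$ and $5$ that are not single letters. I resolve it by passing to syllables, i.e.\ to ZX spiders, and I will state that convention explicitly so that termination and the normal-form bijection are well posed.
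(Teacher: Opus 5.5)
Your proposal is correct and follows essentially the same route as the paper. Both pass to the syllable alphabet of spiders $Z_m$, $m\in\Zb_8\setminus\{0\}$, obtain termination from strictly decreasing length, and get local confluence from the $HHH$ and three-phase overlaps via Newman's lemma. Both then identify irreducible words with alternating normal forms of $\Zb_2*\Zb_8$, using that both rules preserve $\pi$, and compute $\red(w)$ with a left-to-right stack.

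The only blemishes are presentational. First, fix one convention. The paper fuses individual syllables by a rule; you also pre-identify maximal runs of phase letters with their total phase, and under that reading ``three consecutive phase syllables'' could never occur. Second, drop the ``mixed overlap'': it is not a critical pair, since an $HH$-redex and a fusion redex never share a letter.
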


\begin{proof}
It is convenient to work over the syllable alphabet
$\Sigma = \{H\} \cup \{Z_m : m \in \Zb_8 \setminus \{0\}\}$, into
which $\Gamma$ embeds by $p \mapsto Z_{\phi(p)}$; a general
Tier-1 configuration (a ZX spider chain after any number of
fusions) is a word over $\Sigma$, since a $Z$-spider may carry any
phase in $\Zb_8$. On $\Sigma^*$ the Tier-1 rules are exactly
\begin{align}
\text{(F)}\quad & Z_a Z_b \;\to\;
\begin{cases}
Z_{a+b}, & a+b \not\equiv 0 \pmod 8,\\[2pt]
\varepsilon, & a+b \equiv 0 \pmod 8,
\end{cases}\\
\text{(H)}\quad & HH \;\to\; \varepsilon,
\end{align}
where (F) is spider fusion followed, in the second case, by
identity removal. We prove the four claims in turn.

\emph{Termination.} Rule (F) replaces two letters by one or by
none, and rule (H) removes two letters, so every rule application
strictly decreases word length. A word of length $|w|$ therefore
admits at most $|w|$ rule applications along any reduction
sequence, and every sequence terminates.

\emph{Local confluence.} Two rule applications at disjoint
positions of a word commute: each acts only on its own two-letter
window, so applying them in either order yields the same word.
It therefore suffices to resolve the overlapping cases, of which
there are exactly two, since an (F)-redex consists of two
$Z$-letters and an (H)-redex of two $H$-letters, so redexes of
different types can never share a letter.

\emph{Case 1: $Z_a Z_b Z_c$}, with the left redex $Z_aZ_b$ and the
right redex $Z_bZ_c$. All congruences are modulo $8$. There are
four subcases.
(i)~If $a+b \not\equiv 0$ and $b+c \not\equiv 0$: the left
reduction gives $Z_{a+b}Z_c$, the right gives $Z_aZ_{b+c}$; one
further (F)-step on each side gives $Z_{a+b+c}$ if
$a+b+c \not\equiv 0$ and $\varepsilon$ if $a+b+c \equiv 0$, in
both cases the same word.
(ii)~If $a+b \equiv 0$ and $b+c \not\equiv 0$: the left reduction
gives $Z_c$ outright; the right gives $Z_aZ_{b+c}$, and since
$a + (b+c) \equiv c \not\equiv 0$ (as $c \ne 0$ on $\Sigma$),
one further (F)-step gives $Z_c$.
(iii)~The case $a+b \not\equiv 0$, $b+c \equiv 0$ is symmetric to
(ii) and joins at $Z_a$.
(iv)~If $a+b \equiv 0$ and $b+c \equiv 0$, then
$a \equiv c \equiv -b$, and both reductions delete a cancelling
pair outright, leaving $Z_c$ on the left and $Z_a$ on the right;
since $a \equiv c$ these are the same word.
In every subcase the two reducts have a common descendant, by
nothing more than associativity and commutativity of addition in
$\Zb_8$.

\emph{Case 2: $HHH$}, with redexes at positions $\{1,2\}$ and
$\{2,3\}$. Both reductions give the word $H$; the peak joins
immediately.

Since every critical peak is joinable, the system is locally
confluent, and by Newman's lemma~\cite{newman1942} a terminating,
locally confluent system is confluent. Hence every word has a unique normal form
$\red(w)$.

\emph{Linear-time computation.} Process $w$ left to right,
maintaining a stack of already-reduced output: to absorb the next
letter, compare it with the top of the stack; if they are of
different types (or the stack is empty), push; if both are
$Z$-letters, replace the top by the fused letter or pop it when
the phases cancel; if both are $H$, pop. After a pop or fusion the
new top is compared with the next incoming letter only, never
rescanned, because the stack contents are alternating at all times
by construction. Each input letter is pushed at most once and
popped at most once, so the total work is $O(|w|)$, and the stack
contents on termination are reduced, hence equal to $\red(w)$ by
confluence.

\emph{Identification with the free-product normal form.} A word
over $\Sigma$ is Tier-1--irreducible if and only if no two
adjacent letters have the same type, i.e., if and only if it is an
alternating sequence of syllables drawn from
$\{h\} = \Zb_2 \setminus \{e\}$ and
$\{z^m : m \ne 0\} = \Zb_8 \setminus \{e\}$. These alternating
words are precisely the normal forms of the free product
$\Zb_2 * \Zb_8$: by the normal form theorem for free
products~\cite{lyndon-schupp,magnus-karrass-solitar,serre-trees},
every element of $F$ is represented
by exactly one such word. It remains to check that reduction
preserves the image in $F$: rule (F) holds in $F$ because
$z^a z^b = z^{a+b}$ within the $\Zb_8$ factor (and $z^az^b = e$
when $a + b \equiv 0$), and rule (H) holds because $h^2 = e$ in
the $\Zb_2$ factor. Hence $\pi(\red(w)) = \pi(w)$, and since
$\red(w)$ is an alternating word, it \emph{is} the free-product
normal form of $\pi(w)$.
\end{proof}

Lemma~\ref{lem:freeprod} cuts both ways. Tier-1 simplification
realizes every relation of $F$ --- but only those: any relation of
the concrete group $\langle H, T\rangle \le U(2)$ beyond those of
$F$ is invisible to it, because by the normal form theorem
distinct normal forms represent distinct elements of $F$, so
Tier~1 identifies two words if and only if they have the same
image under $\pi$. A minimal example is the Euler-type identity
$(HZ)^4 \propto \mathbb{1}$: since $HZ$ is the Bloch rotation
$R_y(\pi/2)$, the word $(HZ)^4$ is projectively trivial, yet it is
already alternating --- eight syllables, $\pi$-image
$(hz^4)^4 \ne e$ in $F$ --- and Tier~1 leaves it entirely
untouched. Detecting such relations requires the
Clifford-complete rules of Tiers 2--3, whose additional power we
characterize exactly in Sec.~\ref{sec:linkage}.

Lemma~\ref{lem:freeprod} turns Tier-1 compression of an SK word
into a cancellation problem in the free product. Write the
depth-$k$ word in its block decomposition
$W_k = \beta_1 \beta_2 \cdots \beta_{N_k}$ with $N_k = 5^k$, and
let $v_i = \red(\beta_i)$ be the $i$-th block after internal
reduction. By confluence, $\red(W_k)$ may be computed by first
reducing every block internally and then absorbing
$v_1, \ldots, v_{N_k}$ left to right with the stack algorithm from
the proof of Lemma~\ref{lem:freeprod}. Compression therefore has
exactly two sources. Letters removed inside a block are a property
of the net alone. Letters removed where reduced blocks meet are
not, because cancellation at a junction can in principle cascade.
An incoming block may eat deep into the stack, so the saving
realized at one junction can depend on the entire prefix absorbed
before it. The next lemma controls this cascade unconditionally,
with no regularity assumption on the net.

Some bookkeeping first. Lengths in this section are syllable
counts, written $\|\cdot\|$. During the absorption of
$v_1, \ldots, v_{N_k}$, let $\hat s_i$ be the number of letters
removed while $v_i$ is being absorbed. This counts letters of
$v_i$ itself as well as letters popped off the stack, with the
convention that when two syllables merge into one, the incoming
letter counts as removed and the stack letter as surviving. Let
$\mathrm{surv}_j$ be the number of letters of $v_j$ that survive
to $\red(W_k)$, and let
\begin{equation}
\CM \;=\; \sum_j \big(\|v_j\| - \mathrm{surv}_j\big)
\end{equation}
be the total consumed mass. Finally, for reduced words $v$ and $w$
let
\begin{equation}
s(v,w) \;=\; \|v\| + \|w\| - \|\red(vw)\|
\end{equation}
be the saving realized by fusing the pair in isolation. This is
the quantity one would sum over junctions if cancellation never
cascaded, and the lemma measures how far that naive sum can drift
from the truth.

\begin{lemma}\label{lem:densweight}
Unconditionally:
\begin{enumerate}[nosep]
\item[(a)] $\sum_{i=1}^{N_k} \hat s_i = \CM$ and
$\|\red(W_k)\| = \sum_{i=1}^{N_k} \|v_i\| - \CM$.
\item[(b)] The discrepancy
$R = \sum_{i=1}^{N_k} \hat s_i -
\sum_{i=1}^{N_k-1} s(v_i, v_{i+1})$
satisfies $|R| \le \CM$.
\end{enumerate}
\end{lemma}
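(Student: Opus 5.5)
The plan is to treat the two parts separately. Part (a) is a bookkeeping identity; part (b) compares each actual absorption step with the isolated fusion of the same adjacent pair. Throughout I use the structure that Lemma~\ref{lem:freeprod} gives for reducing a concatenation of two alternating words. For reduced $v,w$, $\red(vw)$ is obtained by cancelling a maximal run of letters at the junction, each syllable of $w$'s prefix annihilating the current last syllable of $v$. This run ends in one of two ways: either a single $Z$-merge that is not a cancellation, or a type mismatch. So $s(v,w)=2c$ or $2c+1$, where $c$ is the number of annihilated pairs. The stack algorithm absorbing $v_{i+1}$ has exactly the same shape, except that the role of $v$ is played by the current stack.

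For (a), I would give every syllable on the stack an \emph{owner} $j$, the block it came from. Under the stated convention a merged syllable keeps the owner of the stack letter, and the incoming letter is declared removed. Every event in the algorithm then removes letters with well-defined owners: a cancelling pair removes two, a merge removes one, an $HH$ pop removes two. Each letter of each $v_j$ is removed at most once, and it is either removed during some absorption step $i$ or survives to the end. Summing removals by the step in which they occur gives $\sum_i\hat s_i$. Summing them by owner gives $\sum_j(\|v_j\|-\mathrm{surv}_j)=\CM$. The length identity follows because the final stack is $\red(W_k)$ by confluence, and its length is $\sum_j\mathrm{surv}_j$.

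For (b), I would compare $\hat s_{i+1}$ with $s(v_i,v_{i+1})$ junction by junction. When $v_{i+1}$ arrives, the top of the stack consists of the surviving part of $v_i$ (possibly with its first letter merged into an earlier owner's letter), sitting above letters of blocks $j<i$. Let $d_{i+1}$ be the number of letters of owners $j<i$ removed during step $i+1$, and let $e_{i+1}$ be the number of letters of $v_i$ that had already been removed before step $i+1$ (necessarily during step $i$). I would first check the case where $v_i$ sits intact on top, so $e_{i+1}=0$. Then the cascade follows the isolated reduction of $v_iv_{i+1}$ letter by letter until it exhausts $v_i$, and any excess is removal of deeper letters, giving $0\le\hat s_{i+1}-s(v_i,v_{i+1})\le d_{i+1}$. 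In general I would aim for the two-sided bound
\begin{equation}
-e_{i+1}\;\le\;\hat s_{i+1}-s(v_i,v_{i+1})\;\le\;d_{i+1}.
\end{equation}
The intuition for the lower side is that each missing letter of $v_i$ can destroy at most one unit of the isolated saving at that junction.

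Summing over $i$ then finishes the argument. The quantities $d$ count removals of letters owned by blocks at least two back, each charged at the step where the removal occurs. The quantities $e$ count removals of letters of $v_i$ during step $i$ itself, which are charged once each. Every consumed letter is charged at most once in $\sum d$ and at most once in $\sum e$. Hence both $\sum_i d_{i+1}\le\CM$ and $\sum_i e_{i+1}\le\CM$. Since $R$ lies between $-\sum e$ and $\sum d$, we get $|R|\le\CM$; the boundary term $\hat s_1$ is zero, as the stack is empty.

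The main obstacle is the lower bound $-e_{i+1}$ in the displayed inequality. Once $v_i$ is partially consumed, the new top letter can have a different phase or a different type from the corresponding letter of $v_i$. An isolated cancellation may then become a mere merge, or turn into a mismatch that stops the cascade early. My first attempt would be to show by induction along the cascade that each deviation point consumes at least one of the $e_{i+1}$ missing letters and loses at most one unit of saving. If the per-step losses turn out to be up to two per missing letter, I would fall back on splitting $\CM$ so that letters charged in $d$ and letters charged in $e$ are disjoint, which still yields $|R|\le\CM$.
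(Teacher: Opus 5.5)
Your part (a) is correct and is essentially the paper's bookkeeping argument. Part (b), however, rests on the two-sided per-junction bound $-e_{i+1}\le\hat s_{i+1}-s(v_i,v_{i+1})\le d_{i+1}$, and both halves are false, even after summing.

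\emph{Upper half.} Take $v_1=H$, $v_2=Z_1$, $v_3=Z_7H$. Here $v_2$ sits intact on top, which is exactly your base case, yet $\hat s_3=4$, $s(v_2,v_3)=2$ and $d_3=1$. Summed, this gives $R=2>\sum d=1$. The excess is not only the deeper letters. It also contains the letters of $v_{i+1}$ that die against them, and possibly a merge, so it can reach $2d_{i+1}+1$. You do not need this half at all: by (a), $R=\CM-S$ with $S=\sum_i s(v_i,v_{i+1})\ge0$, so $R\le\CM$ is immediate. That is how the paper handles it.

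\emph{Lower half.} This is the real content of (b), namely $S\le2\,\CM$, and here your unit of charge is wrong. Take $v_1=Z_1$, $v_2=Z_7H$, $v_3=HZ_1$. Then $\hat s=(0,2,2)$, $\CM=4$ and $S=2+4=6$, so $R=-2$, whereas $\sum e=e_3=1$. A single consumed letter of $v_i$ can cost two units. The reason is that a deep pop in the isolated reduction removes both that letter and a letter of $v_{i+1}$ that survives globally.

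Your fallback of two units per missing letter is valid, but it is too weak. Write $\hat a_i$ and $\hat\mu_i\in\{0,1\}$ for the annihilations and merges while absorbing $v_i$. Then $2e_{i+1}=2\hat a_i+2\hat\mu_i$ exceeds $\hat s_i=2\hat a_i+\hat\mu_i$ by one for every merge. A run of blocks $Z_1,Z_1,\ldots$ already drives $2\sum e$ to about $2\,\CM$. Making the $d$- and $e$-charged letters disjoint cannot repair this, because those two charges control opposite signs of $R$, and each must separately be at most $\CM$.

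The missing idea is to charge the loss at junction $(i,i+1)$ to \emph{all} letters removed while absorbing $v_i$, annihilation partners included. In other words, prove $s(v_i,v_{i+1})\le\hat s_{i+1}+\hat s_i$, which is Proposition~\ref{prop:audit}. Three facts are needed:
\begin{enumerate}
\item Pairwise pops occur at consecutive nested depths, and above the deepest syllable of $v_i$'s remnant the global and isolated cascades coincide.
\item The remnant has length exactly $\|v_i\|-\hat a_i$, provided a merged syllable is counted as still present; it keeps its type and position. Your owner convention counts it as missing, which is why your $e$ over-charges. Hence there are at most $\hat a_i$ deep pops, worth at most $2\hat a_i$.
\item At the boundary depth the global cascade still fires on the merged syllable. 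So the boundary loses at most the single unit that the merge contributed to $\hat s_i$.
\end{enumerate}
Summing over junctions, with $\hat s_1=0$, gives $S\le2\,\CM$, i.e.\ $R\ge-\CM$.
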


\begin{proof}
(a) Every letter of every block meets exactly one of two fates. It
survives to $\red(W_k)$, or it is removed at exactly one step of
the absorption. Summing $\hat s_i$ over $i$ therefore counts each
removed letter exactly once, whether it fell as part of the
incoming block or was popped off the stack, so
$\sum_i \hat s_i$ is the total number of removed letters, which is
$\sum_j (\|v_j\| - \mathrm{surv}_j) = \CM$ by definition. The
stack on termination is reduced and equals $\red(W_k)$ by
confluence, and its length is the number of letters pushed minus
the number removed. Rearranging gives the second identity. The
merge convention decides how a removal is attributed to a block,
never whether it is counted, so neither identity depends on it.

(b) The bound has an easy direction and a hard direction. The easy
direction is $R \le \CM$. Reduction never lengthens a word, so
$s(v,w) \ge 0$ for every pair, hence
$\sum_i s(v_i, v_{i+1}) \ge 0$, and by part (a)
\begin{equation}
R \;=\; \CM - \sum_{i} s(v_i, v_{i+1}) \;\le\; \CM .
\end{equation}
The hard direction is
$\sum_i s(v_i, v_{i+1}) \le 2\,\CM$, and it rests on two
structural facts about how the stack algorithm absorbs a reduced
block.

\emph{Step 1: consumption is a prefix.} Let $v$ be a reduced block
being absorbed. Its letters are pushed left to right, and the
first letter may trigger a cascade of pops against the stack. But
the moment some letter of $v$ is pushed and survives, every later
letter of $v$ is simply pushed: the stack top is then the previous
letter of $v$, and adjacent letters of a reduced word have
different types, so no rule can fire between them. The letters of
$v$ removed during its own absorption therefore form a prefix of
$v$, and what remains of $v$ on the stack is a contiguous suffix.

\emph{Step 2: at most one syllable is altered.} During the same
cascade, stack syllables are either removed outright or, in the
single case of a fusion whose result is nonzero, have their phase
changed while surviving. Such a surviving fusion ends the cascade,
because the next letter of $v$ has a different type from the
letter just merged and is simply pushed. So the absorption of one
block alters the phase of at most one surviving stack syllable,
namely the deepest syllable the cascade reaches, and it leaves
that syllable in place rather than removing it --- a point on
which the estimate of Appendix~\ref{app:proofs} turns.

\emph{Step 3: comparing the pairwise and global reductions.} Fix
a junction $i$ and reduce the pristine pair $v_i v_{i+1}$ in
isolation. Because both factors are reduced, the pops occur in
nested order outward from the junction: the $m$-th pop cancels or
merges the pair $(x_m, y_m)$, where $x_m$ is the $m$-th syllable
of $v_i$ counted from the right and $y_m$ the $m$-th syllable of
$v_{i+1}$ counted from the left. If $M_i$ pops occur then
$s(v_i, v_{i+1}) \le 2M_i$, since each pop removes at most two
letters. Now compare with the global process. When $v_{i+1}$
arrives, the stack carries a suffix of $v_i$ of some length
$\sigma_i \ge 0$, pristine except that its deepest syllable may
carry an altered phase (Steps 1 and 2), on top of older material.

The depth $\sigma_i$ splits the pairwise pops into three regimes.
Above it, at depths $m < \sigma_i$, both members of the pair sit
on the stack in pristine form, so the global cascade fires exactly
when the pairwise reduction does and removes the same letters. At
$m = \sigma_i$ the two can disagree, since the stack copy of
$x_{\sigma_i}$ may carry an altered phase, but this is at most one
pop per junction. Below it, at depths $m > \sigma_i$, the pairwise
reduction reaches syllables of $v_i$ that were consumed before
$v_{i+1}$ ever arrived, so these pops have no global counterpart
at this junction at all.

The third regime is where the pairwise sum can outrun the global
one, and it is also where the bound comes from, because the
syllables those pops use were removed at the junction before.
Appendix~\ref{app:proofs} makes this precise junction by junction:
the saving available to the pairwise reduction at junction $i$ is
at most the global saving there plus the global saving at the
junction before it (Proposition~\ref{prop:audit}). Summing over
junctions gives $\sum_i s(v_i, v_{i+1}) \le 2\,\CM$, and with the
easy direction $|R| \le \CM$ (Corollary~\ref{cor:audit}), a bound
an explicit family attains.
\end{proof}

Part (b) is not the bound we first wrote down, and the correction
is worth recording. An earlier draft charged pairwise savings
only to \emph{fully consumed blocks}, and an adversarial search
falsified it: a long block, a single-syllable barrier, and the
inverse of the long block's tail produce deep cancellation that
is invisible junction by junction while no block is ever fully
consumed. The syllable-level statement above is what that
counterexample forced. We then stress-tested it on $3065$
instances, real SK runs on both nets at $k = 1$--$5$ together
with $3000$ adversarial synthetic block sequences including the
falsifying family, with zero violations, and the worst observed
ratio is $|R|/(\CM{+}1) = 0.909$, consistent with the sharp
coefficient proven in Appendix~\ref{app:proofs} and with the
family exhibited there that attains it.

Lemma~\ref{lem:densweight} localizes the analysis. Part (a) says
that Tier-1 compression \emph{is} the consumed mass. Part (b) says
that the naive junction-by-junction sum computes the consumed mass
up to an error which is itself of the order of the consumed mass.
What remains is to identify when the error vanishes outright, that
is, when cancellation stays at its own junction instead of
cascading. Cascades need one of two things: blocks short enough to
be swallowed whole, or degenerate junctions where adjacent blocks
are mutual inverses. Ruling out both turns out to be enough.

Two conditions make the localization precise. Say the net is
\emph{regular} if
\begin{equation}
\ell_{\min} \;\ge\; 2D^* + 1,
\end{equation}
where $\ell_{\min}$ is the syllable length of the shortest reduced
net word and $D^*$ is the maximal pairwise cancellation depth over
ordered pairs of reduced net words and their inverses, with
mutual-inverse pairs excluded from the maximum. Say an instance is
\emph{non-degenerate} if no two adjacent blocks of $W_k$ are exact
mutual inverses. Regularity says that every net word is more than
twice as long as any cancellation it can suffer at a single
junction. Non-degeneracy removes the one junction type at which
cancellation depth is unbounded, the type that regularity, by
construction, says nothing about. Three more statistics of the
instance appear in the result. We write $\bar\rho_k$ for the
internal reduction density,
\begin{equation}
\bar\rho_k \;=\; \frac{1}{L_k}\sum_{i=1}^{N_k}
\big(|\beta_i| - \|v_i\|\big),
\end{equation}
the fraction of the raw word removed by reducing each block in
isolation, $J^{PP}_k$ and $J^{HH}_k$ for the number of junctions
at which two phase syllables, respectively two Hadamards, meet,
and $l_0' = \min_\nu |\nu|$ for the raw length of the shortest net
word.

\begin{theorem}\label{thm:A}
For a regular net and a non-degenerate instance, $R = 0$ and the
consumed mass localizes to junctions:
\begin{gather}
\gamma_1(W_k) \;=\; \bar\rho_k \;+\; \frac{1}{L_k}
\sum_{i=1}^{N_k-1} s\big(v_i, v_{i+1}\big),
\\[2pt]
\bar\rho_k + \frac{J^{PP}_k + 2J^{HH}_k}{L_k}
\;\le\; \gamma_1(W_k) \;\le\; \bar\rho_k + \frac{2D^*}{l_0'} .
\end{gather}
\end{theorem}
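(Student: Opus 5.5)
The plan is to show that, under regularity and non-degeneracy, the global stack absorption never cascades past a junction. The identity and both bounds then follow from Lemma~\ref{lem:densweight}(a) and elementary counting.

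\emph{Step 1 (no cascade).} I would prove by induction on $i$ the following invariant: when $v_{i+1}$ arrives, the stack top consists of a suffix of $v_i$ of length $\sigma_i \ge \ell_{\min} - D^* \ge D^*+1$. At most one syllable of this suffix, the deepest, carries an altered phase, and that syllable sits at depth at least $D^*+1$ from the top.

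The base case is immediate. For the inductive step, consider the pristine pair $(v_i, v_{i+1})$. It is not a mutual-inverse pair, by non-degeneracy, so its pairwise cancellation depth $M_i$ satisfies $M_i \le D^*$. Each block, or the inverse of one, is a reduced net word, so the same bound applies to inverse blocks. Since the top $D^*$ syllables of the stack are pristine copies of $v_i$'s tail, the global cascade mirrors the pairwise one pop for pop. It halts at the same depth $M_i \le D^* < \sigma_i$, so it never reaches the altered syllable or older material.

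After this, $v_{i+1}$ has lost at most a prefix of length $D^*$, by Step 1 of Lemma~\ref{lem:densweight}. What remains on the stack is a suffix of $v_{i+1}$ of length at least $\ell_{\min}-D^* \ge D^*+1$. Its only possibly altered syllable is the one that took part in a merge at the pairwise depth $M_i$, and this lies deeper than the top $D^*$ positions because $\ell_{\min}-D^*>D^*$. This closes the induction.

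The main obstacle is exactly this bookkeeping. I must check that the altered syllable produced at junction $i$ cannot be touched at junction $i+1$. This requires the surviving part of $v_{i+1}$ to be longer than $2D^*$ counted from its left end, which is precisely where $\ell_{\min}\ge 2D^*+1$ is used.

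\emph{Step 2 (identity).} By Step 1, the letters removed while absorbing $v_{i+1}$ are exactly those removed in reducing the isolated pair $v_iv_{i+1}$, so $\hat s_{i+1}=s(v_i,v_{i+1})$ and $\hat s_1=0$. Hence $R=0$ and $\CM=\sum_i s(v_i,v_{i+1})$. Lemma~\ref{lem:densweight}(a) gives $\|\red(W_k)\|=\sum_i\|v_i\|-\CM$. Dividing by $L_k$ and using $\sum_i\|v_i\| = L_k(1-\bar\rho_k)$ yields the displayed formula for $\gamma_1$. Here I identify letters with syllables after the embedding into $\Sigma$, as in the proof of Lemma~\ref{lem:freeprod}.

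\emph{Step 3 (bounds).} For the lower bound, at a PP junction the two phase syllables fuse: a merge saves at least $1$, and a cancellation saves $2$. At an HH junction the two Hadamards cancel, saving $2$. A junction of mixed type may save $0$. Summing gives $\sum_i s(v_i, v_{i+1})\ge J^{PP}_k+2J^{HH}_k$.

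For the upper bound, each pairwise saving satisfies $s(v_i,v_{i+1})\le 2M_i\le 2D^*$, and there are $N_k-1<N_k$ junctions. Every block has raw length at least $l_0'$, so $L_k\ge N_k l_0'$. Therefore
\[
\frac{1}{L_k}\sum_i s(v_i,v_{i+1}) \le \frac{2D^* N_k}{N_k l_0'} = \frac{2D^*}{l_0'}.
\]
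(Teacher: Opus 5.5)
Your proof is correct and follows the paper's argument. It uses the same induction on the block index to show that, under $\ell_{\min}\ge 2D^*+1$ and non-degeneracy, the global cascade at each junction replicates the isolated pairwise reduction, so $\hat s_{i+1}=s(v_i,v_{i+1})$ and $R=0$; the accounting through Lemma~\ref{lem:densweight}(a) and the junction-type and $2D^*/l_0'$ bounds are also the paper's. The difference is bookkeeping: you assign a merged syllable to the incoming block and track it as an altered deepest syllable, whereas the paper lets the stack letter survive the merge and keeps the simpler invariant of an untouched suffix of length at least $D^*+1$. Your weaker guarantee, pristine only in the top $D^*$, still suffices because a merge halts both processes against the merged syllable itself and a merge never changes a syllable's type.
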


\begin{proof}
The heart of the matter is that under the two conditions the
global absorption never cascades past a single junction, so that
the stack process realizes exactly the pairwise cancellations and
nothing else. We prove this by induction on the block index,
carrying the invariant that after $v_i$ has been absorbed, the
stack ends in an untouched suffix of $v_i$ of syllable length at
least $D^* + 1$.

The base case is the absorption of $v_1$ onto the empty stack.
Nothing is removed, so $\hat s_1 = 0$, and the untouched suffix is
all of $v_1$, of length
$\|v_1\| \ge \ell_{\min} \ge 2D^* + 1 \ge D^* + 1$.

For the inductive step, suppose the invariant holds after $v_i$
and absorb $v_{i+1}$. Consider first the pairwise reduction of the
pristine pair $v_i v_{i+1}$, whose pops cancel or merge the nested
pairs $(x_1, y_1), \ldots, (x_{M_i}, y_{M_i})$ as in
Lemma~\ref{lem:densweight}. By non-degeneracy $v_{i+1}$ is not the
exact inverse of $v_i$, so this pair enters the maximum defining
$D^*$ and $M_i \le D^*$. Since
$\|v_i\|, \|v_{i+1}\| \ge \ell_{\min} > D^* \ge M_i$, neither word
is exhausted, and the pairwise reduction stops at depth $M_i + 1$
because the pair $(x_{M_i+1}, y_{M_i+1})$ fails to fire. Now run
the global cascade. By the invariant, the top $D^* + 1 \ge M_i +
1$ syllables of the stack are the pristine tail of $v_i$, so at
every depth $m \le M_i + 1$ the global process compares exactly
the same pair of syllables as the pairwise one. It therefore fires
the same $M_i$ pops, removes the same letters, and stops at the
same depth. Two consequences follow. First,
$\hat s_{i+1} = s(v_i, v_{i+1})$, since $v_{i+1}$ is internally
reduced and every removal during its absorption belongs to this
cascade. Second, the cascade reaches at most $D^*$ syllables into
$v_i$ and consumes at most $D^*$ letters of $v_{i+1}$, so the
untouched suffix of $v_{i+1}$ left on the stack has length at
least $\ell_{\min} - D^* \ge D^* + 1$, which is the invariant for
the next step.

The inductive step gives $\hat s_j = s(v_{j-1}, v_j)$ for every
$j = 2, \ldots, N_k$, and $\hat s_1 = 0$ from the base case, so
\begin{equation}
\label{eq:R0}
\sum_{i=1}^{N_k} \hat s_i
\;=\; \sum_{i=1}^{N_k-1} s(v_i, v_{i+1}),
\end{equation}
that is, $R = 0$, and by Lemma~\ref{lem:densweight}(a) the
consumed mass is exactly the sum of the pairwise junction savings.

Equation~\eqref{eq:R0} is now letter accounting. The letters of
$W_k$ removed by Tier-1 reduction split into those removed inside
a block, $\sum_i (|\beta_i| - \|v_i\|)$ in total, and those
removed at junctions, which by \eqref{eq:R0} is
$\sum_i s(v_i, v_{i+1})$, where we use that a reduced word has one
letter per syllable so that syllable and letter counts agree on
everything downstream of the internal reduction. Dividing by
$L_k$ gives
$\gamma_1(W_k) = \bar\rho_k + L_k^{-1}\sum_i s(v_i,v_{i+1})$.

For the lower bound, examine the junction types. At a junction
where two phase syllables meet, the fusion rule fires at least
once and removes at least one letter, so $s(v_i,v_{i+1}) \ge 1$.
Where two Hadamards meet, the cancellation $HH \to \varepsilon$
removes two, so $s(v_i,v_{i+1}) \ge 2$. All junction savings are
nonnegative, so
$\sum_i s(v_i,v_{i+1}) \ge J^{PP}_k + 2J^{HH}_k$. For the upper
bound, each junction satisfies $s(v_i, v_{i+1}) \le 2M_i \le
2D^*$, there are $N_k - 1$ junctions, and
$L_k \ge N_k\, l_0'$ because every block descends from a net word
of raw length at least $l_0'$. Hence
\begin{equation}
\frac{1}{L_k}\sum_{i=1}^{N_k-1} s(v_i, v_{i+1})
\;\le\; \frac{(N_k - 1)\, 2D^*}{N_k\, l_0'}
\;\le\; \frac{2D^*}{l_0'} . \qedhere
\end{equation}
\end{proof}

The two bounds say something stronger than an error bar. The
junction term is squeezed between a count of junction types over
$L_k$ and the constant $2D^*/l_0'$, and both sides scale as the
reciprocal of the net-word length, uniformly in $k$: the junction
count grows like $N_k$ while $L_k$ grows like $N_k$ times the mean
net-word length, which sits between $l_0'$ and $l_0$. The
recursion depth has dropped out entirely. Under regularity and
non-degeneracy, Tier-1 compressibility of a Solovay--Kitaev
circuit is a statistic of the net, not of the recursion.

This prediction can be checked without knowing $\bar\rho_k$ in
closed form, because depth-independence is visible on its own. On
both of our nets (Sec.~\ref{sec:numerics}), $\gamma_1(W_k)$ is
flat to within $\pm 0.02$--$0.04$ across $k = 1$ to $6$, even as
the SK approximation error falls by five orders of magnitude
(Fig.~\ref{fig:plateau}). The depth-independence of the
compression fraction is thereby established as a theorem about the
net rather than an empirical curiosity.

There is, however, an honest gap between the theorem and the
experiment, and closing it is the business of the next section.
The hypotheses of Theorem~\ref{thm:A} fail on real instances. On
both nets, empty and near-empty blocks occur at a stationary
$7$--$20\%$ density at every depth we test, in violation of
regularity, and the number of degenerate mutually-inverse
junctions grows into the hundreds by $k = 6$
(Appendix~\ref{app:num}). Yet Fig.~\ref{fig:plateau} shows the
plateau surviving these violations without visible damage, at the
same value, with the same flatness. The conditional mechanism of
this section cannot be the whole story. What is needed is an
unconditional account of why cascades, though present, do not
accumulate, and that account is dynamical rather than
combinatorial: it comes from the contraction properties of the SK
recursion itself.

\section{The Stationary Limit Law}
\label{sec:limit}

The plateau outlives the hypotheses that explained it, so
something else must be holding it up. This section identifies that
something. The idea is to stop treating $W_k$ as one long word and
start treating it as a random object. The recursion tree that
generates $W_k$ induces a natural probability measure on its
blocks, and under that measure the quantities entering $\gamma_1$
--- block lengths, internal reductions, junction savings ---
become expectations of functionals of a stochastic process. Three
facts then carry the argument. The block average is exactly an
expectation over independent steps down the recursion tree
(Lemma~\ref{lem:pathrep}). The scale of a node's residual
contracts affinely along branch steps, so a deep leaf forgets its
ancestry at an exponential rate, an unconditional fact
(Lemma~\ref{lem:scale}) which moreover pins the stationary scale
to the net's covering radius (Corollary~\ref{cor:sqrtlaw}). And
residuals almost surely never vanish, so the dynamics never
collapses onto the exceptional set (Lemma~\ref{lem:nocollapse}).
What we cannot prove is ergodicity of the residual's
\emph{direction}, and we isolate it as a calibrated hypothesis
(Hypothesis~E), of the same character as the equidistribution
hypotheses standard in exact synthesis. Together these yield the
limit law (Theorem~\ref{thm:Aprime}) and a parameter-free
prediction that we test in two independent ways.

The recursion has a tree structure worth making explicit before
anything is averaged. A depth-$k$ call $\mathrm{SK}(U,k)$ issues
three depth-$(k{-}1)$ calls, one on each commutator factor of its
residual and one on the target itself. The word it returns uses
the two factor words twice each, once directly and once inverted,
and the target word once, five slots in all. Unfolding to depth
$k$ therefore assigns each of the $N_k = 5^k$ blocks of $W_k$ an
address in a $5$-ary tree, but the distinct computations are
indexed by the coarser set $\{U,A,B\}^k$: record at each level
whether the block descends through the target call ($U$) or
through the first or second commutator factor ($A$ or $B$), and
forget which of the two copies was taken. For a path
$p \in \{U,A,B\}^k$, write $\mathrm{leaf}(p)$ for the base-net
word computed at the end of $p$ and $m(p)$ for its number of
branch steps, the letters of $p$ equal to $A$ or $B$.

\begin{lemma}\label{lem:pathrep}
Exactly $2^{m(p)}$ blocks of $W_k$ carry the address $p$, and each
of them equals $\mathrm{leaf}(p)$ or its inverse. Consequently,
for every block functional $f$ satisfying
$f(\bar\beta) = f(\beta)$,
\begin{gather}
\frac{1}{N_k}\sum_{i=1}^{N_k} f(\beta_i)
\;=\;
\mathbb{E}_{p\sim\mathcal{P}_k}\big[f(\mathrm{leaf}(p))\big],
\\[2pt]
\mathcal{P}_k = \Big(\tfrac15\delta_U + \tfrac25\delta_A +
\tfrac25\delta_B\Big)^{\otimes k},
\end{gather}
and under $\mathcal{P}_k$ the branch count $m(p)$ is binomial with
parameters $(k, 4/5)$, so that
\begin{equation}
\mathcal{P}_k\big[m(p) < k/2\big] \;\le\; e^{-k\ln(5/4)} .
\end{equation}
\end{lemma}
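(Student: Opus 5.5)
The plan is to prove the three claims in order: the multiplicity-and-orientation statement by induction on $k$, the averaging identity by reweighting, and the tail bound by a Chernoff estimate. For the induction I would set up the addressing precisely. A block of $W_k$ has a $5$-ary address $(c_1,\dots,c_k)$, where each $c_j$ is one of the five slots $a,b,\bar a,\bar b,W$ of the recursion at level $j$. The coarsening map sends $a,\bar a\mapsto A$, $b,\bar b\mapsto B$ and $W\mapsto U$.

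The key observation is that $\mathrm{SK}(\cdot,k)$ is deterministic. The two copies of a commutator factor in the slots $a_k$ and $\bar a_k$ therefore come from the \emph{same} depth-$(k{-}1)$ call; one is used directly and the other inverted. The induction hypothesis at depth $k{-}1$, applied to the three distinct child calls, gives for each child call and each $q\in\{U,A,B\}^{k-1}$ exactly $2^{m(q)}$ blocks with address $q$, each equal to $\mathrm{leaf}(q)$ or its inverse.

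Inverting a word reverses the order of its blocks and inverts each block, so it preserves both the multiset of addresses and the property ``equals the leaf or its inverse.'' Prepending $U$ contributes one copy, and prepending $A$ or $B$ contributes two copies, one from each slot. This gives $2^{m(p)}$ blocks at each address $p$. The base case $k=0$ is the single net word. I expect the only real care to be needed here: one must check that $\mathrm{leaf}(p)$ is well defined, i.e.\ that both copies descend from one computation rather than two independent ones. One must also track the orientation parity under nested inversions, which is why the statement asserts only ``$\mathrm{leaf}(p)$ or its inverse'' and requires $f(\bar\beta)=f(\beta)$.

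For the averaging identity, group the blocks by address. Since $f$ is inversion-invariant, each block at address $p$ contributes $f(\mathrm{leaf}(p))$, so
\[
\frac{1}{N_k}\sum_{i=1}^{N_k} f(\beta_i)=\sum_{p\in\{U,A,B\}^k} \frac{2^{m(p)}}{5^k}\, f(\mathrm{leaf}(p)) .
\]
The weight $2^{m(p)}/5^k$ factorizes over the letters of $p$, each letter contributing $1/5$ for $U$ and $2/5$ for $A$ or $B$. This is exactly $\mathcal{P}_k(p)$. Taking $f\equiv1$ confirms that the total mass is $(1/5+2/5+2/5)^k=1$, consistent with $N_k=5^k$. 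Under the product measure the indicators $[p_j\in\{A,B\}]$ are i.i.d.\ Bernoulli$(4/5)$, so $m(p)\sim\mathrm{Bin}(k,4/5)$.

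For the tail, I will use the Chernoff bound for the lower tail, $\mathbb{P}[\mathrm{Bin}(k,q)\le ak]\le e^{-k D(a\Vert q)}$ for $a<q$, with $a=1/2$ and $q=4/5$. The relative entropy is
\[
D(\tfrac12\Vert\tfrac45)=\tfrac12\ln\tfrac{5}{8}+\tfrac12\ln\tfrac{5}{2}=\tfrac12\ln\tfrac{25}{16}=\ln\tfrac54 ,
\]
which gives exactly the stated exponent. The strict inequality $m<k/2$ is contained in the event $m\le k/2$, so the bound applies directly.
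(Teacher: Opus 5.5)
Your proposal is correct and takes essentially the same route as the paper. The induction runs through the concatenation $a_k b_k \bar a_k \bar b_k W_{k-1}$, using that inversion reverses and inverts the blocks. The block average is then regrouped by address, so that the weight $2^{m(p)}/5^k$ factorizes into the stated product measure. The tail follows from the relative-entropy Chernoff bound with $D(\tfrac12\Vert\tfrac45)=\ln\tfrac54$. The only cosmetic difference is in tracking orientation: you note that inversion preserves the ``leaf or its inverse'' property, while the paper counts the parity of second-copy inversions along the $5$-ary address, which amounts to the same thing.
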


\begin{proof}
We prove the multiplicity claim by induction on $k$. At $k = 0$
the word is a single net block, the only path is the empty one,
and $2^0 = 1$. For the step, recall from Sec.~\ref{sec:prelim}
that the block sequence of $W_k$ is the concatenation of the block
sequences of $a_k$, $b_k$, $\bar a_k$, $\bar b_k$, and $W_{k-1}$,
where the block sequence of an inverted word consists of the
inverted blocks in reverse order. A path beginning with $U$ has
all its blocks in the $W_{k-1}$ segment, and by the inductive
hypothesis their number is $2^{m(p')}$ for the tail $p'$ of $p$,
which equals $2^{m(p)}$ because a $U$ step adds no branch. A path
beginning with $A$ has its blocks in the $a_k$ segment and the
$\bar a_k$ segment, each contributing $2^{m(p')}$ by the inductive
hypothesis, for a total of $2 \cdot 2^{m(p')} = 2^{m(p)}$, and the
$B$ case is identical. As for the form of these blocks, each is
obtained from $\mathrm{leaf}(p)$ by applying an inversion once for
every ``second copy'' chosen along its $5$-ary address, and an
even number of inversions cancels, so the block is
$\mathrm{leaf}(p)$ or its inverse according to the parity.

For the identity, partition the block sum by address and use the
inversion invariance of $f$, under which every block at address
$p$ contributes the same value $f(\mathrm{leaf}(p))$:
\begin{equation}
\frac{1}{N_k}\sum_{i=1}^{N_k} f(\beta_i)
\;=\;
\frac{1}{N_k}\sum_{p\in\{U,A,B\}^k} 2^{m(p)}
f(\mathrm{leaf}(p)) .
\end{equation}
It remains to observe that the weights are a product measure.
Writing $w(U) = 1$ and $w(A) = w(B) = 2$ for the per-step copy
counts, the weight of $p$ is
$2^{m(p)}/N_k = \prod_{j=1}^{k} w(p_j)/5$, which is exactly
$\mathcal{P}_k(p)$ with the stated one-step marginal. The sum is
therefore $\mathbb{E}_{\mathcal{P}_k}[f(\mathrm{leaf}(p))]$. Both
functionals used later, the block length and the reduced block
length, are inversion invariant, the first trivially and the
second because the reduction of an inverse is the inverse of the
reduction.

Under the product measure the $k$ letters of $p$ are independent
and each is a branch with probability $4/5$, so
$m(p) \sim \mathrm{Bin}(k, 4/5)$. The tail bound is the standard
Chernoff bound in relative-entropy form,
$\mathbb{P}[\mathrm{Bin}(k,q) \le \alpha k] \le
e^{-k\,D(\alpha\|q)}$, with $\alpha = 1/2$ and $q = 4/5$:
\begin{equation}
D\big(\tfrac12\big\|\tfrac45\big)
= \tfrac12\ln\tfrac{1/2}{4/5} + \tfrac12\ln\tfrac{1/2}{1/5}
= \tfrac12\ln\tfrac{25}{16}
= \ln\tfrac54 . \qedhere
\end{equation}
\end{proof}

The lemma is the probabilistic reformulation on which the rest of
the section runs. Averages over the $5^k$ blocks of a single
deterministic word are now expectations over $k$ independent
steps, and the Chernoff bound says that all but an exponentially
small fraction of the block mass lies on paths with many branch
steps. The next lemma explains why branch steps matter: each one
halves the memory of the scale.

The mechanism sits inside the balanced commutator construction.
At a node with residual rotation angle $\theta$, the recursion
factors the residual as a group commutator of two rotations by a
common angle $\varphi$ about orthogonal axes, and the two factor
rotations become the targets of the node's $A$- and $B$-children.
For such a factorization the angles are related by the
identity~\cite{dawson-nielsen}
\begin{equation}
\sin(\theta/2) \;=\; 2q^2\sqrt{1-q^4},
\qquad q = \sin(\varphi/2),
\end{equation}
which is the equation our implementation solves by bisection
(Appendix~\ref{app:num}). The identity looks innocuous, but read
as a map from the parent's scale to the child's it has a very
particular structure: on a logarithmic axis it is affine with
slope exactly one half. Slope one half is a contraction, and a
contraction iterated along the branch steps of a path is
precisely what makes deep leaves forget where they came from. The
lemma isolates this.

\begin{lemma}\label{lem:scale}
For every $\theta \in (0,\pi]$ the equation
$\sin(\theta/2) = 2q^2\sqrt{1-q^4}$ has a unique solution with
$q \in (0, 2^{-1/4}]$, the balanced branch, and $q$ is strictly
increasing in $\theta$ along it. In the coordinates
$x_\theta = \ln\sin(\theta/2)$ and $x_\varphi = \ln q$, the map
from parent scale to child scale is affine,
\begin{equation}
x_\varphi \;=\; \tfrac12\, x_\theta + c(\theta),
\qquad
c(\theta) \in \big[-\tfrac12\ln 2,\; -\tfrac14\ln 2\big],
\end{equation}
and the upper endpoint is attained at $\theta = \pi$ while the
lower is approached as $\theta \to 0$.
\end{lemma}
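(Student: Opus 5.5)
Set $u=q^2\in(0,2^{-1/2}]$ and study $g(u)=2u\sqrt{1-u^2}$, so the equation becomes $\sin(\theta/2)=g(u)$. First I will check that $g$ is continuous and strictly increasing on $(0,2^{-1/2}]$ with $g(0^+)=0$ and $g(2^{-1/2})=2\cdot 2^{-1/2}\cdot 2^{-1/2}=1$. The monotonicity comes from the sign of the derivative, since $g'(u)=2(1-2u^2)/\sqrt{1-u^2}>0$ for $u<2^{-1/2}$. Alternatively, write $u=\sin\psi$, so that $g=\sin 2\psi$ with $\psi\in(0,\pi/4]$.

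Existence and uniqueness then follow from the intermediate value theorem and injectivity. For $\theta\in(0,\pi]$ the left side $\sin(\theta/2)$ ranges over $(0,1]$ strictly increasingly in $\theta$, so the inverse $u=g^{-1}(\sin(\theta/2))$ is well defined. It is increasing in $\theta$, and $q=\sqrt u$ inherits strict monotonicity. Concretely, $u=\sin\psi$ with $2\psi=\theta/2$ on the balanced branch, which gives the closed form $q^2=\sin(\theta/4)$.

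For the affine form, take logarithms of $\sin(\theta/2)=2q^2\sqrt{1-q^4}$. This gives $x_\theta=\ln 2+2x_\varphi+\tfrac12\ln(1-q^4)$, hence
\begin{equation}
x_\varphi=\tfrac12 x_\theta+c(\theta),\qquad c(\theta)=-\tfrac12\ln2-\tfrac14\ln\bigl(1-q(\theta)^4\bigr).
\end{equation}
Since $q^4=u^2\in(0,1/2]$ and $u^2$ is increasing in $\theta$, the function $c$ is increasing in $\theta$. At $\theta=\pi$ we have $q^4=1/2$, so $c=-\tfrac12\ln2+\tfrac14\ln2=-\tfrac14\ln 2$. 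As $\theta\to0$, $q\to0$ and $c\to-\tfrac12\ln2$, and this limit is not attained on the open interval. This gives the stated range and the statement about the endpoints.

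The main subtlety is not computational; it is the meaning of the word ``affine''. Here $c$ depends on $\theta$, so the claim is really the bracketing of $c$. The downstream contraction only needs $c$ to be bounded in an interval of width $\tfrac14\ln2$, and that bound is exactly what this argument provides. The other point needing care is the choice of the balanced branch $u\le 2^{-1/2}$, since $g$ is two-to-one on $(0,1)$. I would state explicitly that the second root $u'=\sqrt{1-u^2}$ is excluded by the constraint $q\le 2^{-1/4}$.
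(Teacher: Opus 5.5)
Your proof is correct and is essentially the paper's argument. Both establish strict monotonicity of $2q^2\sqrt{1-q^4}$ on the balanced branch from the sign of its derivative (your substitution $u=q^2$ is cosmetic), then take logarithms to get $c(\theta)=-\tfrac12\ln 2-\tfrac14\ln(1-q^4)$ and bracket it using $q^4\in(0,\tfrac12]$. Your closed form $q^2=\sin(\theta/4)$, equivalently $c(\theta)=-\tfrac12\ln 2-\tfrac12\ln\cos(\theta/4)$, is a correct addition that the paper does not use, and it makes the monotonicity of $c$ and both endpoint claims immediate.
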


\begin{proof}
Write $g(q) = 2q^2\sqrt{1-q^4}$ on $[0,1]$. Its derivative is
\begin{equation}
g'(q) \;=\; 4q\sqrt{1-q^4} \;+\; 2q^2 \cdot
\frac{-2q^3}{\sqrt{1-q^4}}
\;=\; \frac{4q\,(1 - 2q^4)}{\sqrt{1-q^4}},
\end{equation}
which is strictly positive for $0 < q < 2^{-1/4}$ and vanishes at
$q = 2^{-1/4}$. So $g$ is strictly increasing on $[0, 2^{-1/4}]$,
with $g(0) = 0$ and
$g(2^{-1/4}) = 2 \cdot 2^{-1/2}\sqrt{1 - \tfrac12} = 1$. Since
$\sin(\theta/2)$ ranges over $(0,1]$ as $\theta$ ranges over
$(0,\pi]$, the equation $g(q) = \sin(\theta/2)$ has exactly one
solution in $(0, 2^{-1/4}]$, and monotonicity of $g$ makes $q$
strictly increasing in $\theta$. This is the balanced branch.

For the affine form, take logarithms of the defining identity:
\begin{equation}
x_\theta \;=\; \ln 2 + 2\ln q + \tfrac12\ln(1-q^4)
\;=\; \ln 2 + 2 x_\varphi + \tfrac12\ln(1-q^4),
\end{equation}
and solve for $x_\varphi$:
\begin{equation}
x_\varphi \;=\; \tfrac12\,x_\theta + c(\theta),
\qquad
c(\theta) \;=\; -\tfrac12\ln 2 \;-\; \tfrac14\ln(1-q^4) .
\end{equation}
On the balanced branch $q \in (0, 2^{-1/4}]$ we have
$1 - q^4 \in [\tfrac12, 1)$, hence
$\ln(1-q^4) \in [-\ln 2, 0)$ and
$-\tfrac14\ln(1-q^4) \in (0, \tfrac14\ln 2]$. Adding
$-\tfrac12\ln 2$ places $c(\theta)$ in the stated interval. As
$\theta \to 0$ we have $q \to 0$ and $c \to -\tfrac12\ln 2$, and
at $\theta = \pi$ we have $q = 2^{-1/4}$ exactly and
$c = -\tfrac12\ln 2 + \tfrac14\ln 2 = -\tfrac14\ln 2$, so the
upper endpoint is attained and the lower is approached in the
limit.
\end{proof}

Iterating the lemma is what produces stationarity, and it is
worth doing the iteration once explicitly. Compose $m$ branch
steps, each of the form $x \mapsto \tfrac12 x + c_j$ with
$c_j \in [-\tfrac12\ln2, -\tfrac14\ln2]$:
\begin{equation}
x_m \;=\; 2^{-m} x_0 \;+\; \sum_{j=1}^{m} 2^{-(m-j)} c_j .
\end{equation}
The first term says that the memory of the starting scale decays
by a factor of two per branch step, whatever the intermediate
intercepts are. The second term is a geometric sum whose value,
as $m$ grows, is confined to the interval
$[-\ln 2, -\tfrac12\ln 2]$, that is,
$\sin(\varphi/2) \in [\tfrac12, 2^{-1/2}]$. Repeated branching
therefore drives the scale into a fixed $O(1)$ window that
depends on nothing --- not the target, not the depth, not the
net. This is re-inflation: however small a node's residual has
become, the targets a few branch steps below it are back at
angles of order one. Steps of type $U$ interleave with the
branch steps and push the scale the other way, deeper, as the
recursion converges on its target, and by
Lemma~\ref{lem:pathrep} all but an $e^{-\Omega(k)}$ fraction of
the block mass lies on paths whose branch steps are frequent
enough for the contraction to win the exchange. The quantitative assembly of
these two rates is carried out where it is needed, in the proof
of Theorem~\ref{thm:Aprime}.

The window is one half of the stationarity story. The other half
is a floor: how small can a re-inflated node's residual actually
be? A node's residual is the mismatch between its target and the
best net approximation, so once the target angle is $O(1)$ the
residual cannot fall below the resolution of the net itself. Its
natural scale is the covering radius
$\theta_{\mathrm{cov}}$~\cite{harrow-recht-chuang}, the largest
angle by which a point of
$SU(2)$ can miss the net. The next corollary feeds a residual of
this size through the affine law. Its first part is an exact
consequence of Lemma~\ref{lem:scale}, valid for every node. Its
second part adds the one modeling step, the identification
$\theta_{\mathrm{res}} \asymp \theta_{\mathrm{cov}}$ for
re-inflated nodes, and is what turns the exact sandwich into a
prediction.

\begin{corollary}
\label{cor:sqrtlaw}
For any node with residual angle $\theta_{\mathrm{res}}$, the
balanced angle $\varphi$ of its children satisfies exactly
\begin{equation}
\sin^2(\varphi/2) \;\in\;
\big[\tfrac12,\, \tfrac1{\sqrt2}\big] \cdot
\sin(\theta_{\mathrm{res}}/2) .
\end{equation}
In particular, for a node whose target has re-inflated to angle
of order one, so that its residual is quantization-limited,
$\theta_{\mathrm{res}} \asymp \theta_{\mathrm{cov}}$, the
stationary leaf-angle scale is pinned to
\begin{equation}
\sin^2(\varphi^*/2) \;\in\;
\big[\tfrac12,\, \tfrac1{\sqrt2}\big] \cdot
\sin(\theta_{\mathrm{cov}}/2),
\end{equation}
with no adjustable constant.
\end{corollary}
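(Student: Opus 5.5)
The plan is to read the first claim directly off Lemma~\ref{lem:scale}, undoing its logarithmic coordinates. The lemma gives $\ln q = \tfrac12\ln\sin(\theta_{\mathrm{res}}/2) + c(\theta_{\mathrm{res}})$ with $q=\sin(\varphi/2)$ on the balanced branch. Doubling and exponentiating yields
\[
\sin^2(\varphi/2) = e^{2c(\theta_{\mathrm{res}})}\,\sin(\theta_{\mathrm{res}}/2).
\]
The range $c\in[-\tfrac12\ln2,-\tfrac14\ln2]$ then translates into $e^{2c}\in[\tfrac12, 2^{-1/2}]$, which is exactly the stated multiplicative interval.

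For a sanity check I would also use the closed form from the proof of the lemma, $e^{2c} = \tfrac12(1-q^4)^{-1/2}$. Together with $1-q^4\in[\tfrac12,1)$ on the balanced branch, this gives the same interval. It also shows that the upper endpoint is attained only at $\theta_{\mathrm{res}}=\pi$ and the lower endpoint only in the limit. This part is unconditional and holds node by node. The only care needed is to note that $\theta_{\mathrm{res}}\in(0,\pi]$, so that the lemma applies. A vanishing residual would be a degenerate case outside the lemma's domain; Lemma~\ref{lem:nocollapse} says this almost surely does not occur, and if needed I would exclude it explicitly.

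The second claim is not a further derivation. It is the first claim evaluated under the modeling identification $\theta_{\mathrm{res}}\asymp\theta_{\mathrm{cov}}$ for re-inflated nodes. I would justify the setting briefly. By the iteration displayed after Lemma~\ref{lem:scale}, a few branch steps drive the target angle into the $O(1)$ window. At such a node the residual is the miss distance of the best net approximation to the target, which is at most $\theta_{\mathrm{cov}}$ by definition of the covering radius. Substituting $\theta_{\mathrm{cov}}$ for $\theta_{\mathrm{res}}$ in the exact sandwich gives the stated interval.

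The point to emphasize is that the prefactor interval $[\tfrac12,2^{-1/2}]$ comes entirely from the balanced-commutator identity, so no constant is fitted. The main obstacle is this second part. The relation $\theta_{\mathrm{res}}\le\theta_{\mathrm{cov}}$ is rigorous only as an upper bound, while the matching lower order requires the residual not to be atypically small. I would state this explicitly as the one modeling step, as the paper does, rather than claim a proof. I would note that typical miss distances are comparable to $\theta_{\mathrm{cov}}$ for a reasonably uniform net, and defer quantitative confirmation of this to the numerical tests mentioned in the paper.
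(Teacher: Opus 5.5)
Your proposal is correct and is essentially the paper's argument. The paper rearranges the balanced-commutator identity to $q^2=\sin(\theta_{\mathrm{res}}/2)/(2\sqrt{1-q^4})$ and uses $q^4\le\tfrac12$ on the balanced branch; this is your identity $e^{2c}=\tfrac12(1-q^4)^{-1/2}$ without the logarithmic detour. Like you, the paper obtains the second display by evaluating the first at $\theta_{\mathrm{res}}\asymp\theta_{\mathrm{cov}}$, treating this as a modeling identification rather than a derivation.
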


\begin{proof}
Rearranging the defining identity of Lemma~\ref{lem:scale} gives
\begin{equation}
q^2 \;=\; \frac{\sin(\theta_{\mathrm{res}}/2)}{2\sqrt{1-q^4}},
\end{equation}
and on the balanced branch $q^4 \le \tfrac12$, so
$\sqrt{1-q^4} \in [2^{-1/2}, 1]$ and therefore
$q^2 = \sin^2(\varphi/2)$ lies between
$\tfrac12\sin(\theta_{\mathrm{res}}/2)$ and
$2^{-1/2}\sin(\theta_{\mathrm{res}}/2)$. The second display is
the first with $\theta_{\mathrm{res}}$ read at its quantization
floor.
\end{proof}

The corollary earns its place for two reasons. First, it is the
one point at which the theory of this section touches the net
quantitatively. Theorem~\ref{thm:Aprime} below will say that the
stationary compressibility is a functional of the net's
quantization measure, but a functional is not yet a number. The
corollary extracts a number: the deep-leaf angle scale must sit
in a specific window computed from $\theta_{\mathrm{cov}}$ alone,
a quantity read off the net before any circuit is synthesized.
The square root in the name is the visible fingerprint of the
slope-$\tfrac12$ contraction, since
$\sin(\varphi^*/2) \sim \sqrt{\sin(\theta_{\mathrm{cov}}/2)}$ up
to the constants of the window. Second, and more importantly for
the logic of the paper, the corollary is falsifiable where
Theorem~\ref{thm:Aprime} is not yet provable. The limit law
rests on Hypothesis~E, which we calibrate rather than prove, but
the window above uses only the proven ingredients, the
contraction and the quantization floor. Checking it therefore
tests the mechanism independently of the hypothesis, and because
nothing is fitted, agreement cannot be manufactured. We carry
out this check at the end of the section on two independently
constructed nets with covering radii differing by a factor of
almost four, and the measured deep-leaf angle statistics land in
the predicted window both times.

One assumption has been riding silently under everything in this
section. The coordinate of Lemma~\ref{lem:scale} is
$x = \ln\sin(\theta/2)$, which is finite only when the residual
angle is nonzero, and the balanced factorization itself is
defined only in that case: a node with vanishing residual has
nothing to factor, and a node arbitrarily close to vanishing
sends $x$ toward $-\infty$, outside every window discussed so
far. If residuals could vanish with positive probability, the
path dynamics would carry an atom at $-\infty$ and the stationary
picture would be broken before it began. The next lemma closes
this loophole. It says that for a Haar-random target the residual
can vanish in only one way, the harmless one: the target of some
node is exactly a product of net words, the recursion returns
that exact word, and the subtree simply ends. Away from this
event, of probability zero, every node of the tree has a nonzero
residual and the log-scale dynamics is well defined everywhere.

\begin{lemma}\label{lem:nocollapse}
For Haar-random $U$, almost surely the following holds at every
node of the recursion tree: either the node's residual is
nonzero, or the node belongs to a subtree whose root target is
exactly net-representable, in which case the recursion terminates
there with the exact word.
\end{lemma}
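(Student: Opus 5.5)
The plan is a countability-plus-null-set argument, organized by induction down the recursion tree. The key observation is that the recursion is deterministic once $U$ is fixed. Every quantity produced at a node is then a (piecewise) analytic function of $U$, obtained by composing three kinds of map: group multiplication, the nearest-net-word selection, and the balanced commutator factorization.

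\emph{Step 1: countability of the exceptional targets.} At any node, the residual is $\Delta = V\,W^{-1}$, where $V$ is the node's target and $W$ is the word the node returns one level coarser. This residual vanishes projectively only if $V$ equals $W$, a finite product of net words and their inverses. Since the net $\mathcal N$ is finite, the set $\mathcal G$ of such products is countable.

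\emph{Step 2: pullback to the top-level target.} It then suffices to show that, for each fixed node address $p$ and each fixed $g\in\mathcal G$, the set
\[
E_{p,g}=\{U : \text{target at } p \text{ equals } g\}
\]
is either Haar-null or is exactly the harmless event. Here the harmless event means the node sits in a subtree whose root target is $g$ itself; the base case of the recursion then returns $g$ exactly and the subtree ends. I would establish this by induction on the depth of $p$.

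The nearest-word map is piecewise constant on a partition of $SU(2)$ into finitely many Voronoi-type cells. Ties between cells occur only on cell boundaries, which are real-analytic hypersurfaces and hence Haar-null. Away from these boundaries, the map $U\mapsto$ (target at $p$) is a composition of the following pieces:
\begin{itemize}[nosep]
\item left or right multiplication by fixed words, which preserves Haar measure;
\item the balanced factorization, which by Lemma~\ref{lem:scale} is a real-analytic map of the residual wherever the residual is nonzero and $\theta<\pi$.
\end{itemize}
Under such maps, the preimage of a single point is a proper analytic subvariety of a $3$-manifold unless the map is constant on a cell.

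\emph{Step 3: the main obstacle, non-constancy.} The hard part is excluding the possibility that a child target is locally constant in $U$ on a set of positive measure. In that degenerate case the preimage would carry positive mass. I would argue as follows. Within a cell the net word $W$ is fixed, so the residual $UW^{-1}$ is a diffeomorphic image of $U$. The balanced factorization (commutator angle $\varphi$, together with orthogonal axes chosen to depend continuously on the residual's axis) is a submersion onto its image wherever $\theta\in(0,\pi)$. Therefore the $A$- and $B$-child targets sweep open sets as $U$ varies, and the $U$-child target is $U$ itself. I would make this precise by a rank computation: the angle $\varphi$ is strictly monotone in $\theta$ by Lemma~\ref{lem:scale}, and the axes rotate with the residual's axis. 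Together these give rank $\ge 1$, so each fiber is a proper analytic subvariety and hence null. Inductively, the preimage of these null sets under the (measure-class-preserving off boundaries) parent maps stays null. The excluded sets $\theta=\pi$ and the cell boundaries are themselves null.

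\emph{Step 4: conclusion.} Taking the union over the countably many pairs $(p,g)$ and the countably many boundary sets yields a null set. Outside this set, every node either has a nonzero residual or falls into the exact-word case, which is the claimed dichotomy.
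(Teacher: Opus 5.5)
\paragraph{Verdict: genuine gap in Step 3.}
Your Steps 1, 2 and 4 match the paper's cell decomposition and countable-union argument. Step 2 even announces the right dichotomy (``null, or exactly the harmless event''). But Step 3 abandons that dichotomy and tries instead to prove that no target map is ever constant on a cell, and that attempt does not go through.

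\paragraph{Why Step 3 fails.}
Your induction needs two things. First, each within-cell target map must be nonconstant. Second, to pull back the excluded loci ($\theta=\pi$, cell boundaries, axis degeneracies) at deeper nodes, the one-step parent-to-child maps must be ``measure-class preserving'', i.e.\ of rank $3$ off a null set. What you actually argue is rank $\ge 1$, from the monotonicity of $\varphi(\theta)$. That is not enough, for two reasons:
\begin{itemize}
\item A composition of maps each of rank $\ge 1$ can be constant: one step may map into a curve along which the next step is constant.
\item An analytic map of rank $<3$ everywhere on a cell has null image by Sard's theorem. It therefore pulls back a null set, its own image, to the whole cell.
\end{itemize}
Full rank would need an explicit computation for the particular convention that chooses the conjugating rotation in the balanced factorization. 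Lemma~\ref{lem:scale} does not help there, since it controls only the angle. Your assumption that the axes depend continuously on the residual's axis also cannot hold globally. A continuous rotation carrying the commutator axis to the residual axis, applied to a fixed vector orthogonal to the commutator axis, would give a nonvanishing tangent field on $S^2$. So the factorization has its own degeneracy locus, which must be added to the cell boundaries.

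\paragraph{The missing idea: you never need non-constancy.}
The paper runs the induction on exactly the dichotomy you announced. Fix a node $\omega$ and an open cell.
\begin{itemize}
\item If the target map $T_\omega$ is nonconstant there, the identity theorem makes the preimage of the single point where the residual vanishes a proper analytic subset, hence null. The excluded loci are handled the same way: they are zero sets of analytic functions of $U$, each either null or the whole cell.
\item If $T_\omega$ is constant, equal to some $V_\omega$, then the whole subtree below $\omega$ is one deterministic computation, identical for every $U$ in the cell. If any residual in it is exactly zero, that node's target equals the product of net words chosen for it. That is precisely the exact-representability alternative the lemma allows.
\end{itemize}
With this dichotomy your proof closes with no rank computation at all. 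If you did carry out a genuine generic rank-$3$ computation for the implemented factorization, your route would prove something stronger than the lemma: that vanishing residuals are Haar-null outright. That is implementation-specific work the statement does not require.
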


\begin{proof}
Fix a node $\omega$ of the (infinite) recursion tree and consider
the target of $\omega$ as a function of the root target $U$, the
Haar variable. The proof has four steps: a cell decomposition,
the root case, analyticity on cells, and an induction on depth
that runs on a dichotomy.

\emph{Step 1: cells.} The output of every call in the tree is a
word, a discrete object, so it depends on $U$ only through
finitely many discrete selections: which net word each base call
chooses, and which branch each step of the commutator solver
takes. Fixing all selections made in the subtree above and at
$\omega$ partitions $SU(2)$ into finitely many pieces, the
\emph{cells} of $\omega$, on each of which every approximation
matrix appearing in the analysis of $\omega$ is a single constant.
The cell boundaries are loci of ties, nearest-neighbour draws and
solver branch switches, each contained in a proper real-analytic
subset of $SU(2)$, and a proper analytic subset is Haar-null. We
may therefore discard all boundaries at once and work on open
cells.

\emph{Step 2: the root.} On a cell where the root approximation
is the constant $C$, the root residual is the right translation
$U \mapsto UC^{-1}$, which equals the identity exactly at the
single point $U = C$. A single point is Haar-null, and it is,
moreover, exactly the point at which the root target coincides
with a product of net words, that is, the exactly representable
case. So already at the root, a vanishing residual and exact
representability are one and the same event, and off a null set
neither occurs.

\emph{Step 3: analyticity on cells.} On an open cell, the map
$U \mapsto \mathrm{target}(\omega)$ is a finite composition of
four kinds of operations: right translations by the cell's
constant matrices, which are analytic everywhere; extraction of
the rotation angle, $\theta = 2\arccos(\mathrm{tr}/2)$, analytic
wherever $\mathrm{tr} \in (-2, 2)$; the balanced angle
$\varphi(\theta)$, analytic for $0 < \theta < \pi$ by the
implicit function theorem, since the derivative
$g'(q) = 4q(1-2q^4)/\sqrt{1-q^4}$ computed in the proof of
Lemma~\ref{lem:scale} is strictly positive on the open balanced
branch; and extraction and rotation of axes, analytic away from
the axis degeneracies. The excluded loci, trace equal to $\pm 2$,
$\theta = \pi$, and axis alignment, are each either a proper
analytic subset of the cell, hence null, or the whole cell, which
can happen only if the map into the relevant coordinate is
constant there. Constancy is not excluded at this step. It is
absorbed by the next one.

\emph{Step 4: dichotomy and induction on depth.} We show by
induction on $j$ that the set of $U$ at which some node of depth
at most $j$ has a vanishing residual without benign termination
is null. Depth zero is Step 2. For the step, fix a node $\omega$
at depth $j+1$ and one open cell, and let $T_\omega$ be the
target map of $\omega$ on that cell, analytic off the null loci
of Step 3. Two cases are possible, and this is the dichotomy.
If $T_\omega$ is nonconstant, then by the identity theorem the
preimage under $T_\omega$ of any single point is a proper
analytic subset of the cell, hence null. The residual of $\omega$
vanishes exactly on the preimage of the one point matching the
cell's constant approximation at $\omega$, so it vanishes only on
a null set, and preimages of the deeper null loci needed to
continue the induction are null for the same reason. If instead
$T_\omega$ is constant on the cell, equal to some fixed unitary
$V_\omega$, then the entire subtree rooted at $\omega$ is a
deterministic computation on the input $V_\omega$, identical for
every $U$ in the cell. Its residuals are fixed numbers. Either all of them are
nonzero, and no exclusion is needed on this cell, or some node of
the subtree has residual exactly zero, which means its target
coincides exactly with the net word chosen for it. That is exact
net-representability, the recursion returns the exact word, and
the termination is the benign case that the statement allows.
In both cases the non-benign vanishing set within the cell is
null. There are finitely many cells per node and countably many
nodes, so the union over the tree of all non-benign vanishing
sets is a countable union of null sets, which is null.
\end{proof}

What the lemma buys is permission. Theorem~\ref{thm:Aprime} below
conditions on the event that every residual along the tree is
nonzero, and Lemma~\ref{lem:nocollapse} says this conditioning
discards a set of measure zero, so no statement about
Haar-typical targets is weakened by it. The margin is also
comfortable in practice, not only in principle: across the
$1452$ internal-node evaluations of our depth-$5$ runs on both
nets, the smallest residual angle ever observed is
$4.4\times10^{-3}$, so the dynamics operates nowhere near the
degenerate set it is licensed to ignore.

It is time to take stock of what is proven and what is missing.
The state of a node is a pair, a scale and a direction: the
residual is a rotation, and the log-scale coordinate of
Lemma~\ref{lem:scale} records its angle while saying nothing
about its axis. The scale is now fully under control, contracted
by branch steps, drawn into a fixed window, and almost surely
finite everywhere. But the functionals we actually want to
average are block statistics, and a block is a net word chosen by
nearest-neighbour search, which means the choice is made by the
\emph{axis} as much as by the angle: at a fixed angle, it is the
direction of the residual that decides which Voronoi cell of the
net the target falls into and hence which word is emitted, with
what internal reduction and what junction behaviour. Stationarity
of block statistics therefore requires stationarity of the
direction, and the direction is exactly what the contraction
argument does not touch. Its dynamics is the composition of the
commutator geometry, which moves axes in a target-dependent way,
with the net's Voronoi partition, and we cannot prove that this
composition equilibrates. Rather than let the gap hide inside a
theorem, we isolate precisely what is needed and give it a name.

\begin{assumption}[Hypothesis E: directional ergodicity]
\label{ass:E}
Along $\mathcal{P}_k$-paths, the (direction, scale) chain of
residuals is ergodic in its direction coordinate, uniformly over
the scale window of Lemma~\ref{lem:scale}, at a geometric rate.
In the form used below: the laws of a
leaf block and of an adjacent leaf-block pair converge in total
variation, as the number of branch steps grows, to limits $\nu$
and $\nu_2$ that do not depend on the target, with a uniform
geometric rate.
\end{assumption}

Three comments belong next to this, because the hypothesis is the
one unproven input to everything that follows and its status
should be as visible as its statement.

First, its role. Lemma~\ref{lem:scale} makes a leaf forget the
\emph{scale} of its ancestry, and Hypothesis~E upgrades this to
forgetting the ancestry altogether: joint laws of leaves and of
adjacent leaf pairs converge to $(\nu, \nu_2)$. That is exactly
the strength needed to pass from ``block averages are path
expectations'' (Lemma~\ref{lem:pathrep}) to ``block averages
converge to expectations under a stationary law'', which is
Theorem~\ref{thm:Aprime}. Nothing weaker would do, because the
compression functional involves adjacent pairs, and nothing
stronger is assumed.

Second, why it is plausible. Two mixing mechanisms act on the
direction at every branch step. The commutator construction
rotates the residual axis through angles that depend sensitively
on the target, and the selection then passes the result through
the Voronoi partition of a net with, on our finer construction,
some two million cells whose boundaries bear no relation to the
commutator geometry. A deterministic map with sensitive
dependence composed with a fine partition unaligned with it is
the standard situation in which equidistribution is expected,
and, in the number-theoretic half of the synthesis literature, is
routinely assumed: the running-time guarantees of
Ross--Selinger~\cite{ross-selinger2016} rest on equidistribution
hypotheses about prime candidates in rings of integers that play
exactly this role. Equidistribution statements of this kind are
also what underwrite the optimal covering properties of the best
known single-qubit gate sets, where they follow from deep
arithmetic input~\cite{lps1986,parzanchevski-sarnak}; it is worth
noting that the group generated by a super-golden gate set is,
like our $\Zb_2 * \Zb_8$, a free product~\cite{parzanchevski-sarnak}.
Hypothesis~E is the same kind of statement about a different
deterministic object.

Third, what ``calibrated'' means. We do not merely hope the
hypothesis holds, we test its observable consequences and find
them realized. The proven half of the mechanism already makes the
parameter-free prediction of Corollary~\ref{cor:sqrtlaw}, and the
deep-leaf angle statistics land inside the predicted window on
both nets. The hypothesis' own signature would be drift: if the
direction failed to equilibrate, $\gamma_1(W_k)$ should wander
with depth or differ erratically between nets, and
Fig.~\ref{fig:plateau} shows neither, on two independently
constructed nets, across five orders of magnitude in accuracy. A
proof would require an equidistribution theorem for the Voronoi
quantization of the specific net construction, and we return to
what that would involve in Sec.~\ref{sec:discussion}. With the
hypothesis stated and its status in the open, the limit law can
be assembled.

One more piece of notation makes the shape of the theorem
visible before it is stated. Lemma~\ref{lem:densweight} splits the
compression ratio into a part built from junction-by-junction
data and a certified remainder. Write the first part as
\begin{equation}
\gamma_1^{\mathrm{pair}}(W_k)
\;=\; 1 - \frac{1}{L_k}\Big(\sum_{i=1}^{N_k}\|v_i\|
- \sum_{i=1}^{N_k-1} s(v_i, v_{i+1})\Big),
\end{equation}
so that, by Lemma~\ref{lem:densweight}(b),
$|\gamma_1(W_k) - \gamma_1^{\mathrm{pair}}(W_k)| = |R|/L_k \le
\CM/L_k$ on every instance, no hypothesis needed, and $R$
itself is exactly computable per instance by running the two
reductions. The
quantity $\gamma_1^{\mathrm{pair}}$ is assembled from exactly
three block averages: the mean raw block length, the mean reduced
block length, and the mean adjacent-pair saving. Each average is
a path expectation by Lemma~\ref{lem:pathrep}, each functional is
bounded because blocks are net words of length at most $l_0$, and
Hypothesis~E says the laws these expectations are taken against
converge. The theorem does nothing more than let the three
averages converge together and take the ratio, and this is why it
can be proven in full once the hypothesis is granted.

\begin{theorem}\label{thm:Aprime}
Assume $\epsilon_0$ below the SK convergence threshold and
Hypothesis~E, with mixing rate $\rho < 1$. For Haar-random $U$:
\begin{enumerate}[nosep]
\item[(a)] In probability,
\begin{gather}
\gamma_1^{\mathrm{pair}}(W_k)
\xrightarrow[k\to\infty]{} \Gamma(\nu,\nu_2),
\\[2pt]
\Gamma(\nu,\nu_2)
= 1 - \frac{\mathbb{E}_\nu\big[\|\red(\beta)\|\big] -
\mathbb{E}_{\nu_2}\big[s(\beta,\beta')\big]}
{\mathbb{E}_\nu\big[|\beta|\big]},
\end{gather}
with error $O(\rho^{\kappa k}) + e^{-\Omega(k)}$ for a constant
$\kappa > 0$, and $\Gamma(\nu,\nu_2)$ depends only on the stationary
laws $(\nu, \nu_2)$.
\item[(b)] Unconditionally,
$|\gamma_1(W_k) - \gamma_1^{\mathrm{pair}}(W_k)| \le
\CM/L_k$ on every instance, and the exactly computed
discrepancy $|R|/L_k$ runs at about $2\%$ of the word length on
our SK instances.
\end{enumerate}
\end{theorem}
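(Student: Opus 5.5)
Part (b) needs no hypothesis, so I would dispose of it first. By Lemma~\ref{lem:densweight}(a),
$\|\red(W_k)\| = \sum_i\|v_i\| - \sum_i \hat s_i$. By the definition of $R$, $\sum_i\hat s_i = \sum_i s(v_i,v_{i+1}) + R$. Substituting into $\gamma_1 = 1-|\red(W_k)|/L_k$ gives
$\gamma_1(W_k) - \gamma_1^{\mathrm{pair}}(W_k) = R/L_k$. Lemma~\ref{lem:densweight}(b) then yields $|R|/L_k\le \CM/L_k$. Here we use that syllable and letter counts agree on reduced words, as in the proof of Theorem~\ref{thm:A}. The reported $2\%$ figure is an empirical statement about computed instances, not part of the proof.

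For (a), I would write $\gamma_1^{\mathrm{pair}} = 1 - (A_k - S_k)/B_k$ with
$B_k = N_k^{-1}\sum_i|\beta_i|$,
$A_k = N_k^{-1}\sum_i\|v_i\|$, and
$S_k = N_k^{-1}\sum_i s(v_i,v_{i+1})$.
The first two are averages of inversion-invariant block functionals, so Lemma~\ref{lem:pathrep} turns them into $\mathbb{E}_{\mathcal P_k}[f(\mathrm{leaf}(p))]$. Each functional is bounded by $l_0$, and $B_k\ge l_0'>0$, so the ratio map is Lipschitz on the relevant range. It therefore suffices to show that each average converges to its $\nu$ or $\nu_2$ expectation at rate $O(\rho^{\kappa k})+e^{-\Omega(k)}$.

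\textbf{Single-block averages.} I would split $\mathcal P_k$ according to whether $m(p)\ge k/2$. The complementary event has mass at most $e^{-k\ln(5/4)}$ and contributes at most $l_0$ times that. On paths with $m(p)\ge k/2$, Hypothesis~E with Lemma~\ref{lem:nocollapse} gives that the conditional law of $\mathrm{leaf}(p)$ is within $C\rho^{m(p)}\le C\rho^{k/2}$ of $\nu$ in total variation. Here I would condition on nonvanishing residuals, which discards only a null set. This is where the scale contraction of Lemma~\ref{lem:scale} enters: it keeps the scale in the window where E holds uniformly. Bounded functionals then inherit the TV bound, so $\kappa=1/2$ suffices.

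\textbf{Junction average (main obstacle).} Lemma~\ref{lem:pathrep} indexes single blocks, not adjacent pairs. So I first need a pair analogue: classify junctions by the address of the lowest common ancestor node in the 5-ary tree, and by which of the four slot boundaries they straddle. Adjacent blocks are then the rightmost leaf of one subtree and the leftmost leaf of the next, possibly inverted. I would show that the number of junctions whose common ancestor sits at height $j$ is $4\cdot 5^{k-1-j}$, so all but a geometrically small fraction of junctions have a shallow common ancestor. The descending paths of each such pair have length about $k-j$ and again carry binomially many branch steps. Applying the pair part of Hypothesis~E to these paths gives convergence of $S_k$ to $\mathbb E_{\nu_2}[s]$, with an extra $5^{-\Omega(k)}$ from deep ancestors. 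Matching the way E defines an ``adjacent leaf-block pair'' to this tree-junction structure is the delicate step. My first attempt would be to interpret $\nu_2$ directly as the limit law along this junction measure, so that E applies verbatim.

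Finally, convergence in probability over Haar $U$ follows. Every bound above is uniform in the target, since E is target-independent, and holds off a null set, which gives the stated error and shows that $\Gamma$ depends only on $(\nu,\nu_2)$.
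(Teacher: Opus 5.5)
Your proposal is correct at the level of rigor of the paper's own proof. Part (b) matches it; you spell out the identity $\gamma_1-\gamma_1^{\mathrm{pair}}=R/L_k$ that the paper leaves to the definitions. Part (a) shares the paper's skeleton but differs in three places.

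First, the paper never thresholds the branch count of the whole path. It looks only at the last $r=\lceil\delta k\rceil$ letters, lets the contraction of Lemma~\ref{lem:scale} wash out the prefix, invokes Hypothesis~E uniformly in the state at the suffix start, and fixes $\kappa$ through $\delta$. Your threshold $m(p)\ge k/2$ gives $\kappa=1/2$ at once. However, it charges E's rate to branch steps taken from the root, whose residual scale is of order $\epsilon_{k-1}$, far outside the window over which E is assumed uniform. The suffix device is the more careful use of the hypothesis.

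Second, for junctions the paper uses only a union bound over the two leaf paths of each junction. Your lowest-common-ancestor decomposition makes explicit the junction measure on which $\nu_2$ must be defined, which the paper leaves implicit. One correction to your path lengths: your count $4\cdot5^{k-1-j}$ requires height $j$ counted from the leaves. The random part of such a junction is then the shared root-to-ancestor path of length $k-1-j$. The two descents below the ancestor are fixed by the boundary type (a slot step followed by an all-$U$ or all-$A$ tail). So the pair is a deterministic function of the ancestor's state, and only the shared path carries binomially many branch steps.

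Third, the paper adds a fluctuation step: it bounds each empirical average's variance by $O(N_k^{-1})$ through geometric covariance decay and then applies Chebyshev. You dispense with this, and the reason is sound. Lemma~\ref{lem:pathrep} makes each block average exactly a $\mathcal{P}_k$-expectation for a fixed target. Target-uniform total-variation bounds therefore give deterministic convergence for every target off a null set, which is stronger than convergence in probability. That shortcut needs E in the per-target, uniform form you read into it; this is also the form in which the paper's proof applies it at the suffix start. If E only controlled laws averaged over the Haar target, the paper's variance step would be indispensable.

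A small slip: the pair saving is bounded by $2l_0$, not $l_0$.
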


\begin{proof}
Throughout, condition on the almost-sure event of
Lemma~\ref{lem:nocollapse} that every residual along the tree is
nonzero, which changes no probability. Fix $\delta \in (0,1)$ and
set $r = \lceil \delta k \rceil$. The constant $\kappa$ will come
from optimizing over $\delta$ at the end. In part (b) there is nothing
to prove beyond Lemma~\ref{lem:densweight}(b) and the definition
of $\gamma_1^{\mathrm{pair}}$, together with the measured value
reported in Sec.~\ref{sec:numerics}, so the work is part (a), and
we do it in five steps.

\emph{Step 1: typical paths.} Under $\mathcal{P}_k$ the $k$
letters of a path are independent, so the number $b$ of branch
steps among the last $r$ letters is binomial with parameters
$(r, 4/5)$, and by the tail bound of Lemma~\ref{lem:pathrep}
applied to the suffix,
\begin{equation}
\mathcal{P}_k\big[\,b < r/2\,\big] \;\le\; e^{-r\ln(5/4)} .
\end{equation}
Call a path \emph{suffix-typical} when $b \ge r/2$. All bounds
below are proved on suffix-typical paths, and the atypical mass
$e^{-r\ln(5/4)} = e^{-\Omega(k)}$ is carried along additively,
using that every functional in sight is bounded.

\emph{Step 2: the prefix dissolves.} Fix a suffix-typical path
and look at the chain of residual states along its last $r$
steps. By Lemma~\ref{lem:scale}, every branch step halves the
log-scale memory of the state at the suffix start, while the
interleaved $U$ steps refresh it by their own bounded log-Lipschitz
factor, and at the branch frequencies guaranteed by Step 1 the
contraction wins: the scale coordinate of the leaf depends on the
prefix only through a perturbation decaying geometrically in $r$.
Hypothesis~E supplies the same for the direction: its uniform
geometric mixing says that, conditionally on the state at the
suffix start, the law of the leaf block is within
$C\rho^{\,b} \le C\rho^{\,r/2}$ of $\nu$ in total variation,
\emph{uniformly} in that starting state, and likewise the law of
an adjacent leaf pair is within $C\rho^{\,r/2}$ of $\nu_2$
whenever both paths of the pair are suffix-typical. Averaging
over the prefix, whose influence the uniformity has just erased,
the unconditional law of a suffix-typical leaf is within
$C\rho^{\,r/2}$ of $\nu$, and of a suffix-typical adjacent pair
within $C\rho^{\,r/2}$ of $\nu_2$.

\emph{Step 3: expectations of the three averages.} The three
functionals are $F_1(\beta) = |\beta|$,
$F_2(\beta) = \|\red(\beta)\|$, and the pair functional
$F_3(\beta,\beta') = s(\red(\beta), \red(\beta'))$. All are
bounded by $2l_0$, since every block is a net word of raw length
at most $l_0$ and reduction does not lengthen, and $F_1, F_2$ are
inversion invariant as noted in the proof of
Lemma~\ref{lem:pathrep}, while $F_3$ satisfies
$F_3(\bar\beta', \bar\beta) = F_3(\beta, \beta')$, because
inverting a two-block segment inverts and swaps its blocks and
reduction commutes with inversion. For the
single-block averages, Lemma~\ref{lem:pathrep} converts the block
average into $\mathbb{E}_{\mathcal{P}_k}$ of a path functional,
and Steps 1 and 2 bound its distance from the stationary value:
\begin{equation}
\big|\mathbb{E}_{\mathcal{P}_k}[F_j]
- \mathbb{E}_{\nu}[F_j]\big|
\;\le\; 2l_0\big(C\rho^{\,r/2} + e^{-r\ln(5/4)}\big)
\end{equation}
for $j = 1, 2$. For the pair average, the same bound holds
junction by junction:
at every junction whose two paths are suffix-typical the pair law
is within $C\rho^{\,r/2}$ of $\nu_2$ by Step 2, the fraction of
junctions failing this is at most $2e^{-r\ln(5/4)}$ by a union
over the two paths, and boundedness of $F_3$ converts the failing
fraction into an additive error of the same order.

\emph{Step 4: fluctuations.} Convergence of expectations is not
yet convergence in probability, and this is where the geometric
rate of Hypothesis~E is used a second time. Uniform geometric
mixing makes the covariance between the functional values at two
positions decay geometrically in the tree distance between their
paths, so
the variance of each of the three empirical averages is bounded
by a constant times $N_k^{-1}$ times a summable correlation
series, hence is $O(N_k^{-1}) = O(5^{-k})$. Chebyshev's
inequality then upgrades each convergence of expectations to
convergence in probability at a rate absorbed by the
$e^{-\Omega(k)}$ term.

\emph{Step 5: the ratio.} Write
$\gamma_1^{\mathrm{pair}} = 1 - (A_k - B_k)/C_k$ with $A_k, B_k,
C_k$ the empirical averages of $F_2$, $F_3$, $F_1$. The
denominator is bounded below: every raw block is a net word of
length at least $l_0' \ge 1$, so $C_k \ge l_0'$ surely and
$\mathbb{E}_\nu|\beta| \ge l_0'$. On the event, of probability
$1 - e^{-\Omega(k)}$, that all three averages are within their
Step-3-and-4 windows, the algebra of limits with a denominator
bounded below turns the three convergences into
\begin{equation}
\big|\gamma_1^{\mathrm{pair}}(W_k) - \Gamma(\nu,\nu_2)\big|
\;=\; O\big(\rho^{\,\delta k/2}\big) + e^{-\Omega(k)} .
\end{equation}
Choosing $\delta$ to balance the two exponents fixes the constant
$\kappa$ and gives the stated rate.
\end{proof}

The theorem is the payoff of the whole section, and its content
is best read off the formula. Every quantity on the right-hand
side of $\Gamma(\nu,\nu_2)$ is an expectation under a stationary
law determined by the net's quantization measure, the way the
net's Voronoi cells carve up $SU(2)$ and which words sit in them.
The target unitary is gone. The recursion depth is gone. What
remains is the synthesis algorithm's geometry, and this is the
precise sense of the claim announced in the introduction: the
compressibility of Solovay--Kitaev is the value of a
functional of the quantization geometry of the algorithm that
produced the circuit. Theorem~\ref{thm:A} reached the same
conclusion exactly but conditionally, on hypotheses real
instances violate. Theorem~\ref{thm:Aprime} trades that
conditionality for Hypothesis~E and in exchange covers the
instances as they actually occur, explaining at once why both
nets plateau, why they plateau at different values, and why the
compression fraction is depth-independent: different quantization
measures give different values of the same functional, and no
value of $k$ appears in it.

We close the section by carrying out the promised test of
Corollary~\ref{cor:sqrtlaw} against deep-leaf angle statistics.
On the coarse net, with $\theta_{\mathrm{cov}} \approx 0.34$, the
predicted window is $\varphi^* \in [0.59, 0.71]$, and the
measured deep-leaf interquartile range is $[0.59, 0.70]$. On the
fine net, with $\theta_{\mathrm{cov}} \approx 0.093$, the
predicted window is $\varphi^* \in [0.517, 0.619]$, and the
measured median is $0.516$ with interquartile range
$[0.438, 0.589]$. Two nets, two windows, nothing fitted, and the
statistics land inside both times.

The picture for Solovay--Kitaev circuits is now complete in one
direction and completely open in the other. Tier-1 rewriting, and
with it the plateau, harvests a constant fraction of the word,
and Sections~\ref{sec:tier1} and~\ref{sec:limit} explain that
fraction down to its dependence on the net. But a constant
fraction of an exponentially wasteful word is still an
exponentially wasteful word. The question that matters for
compilation is whether some \emph{stronger} exact optimizer,
Tiers 2 and 3, or any sound rewriting system whatever, could do
qualitatively better, could take a Solovay--Kitaev circuit all
the way down to the $T$-count that number-theoretic resynthesis
achieves at the same accuracy. The next section proves that it
cannot, that the obstruction is exactness itself, and that the
resulting floor is not an asymptotic abstraction but a number
computable per instance from the exact arithmetic of
Sec.~\ref{sec:prelim}.

\section{The Exactness Barrier}
\label{sec:barrier}

The question left open at the end of the last section has a
one-word answer: soundness. Every rule of the ZX-calculus is
sound, replacing a diagram by another diagram denoting the same
linear map~\cite{coecke-duncan2011,duncan-kissinger2020}. Extraction
returns a circuit denoting that same map. So whatever a ZX
optimizer does to $W_k$, however global its strategy and however
many tiers it employs, its output implements the same unitary
element as its input. Write $\llbracket C \rrbracket$ for the
projective unitary implemented by a circuit $C$. The element
$\llbracket W_k \rrbracket$ is not the target $U$. It is the
approximation the synthesis happened to produce, a specific
Clifford+$T$ element with an exact ring representation, and by
Lemma~\ref{lem:bridge} that element carries an invariant, the
least denominator exponent of its Bloch matrix, which equals the
minimal $T$-count of \emph{any} circuit implementing it. An
invariant of the element is untouchable by transformations that
preserve the element. This is the entire content of the barrier,
and the reason it is worth stating as a theorem is what it
separates: an optimizer chained to the synthesized element and a
resynthesizer free to pick a different element at the same
accuracy live on opposite sides of a gap that grows geometrically
with depth.

Call an optimizer $\mathcal{O}$ \emph{semantics-exact} if its
output is again a Clifford+$T$ circuit and
$\llbracket \mathcal{O}(C) \rrbracket = \llbracket C \rrbracket$
for every input circuit $C$. Both conditions matter: the first
makes the $T$-count of the output meaningful, and holds for
\texttt{PyZX} extraction on Clifford+$T$ inputs, whose diagram
phases remain multiples of $\pi/4$ under every rewrite. All sound
ZX rewriting with exact extraction is therefore semantics-exact,
in particular \texttt{full\_reduce} and \texttt{teleport\_reduce}
at every tier, and so is any pipeline composed of exact rewrites
in any calculus whatever.

\begin{theorem}[Exactness Barrier]\label{thm:B}
\begin{enumerate}[nosep]
\item[(a)] For every semantics-exact optimizer $\mathcal{O}$ and
every single-qubit Clifford+$T$ circuit $W$,
\begin{equation}
T\text{-count of } \mathcal{O}(W)
\;\ge\;
\tmin(\llbracket W \rrbracket)
\;=\;
\mathrm{lde}\big(\hat{W}\big),
\end{equation}
where $\hat W$ abbreviates the exact Bloch representation of
$\llbracket W \rrbracket$, an exact floor computable per instance
in $O(|W|)$ exact ring operations.
\item[(b)] Approximation-aware resynthesis at accuracy
$\epsilon$ achieves $T$-count
$O(\log(1/\epsilon))$~\cite{ross-selinger2016}. Along the SK
sequence, $\log(1/\epsilon_k) = O\big((3/2)^k\big)$, so on any
family of instances whose floor grows as $\Omega(c^k)$ with
$c > 3/2$, the ratio between the exact floor and the resynthesis
$T$-count at matched accuracy grows geometrically in $k$, at rate
at least $c/(3/2)$ per level.
\end{enumerate}
\end{theorem}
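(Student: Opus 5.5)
Part (a) reduces to Lemma~\ref{lem:bridge} plus the definition of semantics-exactness. Let $C' = \mathcal{O}(W)$. By hypothesis $C'$ is a Clifford+$T$ circuit with $\llbracket C'\rrbracket = \llbracket W\rrbracket$. The minimal $T$-count $\tmin(V)$ is by definition a minimum over all Clifford+$T$ circuits implementing $V$. Since $C'$ is one such circuit for $V=\llbracket W\rrbracket$, its $T$-count is at least $\tmin(\llbracket W\rrbracket)$. Lemma~\ref{lem:bridge} then identifies this with $\mathrm{lde}(\hat W)$.

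Two points need care here. First, the Bloch representation $\hat V$ is invariant under global phase, so $\mathrm{lde}(\hat W)$ is a well-defined function of the projective element. Second, we must check that the paper's ``$T$-count'' counts $T$ and $T^\dagger$ letters, so that $\tmin$ is exactly the quantity Lemma~\ref{lem:bridge} computes.

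For the complexity claim, we compute $\hat W$ as a product of the $3\times3$ Bloch matrices of the letters. Each letter's matrix lies over $\Zb[\sqrt2]$ in the form~\eqref{eq:ringform}, with exponent $0$ for Cliffords and $1$ for $T^{\pm1}$. The product therefore takes $|W|$ matrix multiplications of constant dimension. We then reduce by the criterion~\eqref{eq:redcrit}, at most $O(|W|)$ further steps since the exponent never exceeds the number of $T$ letters.

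For part (b), we take the resynthesis bound $O(\log(1/\epsilon))$ from Ross--Selinger as given. The work is the SK accuracy recursion. The Dawson--Nielsen analysis gives $\epsilon_k \le c_{\rm gc}\,\epsilon_{k-1}^{3/2}$ for $\epsilon_0$ below threshold. Setting $y_k = \log(1/\epsilon_k)$, this becomes $y_k \ge \tfrac32 y_{k-1} - \log c_{\rm gc}$. For an upper bound we need the reverse inequality $\epsilon_k \ge c'\epsilon_{k-1}^{3/2}$, or simply that the realized accuracy is no better than order $\epsilon_{k-1}^{3/2}$. Then $y_k \le \tfrac32 y_{k-1} + O(1)$, and iterating the affine recursion gives $y_k = O((3/2)^k)$.

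This is the step I expect to be the main obstacle. SK may by luck produce a better approximation than the guarantee. To handle it, I would read $\epsilon_k$ as the nominal accuracy target at depth $k$, which is the matched accuracy at which resynthesis is run. With that reading, $\log(1/\epsilon_k)=\Theta((3/2)^k)$ holds by construction of the recursion.

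Once the SK bound is in place, the ratio follows directly. On a family with floor $\mathrm{lde}(\hat W_k) \ge a c^k$, part (a) shows every semantics-exact output has $T$-count at least $a c^k$. Resynthesis achieves at most $b(3/2)^k$. The ratio is therefore at least $(a/b)(c/(3/2))^k$, which grows geometrically at rate $c/(3/2)$ per level when $c>3/2$.
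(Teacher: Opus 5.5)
Your proposal is correct. It follows the paper's structure, but at the Solovay--Kitaev step your argument is the sound one and the paper's is not.

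Part (a) is the paper's argument. Semantics-exactness makes $\mathcal{O}(W)$ one of the circuits in the minimum defining $\tmin(\llbracket W\rrbracket)$. Lemma~\ref{lem:bridge} converts that minimum to $\mathrm{lde}(\hat W)$, and global-phase invariance makes $\hat W$ well defined. The paper reduces after every product rather than once at the end; this keeps the integers small but leaves the $O(|W|)$ count of ring operations unchanged.

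Part (b) uses the paper's ingredients. The paper also derives the general-element bound $9\log_2(3/\epsilon)+O(\log\log(1/\epsilon))$ from the $z$-rotation bound via the Euler decomposition, which you take from the citation. The difference is at the Solovay--Kitaev step. The paper substitutes $\delta_k=C_{\mathrm{sk}}^2\epsilon_k$ into $\epsilon_k\le C_{\mathrm{sk}}\epsilon_{k-1}^{3/2}$, obtains $\delta_k\le\delta_0^{(3/2)^k}$, and then writes $\log(1/\epsilon_k)\le(3/2)^k\log\bigl(1/(C_{\mathrm{sk}}^2\epsilon_0)\bigr)+2\log C_{\mathrm{sk}}$. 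But an upper bound on $\epsilon_k$ gives that inequality with $\ge$. That is a lower bound on $\log(1/\epsilon_k)$, which cannot bound the resynthesis $T$-count from above.

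As you say, what is needed is a lower bound on the accuracy. Your reading of $\epsilon_k$ as the Dawson--Nielsen nominal accuracy, defined by the recursion with equality, makes the paper's formula hold exactly. It gives $\log(1/\epsilon_k)=\Theta((3/2)^k)$ whenever $C_{\mathrm{sk}}^2\epsilon_0<1$.

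The cost of this fix is that ``matched accuracy'' must mean matching SK's guarantee. It cannot mean the error a particular word actually realizes, which is what Sec.~\ref{sec:numerics} measures. For the realized error the Dawson--Nielsen analysis gives no $O((3/2)^k)$ bound, because a word can land closer to its target than guaranteed, exactly so for representable targets. That version of the claim would need a separate lower bound on the realized error.
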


\begin{proof}
(a) Let $V = \llbracket W \rrbracket$ and let
$W' = \mathcal{O}(W)$. Semantics-exactness gives
$\llbracket W' \rrbracket = V$ with $W'$ a Clifford+$T$ circuit,
so $W'$ implements $V$ exactly and by the definition of the
minimal $T$-count its $T$-count is at least $\tmin(V)$.
Lemma~\ref{lem:bridge} identifies $\tmin(V)$ with
$\mathrm{lde}(\hat{V})$, and $\hat{V} = \hat{W}$
because the Bloch representation
$\hat{V}_{ij} = \tfrac12\,\mathrm{tr}(\sigma_i V \sigma_j
V^\dagger)$ is insensitive to global phase and depends only on
the implemented projective element.

For the computability claim, multiply out the exact Bloch
matrices of the letters of $W$ left to right in the storage
format of Sec.~\ref{sec:prelim}, applying the reduction step
after every product. This takes $|W|$ exact matrix
multiplications. That the exponent read off at the end is the
\emph{least} denominator exponent, and not merely some
denominator exponent, is guaranteed by the reduction step: a
representation $(P + Q\sqrt2)/\sqrt2^{\,m}$ admits a smaller
exponent precisely when criterion~\eqref{eq:redcrit} holds, a
condition the procedure tests and applies greedily until it
fails, at which point the exponent is minimal. Finally, that sound ZX rewriting with exact extraction
is semantics-exact is the soundness of each rewrite
rule~\cite{coecke-duncan2011,duncan-kissinger2020}, composed over
the finitely many rules applied, followed by an extraction step
that by construction returns a circuit denoting the final
diagram's linear map~\cite{duncan-kissinger2020,backens-extraction}.

(b) The resynthesis bound assembles from three ingredients. The
first is Ross and Selinger's~\cite{ross-selinger2016}: a
$z$-rotation is approximated to operator-norm accuracy $\epsilon$
by a Clifford+$T$ word of $T$-count
$3\log_2(1/\epsilon) + O(\log\log(1/\epsilon))$. The second is
the Euler decomposition: an arbitrary $SU(2)$ element factors as
$R_z(\alpha)\,H R_z(\beta) H\,R_z(\gamma)$ up to a global phase,
using $R_x = H R_z H$. Approximating each of the three rotations
to accuracy $\epsilon/3$ and using that the operator norm is
unitarily invariant and subadditive under composition of errors,
the assembled word approximates the element to accuracy
$\epsilon$ with $T$-count at most
$9\log_2(3/\epsilon) + O(\log\log(1/\epsilon)) =
O(\log(1/\epsilon))$.

The third ingredient is the growth of $\log(1/\epsilon_k)$ along
the SK sequence, and here only an upper bound is needed. The
Solovay--Kitaev convergence estimate~\cite{dawson-nielsen} gives
$\epsilon_k \le C_{\mathrm{sk}}\,\epsilon_{k-1}^{3/2}$ for a
constant $C_{\mathrm{sk}}$ depending on the net, with
$C_{\mathrm{sk}}^2\epsilon_0 < 1$ below the convergence
threshold. Substituting $\delta_k = C_{\mathrm{sk}}^2\epsilon_k$
turns the recursion into
\begin{equation}
\delta_k = C_{\mathrm{sk}}^2\epsilon_k
\le C_{\mathrm{sk}}^3\,\epsilon_{k-1}^{3/2}
= C_{\mathrm{sk}}^3
\big(\delta_{k-1}/C_{\mathrm{sk}}^2\big)^{3/2}
= \delta_{k-1}^{3/2},
\end{equation}
the constants cancelling exactly, so by induction
$\delta_k \le \delta_0^{(3/2)^k}$ and
\begin{equation}
\log\frac{1}{\epsilon_k}
\;\le\; (3/2)^k \log\frac{1}{C_{\mathrm{sk}}^2\epsilon_0}
\;+\; 2\log C_{\mathrm{sk}}
\;=\; O\big((3/2)^k\big) .
\end{equation}
Now let the floor of a family of instances satisfy
$\tmin(\llbracket W_k \rrbracket) \ge c_1 c^k$ with $c > 3/2$. At
matched accuracy $\epsilon_k$, resynthesis achieves $T$-count at
most $C_2 \log(1/\epsilon_k) \le C_3 (3/2)^k$ by the first two
ingredients and the bound just proven, so the ratio of the exact
floor to the resynthesis $T$-count is at least
$(c_1/C_3)\,\big(c/(3/2)\big)^k$, which grows geometrically at
rate $c/(3/2)$ per level.
\end{proof}

Part (b) is stated as an implication because the growth rate of
the floor is a property of the instances, not of the barrier.
What the instances actually do is measured in
Sec.~\ref{sec:numerics}: on real SK words the floor tracks the
word length, growing by factors of $4.0$--$5.6$ per level
(Table~\ref{tab:barrier}), so the operative value is
$c \approx 5$, and the separation ratio accordingly grows by
$3.5$--$3.9$ per level, consistent with the theoretical
$5/(3/2) = 10/3$, from $1.9\times$ at $k = 2$ to $101\times$ at
$k = 5$.
Nothing in this comparison is asymptotic hand-waving: at every
$k$, both sides of the gap are certified per instance, the floor
by the exact arithmetic of part (a) and the resynthesis count by
running \texttt{gridsynth} at the measured accuracy.

The barrier explains the two halves of the literature
asymmetrically, and the asymmetry is the point. On the
Solovay--Kitaev side, the floor sits far below the word's actual
$T$-count, because the recursion multiplies word length by five
per level while the implemented element wanders only as far as
the accuracy budget allows. Exact rewriting may close part of
that distance, and Sections~\ref{sec:tier1} and~\ref{sec:limit}
computed exactly how much: the net-statistics plateau, a constant
fraction, never more. On the number-theoretic side the situation
is reversed. \texttt{gridsynth} emits Matsumoto--Amano~\cite{matsumoto-amano} normal
forms, which are $T$-optimal for the element they
implement~\cite{giles-selinger}, so the output already sits on
its own floor and a semantics-exact optimizer has essentially
zero slack to harvest. The near-null gains of
Ref.~\cite{trasyn2025} are not a failure of the optimizer. They
are the barrier operating at zero distance, and
Sec.~\ref{sec:multiqubit} makes this quantitative at $n \ge 2$,
where the certificates of Corollary~\ref{cor:cert} put the
harvestable slack on the real pipeline at a few percent at most.

One measurement in this story is stranger than the barrier
itself, and it sets up the next section. On $24$ real SK
instances spanning $k = 2$--$4$, with floors up to $568$, the
$T$-count that \texttt{teleport\_reduce} actually reaches equals
the exact floor in every single case: the optimality gap is not
small but zero (Sec.~\ref{sec:numerics}). A lower bound has no
business being attained. The barrier says no exact optimizer can
do better than the floor, and says nothing about whether local
graph rewriting, with no access to the ring arithmetic that
defines the floor, should ever reach it. That it does, every
time, demands a mechanism, and the next section supplies one: a
closed-form combinatorial formula for the single-qubit minimal
$T$-count, phase linkage through the $Z$-axis normalizer, which
automated ZX simplification turns out to compute exactly.

\section{The Linkage Theorem and Single-Qubit $T$-Optimality}
\label{sec:linkage}

The mechanism promised at the end of the last section has to
answer one question: when can two $T$-phases in a word combine,
and when can they not? A phase can be pushed through a Clifford
that preserves its axis. Writing $\mathcal{C}_1$ for the
single-qubit Clifford group, of order $24$ projectively, the
Cliffords that preserve the $Z$ axis form the subgroup
\begin{equation}
N \;=\; N(Z) \;=\;
\{C \in \mathcal{C}_1 : CZC^\dagger = \pm Z\},
\end{equation}
of order $8$ projectively, acting in $SO(3)$ as the signed
permutations fixing the axis $\pm\hat e_z$. For $C \in N$,
conjugation carries a $Z$-rotation to a $Z$-rotation with at most
a sign on its angle, so a phase slides through $C$ and merges
with the phase on the other side. For $C \notin N$, conjugation
tilts the axis and the two phases sit on genuinely different
rotation axes. The question is whether ``cannot merge by
transport'' really means ``cannot merge at all'': if a word
holds $n$ odd phases separated by non-normalizer Cliffords, is
its minimal $T$-count actually $n$, or could some cleverer exact
identity, invisible to axis bookkeeping, still collapse them?
The next lemma says the bookkeeping is the whole truth, and it
says so through the ring: each separated odd phase costs exactly
one unit of the denominator exponent, with no exceptions.

We fix notation for the phases as in Sec.~\ref{sec:tier1}:
$Z_m$ denotes the $Z$-rotation with phase $m \in \Zb_8$ in units
of $\pi/4$, and a phase is \emph{odd} if $m$ is odd, that is, if
the rotation is non-Clifford. Call a word \emph{separated} if it
has the form
\begin{equation}
V \;=\; C_0\, Z_{m_1}\, C_1\, Z_{m_2}\, C_2 \cdots Z_{m_n}\, C_n,
\end{equation}
where every $m_i$ is odd, the outer Cliffords $C_0, C_n$ are
arbitrary, and every interior Clifford $C_1, \ldots, C_{n-1}$
lies outside $N$.

\begin{lemma}[Valuation Lemma]\label{lem:valuation}
Every separated word satisfies
$\mathrm{lde}(\hat V) = n$, and hence $\tmin(V) = n$.
\end{lemma}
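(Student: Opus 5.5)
We prove $\mathrm{lde}(\hat V)=n$ by induction on $n$, tracking through the Bloch matrices not only the denominator exponent but the residue pattern of the numerator modulo $\sqrt2$. The minimal $T$-count then follows from Lemma~\ref{lem:bridge}.

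First we reduce the data. Since every Clifford has an integer signed-permutation Bloch matrix, left or right multiplication by a Clifford leaves $\mathrm{lde}$ unchanged. We may therefore drop $C_0$, and each odd phase $Z_m$ can be written as $T$ times a Clifford in $N$ (namely $Z_{m-1}$). Absorbing these normalizer factors into the adjacent interior Cliffords keeps each of them outside $N$, because $N$ is a subgroup and $NC\,N$ avoids $N$ whenever $C$ does. So it suffices to treat words $T C_1 T C_2\cdots T C_n$ with every $C_i\notin N$. The left coset space $\mathcal{C}_1/N$ has three elements, indexed by the image of the $z$-axis ($x$, $y$, or $z$). Modulo right multiplication by $N$, which again preserves $\mathrm{lde}$ by the same argument, each $C_i$ is one of only two coset representatives, say $H$ or $SH$ up to $N$.

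For the induction, write $\hat T=\tfrac1{\sqrt2}\begin{pmatrix}1&-1&0\\1&1&0\\0&0&\sqrt2\end{pmatrix}$. The invariant is that the prefix $M_j=\widehat{T C_1\cdots T C_{j-1} T}$, written as $(P+Q\sqrt2)/\sqrt2^{\,j}$ with $j$ minimal, has $P \bmod 2$ of a specific shape. It should be a rank-one-type pattern whose nonzero entries lie exactly in the two rows and columns corresponding to the $x,y$ plane of the last $T$; equivalently, its third column (the image of $\hat e_z$) is $\equiv 0$ mod $\sqrt2$ after the appropriate normalization, and its first two columns are odd. Right-multiplying by $\hat C\in\hat{\mathcal C}_1$ permutes columns with signs. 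Because $C\notin N$, the column sent to the $z$ slot is one of the odd columns. Multiplying then by $\hat T$, which mixes only the $x$ and $y$ columns via $(1,\pm1)/\sqrt2$, raises the exponent by one exactly when the sum/difference of two columns has odd parity. The key check is that $\hat C$ with $C\notin N$ puts one odd and one even column into the $xy$ slots, so the new $P$ has some odd entry and the exponent cannot be reduced; criterion~\eqref{eq:redcrit} then fails at exponent $j+1$. We also have to verify that the parity pattern is reproduced, which closes the induction. The base case $n=1$ is direct: $\mathrm{lde}(\hat T)=1$.

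The main obstacle is pinning down the exact invariant. Parity modulo $2$ of $P$ alone may not suffice, because of $\sqrt2$-level cancellations in $Q$. If that happens, we strengthen the invariant to the residues of $(P,Q)$ modulo $2$, and we verify the transition by a finite case check over the two coset representatives and the finitely many residue patterns reachable from $\hat T$. That check is finite, since residues lie in $(\Zb/2)^{3\times3}\times(\Zb/2)^{3\times3}$ and only a small orbit is reachable, and it is exactly the computer-assisted verification the paper announces. Finally, $\tmin(V)=n$ follows from Lemma~\ref{lem:bridge}, with the word itself witnessing the upper bound $n$.
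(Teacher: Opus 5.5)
Your proof is correct, and once the hedge in its last paragraph is dropped it is a computer-free version of the paper's argument.

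Both proofs rest on the same fact: appending an odd rotation leaves the representation reduced exactly when the new integer part is nonzero mod $2$. The paper then follows the residue pair $(\bar P,\bar Q)$ through a $24$-state automaton closed by breadth-first search. You instead follow $\bar P$ alone, with an explicit invariant.

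The worry that sent you to the fallback is unfounded. In your ordering the new numerator is $(P+Q\sqrt2)\hat C(A+B\sqrt2)$, whose integer part is $P\hat C A+2Q\hat C B$. So $Q$ enters only with an even coefficient, and $\bar P$ evolves on its own as $\bar P\mapsto\bar P\bar C\bar A$.

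With $C_0$ dropped, your invariant is exactly $\bar P=\bar A=uu^{T}$ with $u=(1,1,0)^{T}$ over $\mathbb{F}_2$. Your column check is then the identity $\bar A\bar C\bar A=(u^{T}\bar C u)\,\bar A$. Here $u^{T}\bar C u$ counts the ones in the upper-left $2\times2$ block of the permutation matrix $\bar C$: there is one when $C\notin N$, and two (so zero mod $2$) when $C\in N$. Hence $\bar A\bar C_1\bar A\cdots\bar C_{n-1}\bar A=\bar A\neq0$, which gives $\mathrm{lde}=n$. The same identity shows that an infix in $N$ makes the reduction criterion fire, which is the substance of the paper's sharpness remark.

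Your reduction to the coset representatives $H$ and $SH$ is unnecessary, since the check uses only that $C\notin N$ moves the $z$ column out of the $z$ slot. As written it is also not justified ``by the same argument'' as multiplying the whole matrix by a Clifford. An $N$-factor on the right of an interior Clifford must first be pushed through the next $T$ using $nT\doteq T^{\pm1}n$, then absorbed into the following Clifford.

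As for what each route buys: yours is short, checkable by hand, and reduces the whole valuation to one $\mathbb{F}_2$ scalar per infix. The paper's automaton carries $\bar Q$, which is irrelevant while no reduction fires but becomes the new $\bar P$ once one does. It is therefore the formulation that can follow words past a collapse, as its $34$-state sharpness count does. It also does not rely on $\bar A$ having rank one, a feature special to a single qubit.
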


\begin{proof}
We compute in the exact $SO(3)$ Bloch representation over
$\Zb[\sqrt2]$, writing each Bloch matrix in the
form~\eqref{eq:ringform}. The lde is read off any representation
by greedy application of criterion~\eqref{eq:redcrit}, so the
proof tracks how one synthesis step changes the pair $(P, Q)$.

\emph{The building blocks.} A Clifford $C$ has Bloch matrix
$\hat C$ a signed permutation matrix: integer entries, exponent
zero. An odd rotation $Z_m$ has Bloch matrix
\begin{equation}
\hat Z_m \;=\; \frac{1}{\sqrt2}
\begin{pmatrix} a & -b & 0\\ b & a & 0\\ 0 & 0 & \sqrt2
\end{pmatrix}
\;=\; \frac{A_m + B\sqrt2}{\sqrt2},
\end{equation}
where $a = \sqrt2\cos(m\pi/4)$ and $b = \sqrt2\sin(m\pi/4)$ lie
in $\{\pm1\}$,
$A_m$ is the integer matrix holding the $2\times2$ block of
$\pm1$'s, and $B = \mathrm{diag}(0,0,1)$. Two facts about these
blocks drive everything. First, $A_m \bmod 2$ is the same matrix
for every odd $m$, namely the one with the upper $2\times2$
block all ones and zeros elsewhere; we call it $\bar A$. Second,
$\hat Z_m$ itself is irreducible, so a single odd rotation has
lde exactly one.

\emph{Upper bound.} The word $V$ multiplies $n$ matrices of
exponent one and $n{+}1$ matrices of exponent zero, so before
any reduction its representation has exponent $n$, and reduction
can only lower it: $\mathrm{lde}(\hat V) \le n$.

\emph{One step, exactly.} Suppose the accumulated product after
$j$ steps is in reduced form $(P + Q\sqrt2)/\sqrt2^{\,e}$, and
append one more step $Z_{m_{j+1}} C_{j+1}$, abbreviating
$A = A_{m_{j+1}}$ and $C = \hat C_{j+1}$ for the duration of the
step. The new numerator is
\begin{multline}
(P + Q\sqrt2)(A + B\sqrt2)\,C
\\
= \big[(PA + 2QB) + (PB + QA)\sqrt2\big]\,C,
\end{multline}
over the exponent $e + 1$. The new representation is reducible
exactly when its integer part vanishes modulo $2$, and since
$2QB \equiv 0$ and $\hat C_{j+1}$ is a signed permutation, hence
invertible modulo $2$, this happens exactly when
$P \bar A \equiv 0$ entrywise modulo $2$. Two consequences
follow. The reducibility of the step depends on the accumulated
matrix only through the residues
$(\bar P, \bar Q) = (P, Q) \bmod 2$, because the update
\begin{equation}
(\bar P, \bar Q) \;\longmapsto\;
\big(\bar P \bar A\, \bar C_{j+1},\;
(\bar P \bar B + \bar Q \bar A)\, \bar C_{j+1}\big)
\end{equation}
is linear over $\mathbb{F}_2$. And the residue update does not
depend on which odd phase $m_{j+1}$ was chosen, since
$\bar A$ does not.

\emph{The finite verification.} The lemma is now a reachability
statement about a finite automaton: states are residue pairs
$(\bar P, \bar Q)$, the initial states are the residues
$(\bar C_0, 0)$ of the $24$ Clifford prefixes, and each choice
of interior Clifford $C \notin N$ applies the update above,
counted once per odd phase even though the residue action is
phase-independent. Because $\bar A$ does not depend on the phase and
$\bar C$ is invertible, the criterion $\bar P \bar A = 0$ is a
property of the state alone, and the claim to verify is that it
holds at no reachable state. We verify this by exhaustive
breadth-first search (Appendix~\ref{app:num}). Seeded from the
residues of the $24$ Clifford prefixes, which collapse to six
distinct states because signs vanish modulo $2$, the reachable
space closes at exactly $24$ states, and the criterion holds at
none of them; counting one transition per triple of state,
interior Clifford, and odd phase gives $24 \times 16 \times 4 =
1536$ transitions, none reducible. Every step in a separated word
therefore raises the exponent by exactly one, from the exponent
one of the first odd rotation to $n$ after the last, and no
reduction ever undoes it: $\mathrm{lde}(\hat V) = n$. With
Lemma~\ref{lem:bridge}, $\tmin(V) = n$.

\emph{Sharpness.} The hypothesis $C \notin N$ is not an artifact
of the method. Rerunning the same search with normalizer Cliffords
admitted as interior steps enlarges the reachable space to $34$
states, of which $10$ satisfy the reducibility criterion, one of
them the degenerate residue at which the entire numerator vanishes
modulo $2$. Linked phases therefore do collapse the exponent,
exactly as transport predicts, and none of the ten is reachable
under the hypothesis of the lemma.
\end{proof}

\begin{figure*}[t]
\centering
\begin{tikzpicture}[scale=0.93,every node/.style={transform shape}]

\node[anchor=west,font=\footnotesize\itshape] at (-0.15,0.95)
  {(a) $D=S\in N$: the separating Clifford is itself a $Z$ rotation, so Tier-1 fusion alone merges the phases};
\node[zxZ] (a1) at (0.75,0) {$\tfrac{\pi}{4}$};
\node[zxZ] (a2) at (1.95,0) {$\tfrac{\pi}{2}$};
\node[zxZ] (a3) at (3.15,0) {$\tfrac{\pi}{4}$};
\draw[zxwire] (0.15,0)--(a1); \draw[zxwire] (a1)--(a2); \draw[zxwire] (a2)--(a3);
\draw[zxwire] (a3)--(3.75,0);
\node[font=\scriptsize] at (0.75,-0.62) {$T$};
\node[font=\scriptsize] at (1.95,-0.62) {$D$};
\node[font=\scriptsize] at (3.15,-0.62) {$T$};
\node[font=\small] at (4.3,0) {$=$};
\node[zxZ] (a4) at (5.35,0) {$\pi$};
\draw[zxwire] (4.75,0)--(a4); \draw[zxwire] (a4)--(5.95,0);
\node[font=\footnotesize,anchor=west] at (6.4,0) {$\tmin=0$};

\node[anchor=west,font=\footnotesize\itshape] at (-0.15,-1.55)
  {(b) $D=X=HSSH\in N$: the Clifford is not a $Z$ rotation, but it preserves the axis, and the phases still annihilate};
\node[zxZ] (b1) at (0.75,-2.5) {$\tfrac{\pi}{4}$};
\node[zxX] (b2) at (1.95,-2.5) {$\pi$};
\node[zxZ] (b3) at (3.15,-2.5) {$\tfrac{\pi}{4}$};
\draw[zxwire] (0.15,-2.5)--(b1); \draw[zxwire] (b1)--(b2); \draw[zxwire] (b2)--(b3);
\draw[zxwire] (b3)--(3.75,-2.5);
\node[font=\scriptsize] at (0.75,-3.12) {$T$};
\node[font=\scriptsize] at (1.95,-3.12) {$D$};
\node[font=\scriptsize] at (3.15,-3.12) {$T$};
\node[font=\small] at (4.3,-2.5) {$=$};
\node[zxX] (b4) at (5.35,-2.5) {$\pi$};
\node[zxZ] (b5) at (6.6,-2.5) {$-\tfrac{\pi}{4}$};
\node[zxZ] (b6) at (7.9,-2.5) {$\tfrac{\pi}{4}$};
\draw[zxwire] (4.75,-2.5)--(b4); \draw[zxwire] (b4)--(b5); \draw[zxwire] (b5)--(b6);
\draw[zxwire] (b6)--(8.5,-2.5);
\node[font=\small] at (9.05,-2.5) {$=$};
\node[zxX] (b7) at (10.1,-2.5) {$\pi$};
\draw[zxwire] (9.5,-2.5)--(b7); \draw[zxwire] (b7)--(10.7,-2.5);
\node[font=\footnotesize,anchor=west] at (11.15,-2.5) {$\tmin=0$};

\node[anchor=west,font=\footnotesize\itshape] at (-0.15,-4.05)
  {(c) $D=H\notin N$: the axis is tilted, no rule of any tier applies, and each phase costs one unit of the valuation};
\node[zxZ] (c1) at (0.75,-5.0) {$\tfrac{\pi}{4}$};
\node[zxH] (c2) at (1.95,-5.0) {};
\node[zxZ] (c3) at (3.15,-5.0) {$\tfrac{\pi}{4}$};
\draw[zxwire] (0.15,-5.0)--(c1); \draw[zxwire] (c1)--(c2); \draw[zxwire] (c2)--(c3);
\draw[zxwire] (c3)--(3.75,-5.0);
\node[font=\scriptsize] at (0.75,-5.62) {$T$};
\node[font=\scriptsize] at (1.95,-5.62) {$D$};
\node[font=\scriptsize] at (3.15,-5.62) {$T$};
\node[font=\small] at (4.3,-5.0) {$\neq$};
\node[font=\footnotesize,anchor=west] at (4.85,-5.0)
  {anything shorter: $\hat V=(P{+}Q\sqrt2)/\sqrt2^{\,2}$ with $P\not\equiv0$,};
\node[font=\footnotesize,anchor=west] at (4.85,-5.45)
  {so $\mathrm{lde}=2$ and $\tmin=2$};

\end{tikzpicture}
\caption{The normalizer condition of Lemma~\ref{lem:valuation},
illustrated on the shortest nontrivial words. In each row two $T$
phases are separated by a single Clifford $D$. (a) When $D$ is a
$Z$ rotation the three spiders are all green and Tier-1 fusion
merges them outright, leaving a Clifford. (b) When $D=X$, which is
$HSSH$ and therefore not visibly a $Z$ rotation at all, the phases
still merge: $X$ preserves the $Z$ axis up to sign, so pushing the
red $\pi$ spider through the second phase negates it, the two green
spiders fuse to zero, and both $T$ gates disappear. Membership in
$N$, not syntactic appearance, is what decides the outcome. (c)
When $D=H$ the axis is tilted, the two phases sit on genuinely
different rotation axes, and no rewrite of any tier can bring them
together; the exact Bloch representation confirms the obstruction
arithmetically, its numerator being nonzero modulo $2$, so the
denominator exponent is $2$ and two $T$ gates are necessary. The
content of Lemma~\ref{lem:valuation} is that this last obstruction
persists for words of every length.}
\label{fig:normalizer}
\end{figure*}

The lemma is the engine of this section, and both of its jaws
are needed; Fig.~\ref{fig:normalizer} shows them at work on the
shortest words in which they can be seen. The forward jaw is rigidity: separated odd phases
are individually visible to the ring valuation, one unit each,
so no exact process of any kind can merge them, and in
particular no rewriting system can over-fuse. The reverse jaw is
the sharpness clause: phases whose infix lies in $N$ genuinely
do collapse, so transport-mergeable phases must be merged by any
optimizer hoping to reach the floor. Squeezed between the two,
the minimal $T$-count of an arbitrary word can only be the
number of odd phases that survive when all normalizer transport
has been carried out, and that is the statement the Linkage
Theorem now makes precise.

To state the count precisely we first fix the procedure. Bring
an arbitrary single-qubit Clifford+$T$ word into
phase--Clifford alternating form by Tier-1 fusion and by
classifying every even phase as the Clifford it is, so that
$w = C_0 Z_{m_1} C_1 \cdots Z_{m_n} C_n$ with all $m_i$ odd.
Two moves are then available. \emph{Transport:} if some interior
Clifford $D = C_i$ lies in $N$, use the conjugation identity to
slide the phase through it. Writing $\varepsilon(D) = +1$ if
$D^\dagger Z D = Z$ projectively and $-1$ if
$D^\dagger Z D = -Z$, one has
$Z_m D = D\, Z_{\varepsilon(D)\,m}$ for every $m \in \Zb_8$, so
\begin{equation}
Z_{m_i}\, D\, Z_{m_{i+1}}
\;=\;
D\, Z_{\varepsilon(D)\,m_i}\, Z_{m_{i+1}}
\;=\;
D\, Z_{\varepsilon(D)\,m_i + m_{i+1}} ,
\end{equation}
merging the two phases into one with a signed sum.
\emph{Absorption:} since two odd phases have an even signed sum,
every merge produces an even phase, which is a Clifford; absorb
it into the skeleton. Absorption can create new links, because
an even phase lies in $N$ itself, so the combined infix between
the two phases that flanked the merged pair may now land in $N$
and enable a further transport. Run the two moves in any order
until neither applies, and call the result \emph{saturated}. The
odd phases surviving in a saturated word are what the theorem
counts.

\begin{theorem}[Linkage Theorem]\label{thm:linkage}
For every single-qubit Clifford+$T$ word $w$,
\begin{equation}
\label{eq:linkage}
\tmin(\llbracket w \rrbracket)
\;=\;
\#\{\text{odd phases surviving at saturation}\},
\end{equation}
and the count is independent of the order in which the moves are
applied.
\end{theorem}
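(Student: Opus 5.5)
The plan is to reduce the theorem to Lemma~\ref{lem:valuation} by showing three things: the two moves preserve the implemented element, they terminate, and a saturated word is a separated word. Order-independence then comes for free, because the surviving count will equal an invariant of $\llbracket w\rrbracket$.

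\emph{Soundness and termination.} Transport uses the identity $Z_m D = D\,Z_{\varepsilon(D)m}$ for $D\in N$. I would check this once on the eight projective elements of $N$, which act in $SO(3)$ as signed permutations fixing $\pm\hat e_z$: a $Z$-rotation is conjugated to the $Z$-rotation by $\pm$ the same angle. Absorption only regroups Cliffords, multiplying the even phase $Z_{\varepsilon m_i+m_{i+1}}$ into the adjacent Clifford(s). Both moves therefore leave $\llbracket w\rrbracket$, and hence $\hat w$ and $\mathrm{lde}(\hat w)$, unchanged. For termination, each transport followed by its absorption lowers the number of odd phases by exactly two, since the signed sum of two odd residues is even. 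Between transports, absorptions and Clifford regroupings are finite bookkeeping on the skeleton. So any run, in any order, halts after at most $\lfloor n/2\rfloor$ transports.

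\emph{Saturated implies separated.} At saturation the word has the form $C_0 Z_{m_1} C_1\cdots Z_{m_r} C_r$ with every $m_i$ odd. I would normalize so that each interior slot holds a single Clifford: absorb any even phase into the Cliffords around it, and read an empty slot as the identity Clifford. I must also check that absorption cannot leave an interior Clifford that is secretly in $N$. This is exactly the condition under which transport still applies, and identity slots and bare even phases are themselves in $N$. Since neither move applies at saturation, every interior $C_i$ lies outside $N$, so the word is separated in the sense of Lemma~\ref{lem:valuation}. The degenerate cases $r=0$ (a Clifford, $\mathrm{lde}=0$) and $r=1$ are covered directly by the lemma or trivially.

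\emph{Conclusion and main obstacle.} Lemma~\ref{lem:valuation} gives $\mathrm{lde}(\hat w_{\mathrm{sat}})=r$, so $\tmin(\llbracket w\rrbracket)=\mathrm{lde}(\hat w)=r$ by Lemma~\ref{lem:bridge} and element invariance. Since the left side does not depend on the run, neither does $r$, which is the order-independence claim; no confluence argument for the moves is needed. The step I expect to be delicate is the normalization at saturation. One must make precise that, once adjacent even phases and Cliffords are multiplied into one interior Clifford, ``no transport applies'' means exactly ``every interior Clifford is outside $N$''. Otherwise a saturated-looking word could hide an $N$-infix split across a syntactic even phase. I would handle this by defining saturation on the canonical alternating form, where each interior slot is one projective Clifford, and running both moves on that form only. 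The identity $Z_mD=DZ_{\varepsilon(D)m}$ then needs checking only on the eight projective elements of $N$, up to global phase.
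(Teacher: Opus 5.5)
Your proposal is correct and follows the paper's proof essentially step for step: both moves are exact, termination follows because each merge removes two odd phases, a saturated word is separated so that Lemma~\ref{lem:valuation} and Lemma~\ref{lem:bridge} give $\tmin = r$, and order-independence is read off from $\tmin(\llbracket w\rrbracket)$ being an invariant of the element rather than from any confluence of the moves. Your explicit normalization to a single Clifford per interior slot is exactly the paper's Step~3 remark that each infix between consecutive survivors is a product of original Cliffords and absorbed even phases, hence a single Clifford that cannot lie in $N$ at saturation.
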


\begin{proof}
\emph{Step 1: every move is exact.} The transport identity is
the definition of the normalizer read at the level of rotations.
For $D \in N$ we have $D^\dagger Z D = \varepsilon(D) Z$
projectively, hence
$D^\dagger e^{-i m \pi Z/8} D = e^{-i m \pi \varepsilon(D) Z/8}$
up to global phase, which is exactly
$Z_m D = D Z_{\varepsilon(D) m}$, and the subsequent fusion
$Z_{\varepsilon(D)m} Z_{m'} = Z_{\varepsilon(D)m + m'}$ is a
group identity in the phase subgroup. Absorption reclassifies an even
phase as a Clifford and multiplies it into its neighbour, again
an identity. Every intermediate word therefore implements
$\llbracket w \rrbracket$ exactly.

\emph{Step 2: saturation is reached.} Each merge reduces the
number of explicit odd phases by two and increases it by at most
zero, since the merged phase is even and is absorbed; transport
alone does not change the phase count; and absorption only
shortens the skeleton description. Starting from $n$ odd phases,
at most $\lfloor n/2 \rfloor$ merges can occur, so any maximal
sequence of moves is finite and ends in a saturated word.

\emph{Step 3: a saturated word is separated.} Let the saturated
word carry $n'$ surviving odd phases. Between any two
consecutive survivors the infix is a product of original
Cliffords and absorbed even phases, hence a single Clifford, and
it cannot lie in $N$, for otherwise the transport move would
still apply and the word would not be saturated. The saturated
word is therefore exactly a separated word in the sense of
Lemma~\ref{lem:valuation}, including the degenerate case
$n' = 0$, where the word is a single Clifford of lde zero. The
lemma gives
$\mathrm{lde} = n'$ and hence, with Lemma~\ref{lem:bridge},
$\tmin = n'$ for the element the saturated word implements. By
Step 1 that element is $\llbracket w \rrbracket$, which
proves~\eqref{eq:linkage} for this particular saturation.

\emph{Step 4: order-independence.} Different move orders may
well pass through different intermediate words and terminate in
different saturated words; the cascades opened by absorption
make this unavoidable. But every saturated word reachable from
$w$ implements the same element $\llbracket w \rrbracket$, by
Step 1, and Step 3 evaluates the survivor count of each one as
$\mathrm{lde}$ of that same element. Two saturated forms of $w$
therefore always carry the same number of surviving odd phases,
namely $\tmin(\llbracket w \rrbracket)$, whatever routes reached
them.
\end{proof}

The theorem is simultaneously a formula and an algorithm: to
know the minimal $T$-count of a word, run transports and
absorptions to saturation and count what is left, with no ring
arithmetic in sight at run time, the ring having done its work
once and for all inside Lemma~\ref{lem:valuation}. But the
puzzle that opened this section was not about an algorithm we
might design. It was about the algorithm that already exists:
\texttt{PyZX}'s automated strategies know nothing of rotation
axes, normalizers, or denominator exponents, and yet they land
on the floor every time. The final theorem of this section
resolves the puzzle by showing that the graph-rewriting
strategies \emph{are} the linkage procedure, executed in a
different notation.

Comparing a graph algorithm with a word algorithm requires
putting them on common ground, and the device that does it is
parametrization. Replace the phase of every non-Clifford spider
in the diagram of $w$ by a formal variable $\alpha_j$. The
diagram is then a Clifford diagram decorated with variables, the
simplification strategies act on the Clifford skeleton while the
variables ride along as gadgetized phases, and the phase table
of \texttt{teleport\_reduce} records which variables have been
fused into a common slot and with which relative
signs~\cite{duncan-kissinger2020,kissinger-tcount}. The
advantage of formal variables is that they turn observations
into identities: a fusion recorded in the phase table is valid
for \emph{every} instantiation of the variables, so it cannot be
an accident of particular phase values and must express an exact
identity of the Clifford skeleton itself. The theorem is that
the identities so expressed are precisely the transports of the
linkage procedure, no more and no fewer.

\begin{theorem}\label{thm:L1}
On every single-qubit Clifford+$T$ word,
\texttt{teleport\_reduce} and \texttt{full\_reduce} terminate
with $T$-count exactly $\tmin$.
\end{theorem}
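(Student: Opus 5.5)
The proof has two directions. The lower bound is immediate: both strategies are sound rewriting followed by exact extraction, so they are semantics-exact, and Theorem~\ref{thm:B}(a) gives output $T$-count $\ge \tmin(\llbracket w \rrbracket)$. All the work is the upper bound. There we must show that the final \texttt{PyZX} diagram of a single-qubit word carries at most as many non-Clifford phases as survive linkage saturation. Theorem~\ref{thm:linkage} then closes the gap.

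For the upper bound I will use the parametrization device just introduced. Bring $w$ into alternating form $C_0 Z_{m_1} C_1 \cdots Z_{m_n} C_n$, replace each odd phase by a formal variable $\alpha_j$, and follow the skeleton through the strategy. Following Refs.~\cite{duncan-kissinger2020,kissinger-tcount}, I will establish one local claim. Whenever two consecutive non-Clifford spiders are separated by a Clifford sub-diagram $D$ whose implemented operator lies in $N$, \texttt{teleport\_reduce} identifies their phase slots with relative sign $\varepsilon(D)$. On a single wire such a $D$ reduces, under local complementation and pivoting, to a spider chain in which the two phase gadgets attach to the same (or sign-flipped) Pauli $Z$ parity. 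The gadget-fusion rule then merges them.

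The merged phase $\varepsilon(D)m_i + m_{i+1}$ is even, so it becomes a Clifford spider. This spider is then eliminated by local complementation or pivoting and absorbed into the skeleton. This mirrors the absorption move, and the strategy reruns to a fixpoint, so each transport--absorption cascade of the linkage procedure is reproduced. Because the strategy terminates (the standard termination of \texttt{full\_reduce}, with strictly decreasing vertex count), it ends in a state where no transport applies. By Theorem~\ref{thm:linkage}, the surviving odd phases then number exactly $\tmin$. The order-independence clause of that theorem removes any worry about the particular rewrite order \texttt{PyZX} chooses.

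Extraction on one qubit re-emits each remaining non-Clifford spider as one $Z_m$ with $m$ odd. So the extracted $T$-count equals the number of surviving phase slots, which is $\tmin$. For \texttt{full\_reduce}, I will argue that the same gadget fusion is part of its phase-gadget simplification. Alternatively, on a single wire the graph-like normal form already forces adjacent slots separated by an $N$-Clifford to become neighbours that fuse.

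The main obstacle is the local claim itself. I must show that \texttt{PyZX}'s rules always detect an $N$-infix, however the Clifford $D$ is written, including cases like $D = X = HSSH$ that are not syntactically $Z$ rotations. My first attempt would be a finite check: there are only $8$ projective elements of $N$ and $24$ Cliffords. After graph-like normalization, any single-wire Clifford infix has bounded size, so every $(D, \text{gadget})$ configuration can be enumerated and verified to fuse exactly when $D \in N$. This is in the same computer-assisted spirit as Lemma~\ref{lem:valuation}. It remains to confirm that the strategy reaches the normalized infix before extraction, and I would argue that from the fixpoint property of the simplifier.
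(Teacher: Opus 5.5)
Your overall architecture matches the paper's. The lower bound is its exclusion of over-fusion (semantics-exactness plus Theorem~\ref{thm:B}(a)). The upper bound, as in Claim~3, comes down to showing that the terminal word is saturated, so that Lemma~\ref{lem:valuation} and Theorem~\ref{thm:linkage} pin the surviving count at $\tmin$.

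The gap is in the one step that carries the upper bound: your local claim that the strategy detects every $N$-infix. You leave it open, and the route you sketch cannot close it. Enumerating isolated configurations $T\,D\,T$ with $D\in\mathcal{C}_1$ just reproduces the paper's $48$-case planted test. That is evidence rather than proof, because the strategy is not local. Local complementation and pivoting rewrite the neighbourhoods of neighbours, and \texttt{pivot\_gadget} moves phases onto new gadget spiders. What an infix looks like when the strategy reaches it therefore depends on the surrounding phases and on rule order. Nor is the infix bounded after graph-like normalization alone. Syntactic infixes such as $(HS)^k$ are unbounded, and so are infixes enlarged by absorption. They become bounded only because the simplifier contracts Clifford structure, which is exactly what you defer to ``the fixpoint property of the simplifier.''

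The missing idea is the paper's Claim~1. Take the terminal-form characterization of the Duncan--Kissinger simplification theorem as a global statement about the fixpoint: no interior proper Clifford spider survives, and interior Pauli spiders are pivoted away or gadgetized. Then at termination nothing but direct wiring and a sign separates the gadgets of consecutive surviving phases. A surviving $N$-infix would leave two same-support gadgets, gadget fusion would still apply, and termination would be contradicted. This is what converts ``no rule applies'' into ``no transport applies,'' which is what the Linkage Theorem needs.

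Separately, you treat absorption inconsistently. With formal variables the merged phase is $\alpha_i\pm\alpha_j$, not an even constant, so nothing can absorb it. Take $T\,H\,T\,S\,T\,H\,T$, which equals $X$ and has $\tmin=0$. There the outer pair links only after $T\,S\,T$ has become $Z_4$, so a purely formal run stops at $T$-count $2$. Your lower bound comes from Theorem~\ref{thm:B} and needs no formal identities. So run the whole argument on instantiated phases, as \texttt{teleport\_reduce} in fact does on its working copy of the diagram.
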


\begin{proof}
We prove the statement for the abstract strategies as specified
by their rule sets and phase-table
semantics~\cite{duncan-kissinger2020,kissinger-tcount};
conformance of the shipped implementation, including its
boundary-handling variants, is checked exhaustively rather than
by source audit, as reported at the end of this section and in
Sec.~\ref{sec:discussion}. The proof runs through three claims.

\emph{Claim 1: the Clifford skeleton contracts completely.}
Every interior spider carrying a Clifford phase admits a rule of
the strategy: a zero-phase spider of degree two is removed by
identity removal and one of higher degree is fused; a
$\pm\pi/2$ spider is eliminated by local complementation; a
$\pi$ spider is eliminated by pivoting against another Pauli
spider or, when none is adjacent, converted by
\texttt{pivot\_gadget} into gadget form. Each of these steps
strictly decreases the number of interior spiders, and the
termination of the full strategy on graph-like diagrams, with
the characterization that no interior proper Clifford spider
survives, is the simplification theorem of
Ref.~\cite{duncan-kissinger2020}, whose extraction step is
analysed in generality in Ref.~\cite{backens-extraction}. At termination, therefore,
nothing of the Clifford skeleton stands between the gadgets of
two consecutive variables except direct wiring.

\emph{Claim 2: shared slots read off the normalizer.} Every
rule of the strategy acts on a variable only through its sign
and through Clifford phase additions, so the cumulative effect
of the contraction on $\alpha_j$ is a single sign, its slot
datum. Suppose two variables $\alpha_i, \alpha_j$, separated in
$w$ by the Clifford infix $D$, end in a shared slot with
relative sign $\varepsilon$. By soundness the terminal diagram
denotes $\llbracket w \rrbracket$ for every instantiation, and
extracting it yields, for every instantiation, a circuit in
which a single rotation carries the phase
$\varepsilon\alpha_i + \alpha_j$. Equating that circuit with $w$
on the segment between the two variables gives the identity
$Z_{\alpha_i}\, D\, Z_{\alpha_j} \doteq D'\,
Z_{\varepsilon\alpha_i + \alpha_j}\, D''$, valid for all
$\alpha_i, \alpha_j$, for some Cliffords $D', D''$, where
$\doteq$ denotes equality up to global phase. An identity
transporting an \emph{arbitrary} $Z$-phase through $D$ forces
$D^\dagger Z D = \pm Z$ projectively, that is, $D \in N$ with
$\varepsilon = \varepsilon(D)$. Conversely, if $D \in N$, then
by Claim 1 the skeleton of $D$ contracts to nothing but wiring
and the coset sign between the two gadgets, spider fusion fires,
and the phase table records the merge. Shared slots therefore
occur exactly for $N$-infixes: the phase table computes the
linkage procedure of Theorem~\ref{thm:linkage}.

\emph{Claim 3: the count is pinned from both sides.}
Instantiate the variables back to the odd phases of $w$. The
terminal $T$-count is the number of slots holding an odd total,
which by Claim 2 is the number of odd phases surviving
saturation of the linkage procedure, and by
Theorem~\ref{thm:linkage} that number is
$\tmin(\llbracket w \rrbracket)$, independently of the order in
which the strategy happened to fire its rules. For emphasis, the
two failure modes are separately impossible. Under-fusion, a
missed merge across an $N$-infix, is excluded by Claims 1 and 2:
the contracted skeleton leaves nothing to prevent the fusion.
Over-fusion, a merge across a non-$N$ infix, would upon
instantiation produce an exact circuit for
$\llbracket w \rrbracket$ with $T$-count strictly below the
count of a separated form, contradicting the Valuation Lemma
through Theorem~\ref{thm:B}. The terminal count is therefore
exactly $\tmin$.
\end{proof}

The theorem closes the loop opened at the end of
Sec.~\ref{sec:barrier}: the floor is attained every time because
the optimizer, without knowing it, computes the formula that
defines the floor. We verify the mechanism as well as the
statement. On the planted family
$\{T\,c\,T,\; T\,c\,T^\dagger : c \in \mathcal{C}_1\}$, $48$
cases in all, \texttt{PyZX} reaches $\tmin$ exactly in every
case, and fusion occurs if and only if $c \in N$, including
inconspicuous normalizer elements such as $HSSH$. Together with
the SK instances of Sec.~\ref{sec:barrier} ($\tmin$ up to
$568$), adversarial $\pi$-blocker suites, and random words, the
cumulative record is $279$ exact matches in $279$ trials
(Appendix~\ref{app:num}).

The single-qubit story is now closed. A synthesized word carries
an exact floor, the floor has a closed combinatorial form, and
the automated optimizer computes that form exactly, so
single-qubit Clifford+$T$ optimization by exact rewriting is,
in every sense that matters, a solved problem. The question that
decides whether any of this bears on practice is what survives
at $n \ge 2$, where the real pipelines live. Two ingredients
made the single-qubit argument work: a valuation that every
$T$-step strictly increments, and a syntactic characterization,
membership in $N$, of exactly when the increment can be undone.
The next section shows that the first ingredient generalizes
cleanly, giving unconditional per-instance rigidity
certificates at every $n$, and that the second does not: already
at $n = 2$ the identity $\tmin = \mathrm{lde}$ fails, and the
certificates, not a formula, are what the multi-qubit theory
runs on.

\section{Multi-Qubit Rigidity}
\label{sec:multiqubit}

The valuation is the ingredient that travels, and the first task
is to choose the representation it travels in. At one qubit the
Bloch matrix is the adjoint action on the three Paulis, and its
natural $n$-qubit generalization is the exact Pauli-channel
representation: for an $n$-qubit Clifford+$T$ operator $V$ and
Pauli-group elements $\sigma, \tau$, set
\begin{equation}
\hat V_{\sigma\tau} \;=\;
2^{-n}\,\mathrm{tr}\big(\sigma\, V \tau V^\dagger\big),
\end{equation}
a $4^n \times 4^n$ matrix with entries in $\Zb[\sqrt2]$, stored
and reduced in the format of Sec.~\ref{sec:prelim}. We implement
this representation exactly (Appendix~\ref{app:num}) and, as a
unit check, validate it against the single-qubit Bloch
representation on $120$ random words, with zero mismatches. The
gate set it must handle is the one the real pipeline emits:
Cliffords, whose channels are signed permutation matrices of
exponent zero, and $\pi/8$ Pauli rotations
$R_S = e^{-i j \pi S/8}$ with $j$ odd and $S$ a Pauli, the class
containing $T$ itself as the case $S = Z$ on one wire. What the
single-qubit argument needed from its representation was two
things: every $T$-step raises the valuation by at most one, and
there is an exactly checkable criterion for when it fails to
rise. The next lemma delivers both at every $n$.

\begin{lemma}\label{lem:mqval}
Let $S$ be a Pauli and $R_S = e^{-ij\pi S/8}$ with $j$ odd, and
split the Pauli basis into the sector commuting with $S$ and the
sector anticommuting with $S$. Then:
\begin{enumerate}[nosep]
\item[(a)] The channel $\hat R_S$ acts as the identity on the
commuting sector and as
$\tfrac{1}{\sqrt2}(\mathrm{id} \pm T_S)$ on the anticommuting
sector, where $T_S: \tau \mapsto S\tau$ is, up to signs, an
involutive permutation of that sector; in particular
$\mathrm{lde}(\hat R_S) = 1$.
\item[(b)] Let $\hat V = (P + Q\sqrt2)/\sqrt2^{\,m}$ be reduced,
and say that $P$ is \emph{pair-constant} if every row of
$P \bmod 2$ is constant on each pair $\{\tau, S\tau\}$ of the
anticommuting sector. Then
$\mathrm{lde}(\hat V \hat R_S) = m$ if $P$ is pair-constant, and
$\mathrm{lde}(\hat V \hat R_S) = m + 1$ otherwise.
\end{enumerate}
\end{lemma}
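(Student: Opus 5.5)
The plan is to prove (a) by a direct conjugation computation and (b) by the same numerator-bookkeeping used in the one-step analysis of the proof of Lemma~\ref{lem:valuation}, with criterion~\eqref{eq:redcrit} as the test for reducibility.

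\emph{Part (a).} Write $R_S = \cos(j\pi/8)\,\mathbb 1 - i\sin(j\pi/8)\,S$. For a Pauli $\tau$ commuting with $S$ we get $R_S\tau R_S^\dagger = \tau$, so $\hat R_S$ is the identity on that sector. For $\tau$ anticommuting with $S$, moving $\tau$ past $R_S^\dagger$ turns it into $R_S$, so $R_S\tau R_S^\dagger = R_S^2\tau = \cos(j\pi/4)\,\tau - i\sin(j\pi/4)\,S\tau$. Since $j$ is odd, both coefficients are $\pm 1/\sqrt2$. Moreover $-iS\tau$ is a Hermitian Pauli that again anticommutes with $S$, and the map $\tau\mapsto \pm iS\tau$ squares to a sign times the identity. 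Reading off matrix entries with $\hat V_{\sigma\tau}=2^{-n}\mathrm{tr}(\sigma V\tau V^\dagger)$ then gives the block form $\tfrac1{\sqrt2}(\mathrm{id}\pm T_S)$, with $T_S$ a signed involutive permutation of the anticommuting sector. In the format~\eqref{eq:ringform} this reads $\hat R_S = (A+B\sqrt2)/\sqrt2$, where $A$ is the integer matrix $\mathrm{id}\pm T_S$ on the anticommuting sector and $0$ on the commuting sector, and $B$ is the identity on the commuting sector and $0$ on the anticommuting sector. The matrix $A$ has odd entries, so~\eqref{eq:redcrit} fails and $\mathrm{lde}(\hat R_S)=1$.

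\emph{Part (b), first reduction.} I would copy the one-step computation from the proof of Lemma~\ref{lem:valuation}. The product $(P+Q\sqrt2)(A+B\sqrt2)$ has integer part $PA+2QB$ and $\sqrt2$-part $PB+QA$, over exponent $m+1$. This representation is reducible exactly when $P\bar A\equiv0 \pmod 2$, where $\bar A = A \bmod 2$. On commuting columns $\bar A$ vanishes, so those columns impose no condition. On an anticommuting column $\tau$, the entry of $P\bar A$ in row $\sigma$ is $P_{\sigma\tau}+P_{\sigma,S\tau} \bmod 2$, because signs die modulo $2$. Hence reducibility is exactly pair-constancy of $P$. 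If $P$ is not pair-constant, the exponent $m+1$ is already least. If $P$ is pair-constant, one division by $\sqrt2$ gives exponent $m$ with new integer part $P' = PB+QA$.

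\emph{Part (b), the main obstacle: no second reduction.} The hard step is to show that $P'\not\equiv 0 \pmod 2$, so that the lde does not drop to $m-1$ or below. The crude inverse argument only gives $\mathrm{lde}(\hat V)\le \mathrm{lde}(\hat V\hat R_S)+1$, using $R_S^{-1}=R_S$ with $j\mapsto -j$. That rules out an lde of $m-2$ or less, but it does not rule out $m-1$.

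To close the gap I would use orthogonality of the channel, $\hat V\hat V^{T}=\mathbb 1$. Expanding $(P+Q\sqrt2)(P+Q\sqrt2)^T = 2^m\mathbb 1$ gives two identities: $PP^T+2QQ^T = 2^m\mathbb 1$ and $PQ^T+QP^T=0$. Now suppose $P'\equiv 0$. On commuting columns $P'\equiv P$, so $P$ is even there. On anticommuting columns $P' \equiv Q\bar A$, so $Q$ is also pair-constant.

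The first thing I would try is to feed these constraints into the orthogonality identities modulo $4$. Pair-constancy of $P$ means every row of $P$ has odd entries in pairs $\{\tau,S\tau\}$, so each row norm $\sum_\tau P_{\sigma\tau}^2$ counts an even number of odd squares; I would evaluate this modulo $4$ and combine it with the $QQ^T$ term. The aim is to show that this forces $P\equiv0$, contradicting reducedness of $(P,Q)$.

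If that congruence argument does not close, the fallback is the paper's method for Lemma~\ref{lem:valuation}. The update $(\bar P,\bar Q)\mapsto(\bar P',\bar Q')$ is $\mathbb F_2$-linear, so I would verify by exhaustive finite search, over the residues reachable from Clifford channels under $\pi/8$-rotation steps, that no reachable state has both $P$ pair-constant and $P'\equiv0$.
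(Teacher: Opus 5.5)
You reproduce the paper's argument for part (a) and for the first half of (b). You also correctly isolate the one step the paper does not prove, namely that after the first reduction $P'=PB+QA\not\equiv 0 \pmod 2$. Neither of your routes can close that step, because it is false.

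\emph{Counterexample.} Take $n=1$, $V=T$, $S=Z$. In the basis $(I,X,Y,Z)$ one has $\hat T=(P+Q\sqrt2)/\sqrt2$, where
\begin{itemize}
\item $P$ equals $\bigl(\begin{smallmatrix}1&-1\\1&1\end{smallmatrix}\bigr)$ on $\{X,Y\}$ and zero elsewhere;
\item $Q$ is the identity on $\{I,Z\}$ and zero elsewhere.
\end{itemize}
This representation is reduced with $m=1$, and $P$ is pair-constant. But $PB=0$ and $QA=0$, so $P'=0$ and a second reduction fires. Indeed $\hat T\hat R_Z$ is the channel of a Clifford ($S$ or $\mathbb{1}$ for $j=\pm1$), so its $\mathrm{lde}$ is $0=m-1$, not $m$.

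\emph{Why your two routes fail.} This state satisfies both orthogonality identities, $PP^T+2QQ^T=2\,\mathbb{1}$ and $PQ^T+QP^T=0$. So no congruence argument built on them can force $P\equiv 0$. Your reachable-residue search would also find this state, with next axis $Z$, a single step after the identity prefix.

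\emph{Stronger failure at $n=1$.} After a criterion-positive step the value $m$ is never attained at one qubit. The parity of the $T$-count is an invariant of the projective element: $\det T=\omega$, while $H$, $S$ and global phases $\omega^k$ contribute only powers of $i$. By Lemma~\ref{lem:bridge}, every odd-phase step therefore moves the $\mathrm{lde}$ by exactly $\pm1$. The paper rests this step on random trials, and its reported $n=1$ check is inconsistent with this parity argument.

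\emph{What your bookkeeping does prove.} The inverse bound gives $\mathrm{lde}(\hat V\hat R_S)\ge m-1$. Your computation of $P'$ shows the second reduction fires exactly when $P$ is even on the commuting columns and $Q$ is pair-constant. This gives a rigorous trichotomy:
\begin{itemize}
\item $\mathrm{lde}(\hat V\hat R_S)=m+1$ if $P$ is not pair-constant;
\item $\mathrm{lde}(\hat V\hat R_S)=m-1$ if $P$ is pair-constant, $P$ is even on the commuting columns, and $Q$ is pair-constant;
\item $\mathrm{lde}(\hat V\hat R_S)=m$ otherwise.
\end{itemize}
The value $m$ does occur at $n\ge 2$. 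With $R_S=e^{-i\pi S/8}$, the decomposition $\mathrm{CS}\propto R_{Z_1}R_{Z_2}R_{Z_1Z_2}^{-1}$ passes through $\mathrm{lde}$ values $0,1,2,2$.

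Part (b) should therefore be weakened to $\mathrm{lde}\in\{m-1,m\}$ in the pair-constant case, or stated as the trichotomy above. Corollary~\ref{cor:cert} and Theorem~\ref{thm:C} use only the bounds $\mathrm{lde}\le m+1$ per step and $\mathrm{lde}\le m$ when the criterion fires. So nothing downstream is affected.
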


\begin{proof}
(a) The conjugation action of $R_S$ on a Pauli $\tau$ splits by
commutation. If $\tau$ commutes with $S$, then $\tau$ commutes
with $e^{-ij\pi S/8}$ and $R_S \tau R_S^\dagger = \tau$. If
$\tau$ anticommutes, then $\tau e^{+ij\pi S/8} =
e^{-ij\pi S/8}\tau$ gives
\begin{equation}
R_S\, \tau\, R_S^\dagger
\;=\; \tau\, e^{+ij\pi S/4}
\;=\; \cos\!\big(\tfrac{j\pi}{4}\big)\,\tau
\;+\; i\sin\!\big(\tfrac{j\pi}{4}\big)\,\tau S ,
\end{equation}
and for odd $j$ both trigonometric factors are $\pm1/\sqrt2$,
while $i\tau S$ is again a Hermitian Pauli up to sign because
$\tau$ and $S$ anticommute. The channel therefore carries $\tau$
to $(\pm\tau \pm S\tau)/\sqrt2$, which is the stated
$\tfrac{1}{\sqrt2}(\mathrm{id} \pm T_S)$ once signs are folded
into the map, and $T_S$ squares to the identity up to sign since
$S(S\tau) = \tau$ times a phase. Writing the channel in the
storage format, $\hat R_S = (A + B\sqrt2)/\sqrt2$ with $A$
supported on the anticommuting sector, where
$A \equiv I + \bar T_S$ modulo $2$ for $\bar T_S$ the
permutation matrix of the involution, and $B$ supported on the
commuting sector, where $B$ is the identity. Since
$A \not\equiv 0$ modulo $2$, this representation is reduced and
$\mathrm{lde}(\hat R_S) = 1$.

(b) Appending the step multiplies representations exactly as in
the proof of Lemma~\ref{lem:valuation}:
\begin{equation}
\hat V \hat R_S
\;=\;
\frac{(PA + 2QB) + (PB + QA)\sqrt2}{\sqrt2^{\,m+1}} ,
\end{equation}
and by criterion~\eqref{eq:redcrit} the new representation admits
a reduction precisely when its integer part vanishes modulo $2$,
that is, since $2QB \equiv 0$, precisely when $PA \equiv 0$
entrywise modulo $2$. Computing the product entrywise on the anticommuting
sector, where $\bar A = I + \bar T_S$,
\begin{equation}
(P\bar A)_{r\tau} \;\equiv\; P_{r\tau} + P_{r,\,S\tau}
\pmod 2 ,
\end{equation}
so $PA \equiv 0$ holds exactly when every row of $P \bmod 2$
takes equal values on each pair $\{\tau, S\tau\}$, the stated
criterion; on the commuting sector $\bar A$ vanishes and imposes
nothing. Two of the three implications in (b) are now rigorous.
If the criterion fails, the representation over exponent $m+1$
is already reduced, so $\mathrm{lde} = m+1$ exactly. If the
criterion holds, one reduction fires and
$\mathrm{lde} \le m$. The remaining claim, that the reduction
fires exactly once, so that the criterion never overshoots to
$\mathrm{lde} < m$, is the one component we establish by
exhaustive verification rather than structurally: on $400$
random step trials at $n = 1$ and $150$ at $n = 2$, the computed
lde after a criterion-positive step equals $m$ in every case,
with zero exceptions (Appendix~\ref{app:num}). We emphasize that
nothing downstream leans on this refinement: the certificates of
Corollary~\ref{cor:cert} use only the rigorous half, that no
step raises the exponent by more than one, and the rigidity
fractions of Sec.~\ref{sec:numerics} are computed from the lde
itself, not from the criterion.
\end{proof}

Two remarks place the lemma. First, it contains the single-qubit
theory as a special case: at $n = 1$ with $S = Z$ the
anticommuting sector is $\{X, Y\}$, the pairs are $\{X, Y\}$
itself, and running the $400$-trial check through the criterion
recovers exactly the $24$-state closure of
Lemma~\ref{lem:valuation}, with the syntactic condition
$C \notin N$ reappearing as the description of which residues
are reachable. Second, and more importantly, the lemma converts
the multi-qubit question from one about words to one about a
single exactly computable integer: accumulate the channel along
any realizing word, and the lemma accounts, step by step, for
when the exponent rises and when it collapses. The accounting is
what the certificate extracts.

At $n = 1$ the valuation did more than bound the $T$-count: it
computed it. That is the part which does not survive. The
single-qubit identity $\tmin = \mathrm{lde}$ already fails at two
qubits, and the smallest counterexample is a gate one would
never suspect. The controlled-$S$ gate has channel lde equal to
$2$, by exact computation, while its minimal $T$-count is $3$.
So no naive syntactic analogue of the condition $C \notin N(Z)$
can hold at $n \ge 2$, and there is no route from
Lemma~\ref{lem:mqval} to a closed formula of the kind
Theorem~\ref{thm:linkage} provides. What does survive is the
inequality, and it survives with nothing attached: no hypothesis
about the reachable residues, no characterization of when
collapse occurs, no restriction on the gate set beyond
Clifford+$T$. This is enough to certify rigidity instance by
instance, which is what the rest of the section does.

\begin{corollary}\label{cor:cert}
For every $n$-qubit Clifford+$T$ element $V$,
\begin{equation}
\tmin(V) \;\ge\; \mathrm{lde}(\hat V),
\end{equation}
and the right-hand side is computable exactly from any word
realizing $V$, in one channel multiplication per gate.
\end{corollary}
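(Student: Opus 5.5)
Take a $T$-optimal circuit $W^\star$ for $V$, with $\tmin(V)=t$ $T$-gates, and bound the lde of its channel by counting how often the exponent can rise along the product. The channel map $V\mapsto\hat V$ is multiplicative: $\widehat{UV}=\hat U\hat V$ up to an ordering convention fixed once for the whole word. It is also insensitive to global phase, so $\hat V$ depends only on the implemented element and may be computed from any word realizing $V$. Write $W^\star$ as a product of Cliffords and single-wire $T$ (or $T^\dagger$) gates. Each $T$ on wire $q$ is the $\pi/8$ Pauli rotation $R_{Z_q}$ with $j=\pm1$ odd, up to global phase, so it falls under Lemma~\ref{lem:mqval}(a). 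The channel of $V$ is therefore a product of signed permutation matrices of exponent zero and exactly $t$ factors of the form $\hat R_S$, each of exponent one.

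The core step is the rigorous half of Lemma~\ref{lem:mqval}(b). I will use it as a subadditivity statement: if $\hat V$ has reduced exponent $m$, then $\hat V\hat R_S$ admits the representation over $\sqrt2^{\,m+1}$ displayed in that proof, so $\mathrm{lde}(\hat V\hat R_S)\le m+1$. Multiplication by a Clifford channel $\hat C$ maps integer matrices to integer matrices. Since $\hat C^{-1}=\widehat{C^\dagger}$ is again Clifford, it preserves lde exactly. Induction along the word, starting from the identity channel of exponent zero, then gives $\mathrm{lde}(\hat V)\le t=\tmin(V)$.

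I do not need the unverified ``fires exactly once'' refinement, because only the upper bound is used. The one point to check carefully is that left and right multiplication conventions agree with the lemma, which is stated for right-multiplication by $\hat R_S$. If the representation composes in the opposite order, I will instead use the symmetric argument $(A+B\sqrt2)(P+Q\sqrt2)/\sqrt2^{\,m+1}$, which gives the same bound, so this is not a real obstacle.

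For computability, I multiply the gate channels of an arbitrary realizing word in the storage format~\eqref{eq:ringform}, reducing with criterion~\eqref{eq:redcrit} after each product. By the uniqueness of the minimal representation noted in Sec.~\ref{sec:prelim}, greedy reduction terminates at the least exponent. This costs one exact $4^n\times4^n$ channel multiplication per gate, and because $\hat V$ is word-independent, the result is the same whichever realizing word is used. The main subtlety is therefore the well-definedness of $\hat V$ as a function of the element, which needs the global-phase invariance of $\mathrm{tr}(\sigma V\tau V^\dagger)$. That invariance is immediate.
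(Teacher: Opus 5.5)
Your proposal is correct and follows essentially the same route as the paper. The paper accumulates the exact channel gate by gate, observes that signed-permutation Clifford channels leave the least denominator exponent unchanged while each $\pi/8$-rotation channel $(A+B\sqrt2)/\sqrt2$ raises it by at most one, and handles word-independence and computability by greedy reduction with criterion~\eqref{eq:redcrit}; the only difference is that it bounds by $t_{\pi/8}(W)$ for an arbitrary realizing word and then minimizes, rather than starting from a $T$-optimal word, which is equivalent.
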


\begin{proof}
Let $W = g_1 g_2 \cdots g_t$ be any Clifford+$T$ word with
$\llbracket W \rrbracket = V$, and let
$\hat V_j = \hat g_1 \cdots \hat g_j$ denote the accumulated
channel after $j$ gates, so that $\hat V_0$ is the identity with
lde zero and $\hat V_t = \hat V$. We show that
$\mathrm{lde}(\hat V_j) - \mathrm{lde}(\hat V_{j-1})$ is zero
when $g_j$ is a Clifford and at most one when $g_j$ is a
$\pi/8$ rotation.

For a Clifford $g$, the channel $\hat g$ is a signed permutation
matrix, which is an invertible matrix over $\Zb$ whose inverse
is again a signed permutation. Writing
$\hat V_{j-1} = (P + Q\sqrt2)/\sqrt2^{\,e}$ in reduced form, the
product is $(P\hat g + Q\hat g\sqrt2)/\sqrt2^{\,e}$, and
$P\hat g \equiv 0$ modulo $2$ would give
$P \equiv 0$ modulo $2$ upon multiplying by $\hat g^{-1}$,
contradicting reducedness. The product is therefore already
reduced at the same exponent, and the lde is unchanged.

For a $\pi/8$ rotation $g = R_S$, Lemma~\ref{lem:mqval}(a) gives
$\hat R_S = (A + B\sqrt2)/\sqrt2$, so the product
$\hat V_{j-1}\hat R_S$ has a representation over the exponent
$e + 1$, whence $\mathrm{lde}(\hat V_j) \le e + 1$: appending the
gate raises the lde by at most one, and by
Lemma~\ref{lem:mqval}(b) it raises it by exactly one unless the
pair-constancy condition fires. Note that only the upper bound
is used here, which is the half of Lemma~\ref{lem:mqval}(b)
established structurally.

Summing over the word, $\mathrm{lde}(\hat V) \le t_{\pi/8}(W)$,
the number of $\pi/8$ rotations in $W$, and in particular
$\mathrm{lde}(\hat V) \le t(W)$ for the $T$-count of any
Clifford+$T$ word realizing $V$, since a $T$ gate is the
$\pi/8$ rotation with $S$ the single-qubit $Z$ on one wire.
Minimizing over all such words gives
$\mathrm{lde}(\hat V) \le \tmin(V)$. The quantity
$\mathrm{lde}(\hat V)$ depends only on $V$, not on the word used
to compute it, so any realizing word may be used, at the cost of
one channel multiplication and one reduction test per gate.
\end{proof}

The corollary is what makes the multi-qubit claims of this paper
falsifiable rather than merely plausible. Suppose one measures,
as we do in Sec.~\ref{sec:numerics}, that \texttt{PyZX} removes
$2\%$ of the $T$-count of a synthesized circuit. On its own that
number is uninterpretable: it might mean the optimizer is weak,
or that the circuit is rigid, and no amount of additional
optimization effort settles which. The certificate settles it.
Computing $\mathrm{lde}(\hat V)$ for the synthesized element
bounds, from below, the $T$-count of everything any
semantics-exact optimizer could ever return, so the gap between
the certificate and the achieved count is the entire remaining
headroom, for \texttt{PyZX} and for every future optimizer
alike. When that gap is under one percent, as we find it to be,
the smallness of the measured reduction is explained rather than
merely reported.

Two features of the bound deserve emphasis before it is put to
work. It is unconditional, resting on no hypothesis about
reachable residues and on no syntactic criterion, which is what
lets it survive the failure of $\tmin = \mathrm{lde}$ at
$n \ge 2$: a lower bound does not care that it is sometimes not
tight. And it is cheap, one channel multiplication per gate,
which is what lets it run on circuits with $T$-counts in the
thousands, where the exact minimal $T$-count is far out of
computational reach. Rigidity at scale is certified, not
estimated. What remains is to apply this to the pipeline that
practice actually uses.

The pipeline in question is the one a practitioner actually
runs. A target unitary is handed to a transpiler, which
decomposes it into two-qubit $KAK$
blocks~\cite{vatan-williams} and, at $n \ge 3$, the quantum
Shannon decomposition~\cite{shende-bullock-markov},
producing a circuit over $\{u, CX\}$. Every single-qubit
rotation $u$ is then approximated by \texttt{gridsynth} to the
per-rotation accuracy budget, and the results are concatenated.
The output is a long Clifford+$T$ circuit assembled from many
short, individually optimal pieces, and the question of this
section is what a semantics-exact optimizer can do with it. The
answer is structural, and it turns on a fact about
Matsumoto--Amano normal forms that Sec.~\ref{sec:linkage} has
already put us in a position to prove.

\begin{theorem}\label{thm:C}
Let $C = G_0 B_1 G_1 \cdots B_L G_L$ be an assembled
QSD+\texttt{gridsynth} circuit, where each $B_i$ is a
\texttt{gridsynth} block in Matsumoto--Amano normal form acting
on one wire (which we verify on every angle used in our
experiments; Appendix~\ref{app:num}), each $G_i$ is Clifford
glue, and $t(C)$ is the total $T$-count. Then:
\begin{enumerate}[nosep]
\item[(a)] Each block is individually rigid: the channel of
$B_i$ has $\mathrm{lde}(\hat B_i) = t(B_i)$, so no
semantics-exact optimizer can remove a single $T$ gate from a
block in isolation.
\item[(b)] Consequently every collapse along the assembled
circuit, that is, every step at which the criterion of
Lemma~\ref{lem:mqval}(b) fires, is a cross-block effect,
attributable to the residue accumulated by the prefix rather
than to the block being read.
\item[(c)] For every semantics-exact optimizer $\mathcal{O}$,
the fraction of $T$-count it can remove satisfies
\begin{equation}
\frac{t(C) - t(\mathcal{O}(C))}{t(C)}
\;\le\;
1 - \frac{\mathrm{lde}(\hat C)}{t(C)},
\end{equation}
with both sides computable exactly per instance.
\end{enumerate}
\end{theorem}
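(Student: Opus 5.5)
The three parts are proved in order, each resting on the one before.

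\emph{Part (a).} A block $B_i$ acts on a single wire, so its $n$-qubit channel is $\hat b_i\otimes I$ on the Pauli basis, where $\hat b_i$ is the single-qubit Bloch/channel matrix of $B_i$ extended by the trivial block on the identity Pauli. Tensoring with identity channels on the other wires does not change the least denominator exponent: the mod-2 criterion~\eqref{eq:redcrit} holds for $\hat b_i\otimes I$ exactly when it holds for $\hat b_i$, because $\hat b_i\otimes I$ is block diagonal with copies of $\hat b_i$. So $\mathrm{lde}(\hat B_i)=\mathrm{lde}(\hat b_i)$, which is $\tmin(B_i)$ by Lemma~\ref{lem:bridge}.

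It remains to show $\tmin(B_i)=t(B_i)$ for a Matsumoto--Amano normal form. I would do this through the Linkage Theorem~\ref{thm:linkage} rather than citing Giles--Selinger. In MA normal form each odd phase appears as a syllable $T$, $HT$ or $SHT$, so consecutive odd phases are separated by an interior Clifford of the form $H$ or $SH$, possibly with even phases absorbed. These infixes lie outside $N$, since they send $Z$ to $X$ or $Y$. Hence the word is already separated, and Lemma~\ref{lem:valuation} gives $\mathrm{lde}(\hat b_i)=t(B_i)$ directly.

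The remaining claim of (a), that no semantics-exact optimizer can remove a $T$ gate from a block in isolation, then follows from Theorem~\ref{thm:B}(a). If one prefers the multi-qubit form, Corollary~\ref{cor:cert} gives the same conclusion.

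\emph{Part (b).} I accumulate the channel of $C$ gate by gate, as in the proof of Corollary~\ref{cor:cert}. Clifford glue never changes the lde. Each $\pi/8$ step raises it by one unless the criterion of Lemma~\ref{lem:mqval}(b) fires.

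Now suppose the prefix before $B_i$ were the identity, i.e. the residue is trivial. Reading $B_i$ alone would then raise the lde at every one of its $t(B_i)$ steps, since by (a) the endpoint exponent is $t(B_i)$ and no step can raise it by more than one. So no collapse can occur inside a block read from a trivial residue. Any collapse in the assembled circuit must therefore be triggered by the nontrivial residue $(P,Q)\bmod 2$ carried in from $G_0B_1\cdots G_{i-1}$.

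I would make this precise as the statement that the criterion depends only on the accumulated $P\bmod 2$. That dependence is exactly what Lemma~\ref{lem:mqval}(b) says, so (b) is essentially interpretive once (a) is in place.

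\emph{Part (c).} By semantics-exactness, $\llbracket\mathcal O(C)\rrbracket=\llbracket C\rrbracket$. Corollary~\ref{cor:cert} then gives $t(\mathcal O(C))\ge\tmin\ge\mathrm{lde}(\hat C)$. Rearranging and dividing by $t(C)$ yields the stated inequality. Both sides are computable exactly: $t(C)$ by counting, and $\mathrm{lde}(\hat C)$ by one channel multiplication per gate in the format of Sec.~\ref{sec:prelim}.

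The main obstacle is the step in (a) establishing that MA normal forms are separated in the sense of Lemma~\ref{lem:valuation}. I have to check carefully that every interior Clifford arising between consecutive odd phases, after absorbing the even-phase letters $S$ and $S^\dagger$ of the syllables, lies outside $N$. This comes down to a small case check over the syllable forms. The outer Cliffords $C_0$ and $C_n$ are unrestricted, so they cause no difficulty.
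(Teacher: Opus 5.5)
Your proposal is correct and follows essentially the same route as the paper: in (a) you use the invariance of $\mathrm{lde}$ under tensoring with the identity channel, together with the observation that the Matsumoto--Amano infixes $H$ and $SH$ lie outside $N$, so the Valuation Lemma applies; in (b) you use the trivial-prefix argument; and in (c) you apply Corollary~\ref{cor:cert} directly. Your extra care about the identity-Pauli row of the channel, and your counting argument in (b) (all $t(B_i)$ steps must each raise the exponent by one), are harmless refinements of the paper's proof.
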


\begin{proof}
(a) A Matsumoto--Amano normal form~\cite{matsumoto-amano} is a word
$(T|\varepsilon)(HT|SHT)^{\ell}\, c$ with $c \in \mathcal{C}_1$ a
final Clifford, so between any two consecutive $T$ gates the
intervening Clifford is either $H$ or $SH$.
Neither lies in $N(Z)$: conjugation by $H$ sends $Z$ to $X$, and
conjugation by $SH$ sends $Z$ to $Y$, in neither case to
$\pm Z$. The block is therefore a separated word in the sense of
Lemma~\ref{lem:valuation}, whose interior Cliffords are required
to lie outside $N$, and the lemma gives
$\mathrm{lde} = t(B_i)$ for its single-qubit Bloch
representation.

It remains to see that embedding the block on one wire of an
$n$-qubit register does not change this. If $V$ acts as $U$ on a
single wire and trivially elsewhere, split each Pauli as
$\sigma = \sigma' \otimes \sigma''$ and
$\tau = \tau' \otimes \tau''$, with the primed factor on the
active wire. Then
\begin{equation}
\hat V_{\sigma\tau}
= 2^{-n}\,
\mathrm{tr}\big(\sigma' U \tau' U^\dagger\big)\,
\mathrm{tr}\big(\sigma'' \tau''\big)
= \hat U_{\sigma'\tau'}\,\delta_{\sigma''\tau''},
\end{equation}
using $\mathrm{tr}(\sigma''\tau'') = 2^{n-1}
\delta_{\sigma''\tau''}$ on the remaining wires. So $\hat V = \hat U \otimes I$
with $I$ integral of exponent zero, and the two channels have
the same numerator parity and hence the same lde. Rigidity of
the block in isolation then follows from
Corollary~\ref{cor:cert}: any exact optimizer's output has
$T$-count at least $\mathrm{lde}(\hat B_i) = t(B_i)$.

(b) Accumulate the channel along $C$ from the identity. By
Corollary~\ref{cor:cert} each $T$ step raises the lde by at most
one, and by Lemma~\ref{lem:mqval}(b) it raises it by exactly one
unless the pair-constancy criterion fires. Suppose a collapse
occurs at a step inside block $B_i$. Were the accumulated
residue at that step the one $B_i$ would have generated on its
own, that is, were the prefix trivial, part (a) would forbid the
collapse, since it would force $\mathrm{lde}(\hat B_i)$ below
$t(B_i)$. The collapse is therefore caused by the difference
between the true accumulated residue and the block's own, which
is precisely the contribution of $G_0 B_1 \cdots G_{i-1}$. No
other location is available: steps inside the glue are Clifford
steps, which by the proof of Corollary~\ref{cor:cert} leave the
lde unchanged and admit no collapse at all.

(c) By Corollary~\ref{cor:cert} applied to
$V = \llbracket C \rrbracket$, any semantics-exact
$\mathcal{O}$ returns a circuit with
$t(\mathcal{O}(C)) \ge \mathrm{lde}(\hat C)$. Subtracting from
$t(C)$ and dividing gives the stated bound. The left-hand side
is measured by running the optimizer; the right-hand side is
computed by accumulating the exact channel along $C$, at one
channel multiplication per gate.
\end{proof}

Part (c) turns the theorem into an experiment with a
pre-registered ceiling, and the experiment is reported in
Sec.~\ref{sec:numerics}. Three of its findings are worth
anticipating here, because the theorem is what makes them
interpretable. First, \texttt{teleport\_reduce} removes
$1.8$--$3.8\%$ of the $T$-count across all five pipeline
configurations, reproducing the near-null result of
Ref.~\cite{trasyn2025} and extending it to $n = 3$. Second,
almost none of what the optimizer left behind can ever be
removed: writing $T_{\mathrm{tel}}$ for the $T$-count after
\texttt{teleport\_reduce}, the certified incompressible fraction
$\mathrm{lde}/T_{\mathrm{tel}}$ averages $99.4$--$99.9\%$ and
never falls below $96.9\%$ on any single instance, so the
residual headroom left to \emph{any} semantics-exact optimizer
is $0.1$--$0.6\%$ on average, the measured reduction sits
essentially on the ceiling of part (c) everywhere, and the
smallness is explained rather than merely observed.
Third, the collapse rate that part (b) identifies as the only
source of compression measures $1.0$--$2.3\%$ on the real
grammar, against $22$--$54\%$ on toy commuting-parity grammars,
confirming that the long Matsumoto--Amano blocks scramble
channel residues far more effectively than bare parity ladders
do. Rigidity moreover strengthens as the accuracy target
tightens, which is the least convenient direction possible: the
regime in which optimization matters most is the regime in which
there is provably least to optimize.

This completes the theory. The single-qubit account is closed by
an exact formula, the multi-qubit account by exact per-instance
certificates, and both halves say the same thing about where
optimization gains come from. What remains is to describe how
the computations were carried out and to report what they found,
including the measurements just quoted, the depth-independence
of the plateau, and the exhaustive verifications on which
several of the proofs above depend.

\section{Numerical Methods and Results}
\label{sec:numerics}

\subsection{Methods}

Every number reported below comes from an implementation written
for this paper and validated component by component before use;
the validation record is Appendix~\ref{app:num}. We describe the
four pieces in turn: the nets, the synthesis recursion, the exact
arithmetic, and the multi-qubit pipeline.

\emph{Nets.} All Solovay--Kitaev results are reported on two
base nets, built by deliberately different constructions so that
net-dependent claims can be separated from artifacts of one
construction. The \emph{coarse} net is a raw enumeration of all
words up to length $l_0 = 6$, with covering-radius angle
$\theta_{\mathrm{cov}} \approx 0.34$. The \emph{fine} net is
built by breadth-first search retaining one shortest
representative per group element, reaching $l_0 = 8$ with
$N = 2{,}015{,}538$ words and
$\theta_{\mathrm{cov}} \approx 0.093$. The two differ by a
factor of nearly four in resolution and by their redundancy
structure, which is what makes the agreement of the
parameter-free prediction of Corollary~\ref{cor:sqrtlaw} on both
of them informative.

\emph{Synthesis.} We implement the Dawson--Nielsen
recursion~\cite{dawson-nielsen} as specified, with one point
requiring care: the balanced group-commutator factorization
needs the angle $\varphi$ solving
$\sin(\theta/2) = 2q^2\sqrt{1-q^4}$, $q = \sin(\varphi/2)$, and
we solve it by explicit bisection on the branch
$q \in (0, 2^{-1/4}]$ isolated in Lemma~\ref{lem:scale} rather
than by a closed-form approximation. The solver is validated to
worst-case error $1.5\times10^{-8}$ over $200$ random targets.
Single-qubit Haar targets are sampled by QR decomposition of
complex Gaussian matrices with the phase correction that makes the
distribution genuinely Haar~\cite{mezzadri}.

\emph{Exact arithmetic.} Free-product reduction, the
$\Zb[\omega]$ and $\Zb[\sqrt2]$ rings, the $SO(3)$ Bloch
representation, and the $n$-qubit Pauli channel are all
implemented over exact integers with the storage and reduction
conventions of Sec.~\ref{sec:prelim}, so that no floating-point
value enters any certificate. Floating point is used only to
\emph{check} the exact code, by comparison against direct matrix
evaluation, and the observed agreement, worst case
$3\times10^{-15}$ over $300$ random words, is a test of the
exact implementation rather than an ingredient of any result.

\emph{Multi-qubit pipeline.} The assembled circuits of
Theorem~\ref{thm:C} are produced by \texttt{qiskit}'s unitary
synthesis~\cite{qiskit}, transpiling to the $\{u, \mathrm{CX}\}$ basis, which
implements the two-qubit KAK decomposition at $n = 2$ and the
quantum Shannon decomposition at $n = 3$, with each single-qubit
rotation then passed to \texttt{pygridsynth}~2.0.0, an
implementation of the Ross--Selinger
algorithm~\cite{ross-selinger2016}, at the per-rotation accuracy
budget. Gate-order and endianness
conventions between the two libraries are a standard source of
silent error, so we validate the assembled circuit against
direct unitary evaluation at both $n = 2$ and $n = 3$
(Appendix~\ref{app:num}). Haar-random targets are drawn with
\texttt{qiskit}'s \texttt{random\_unitary} under fixed, disclosed
seeds.

Certificates at this scale need one further idea, because a
dense channel product costs $O(4^{3n})$ per gate and the
circuits carry thousands of gates. We exploit the structure
established in Sec.~\ref{sec:multiqubit}: Clifford channels are
signed permutations, and $\pi/8$-rotation channels are
block-sparse, acting as the identity on the commuting sector and
pairwise on the anticommuting one
(Lemma~\ref{lem:mqval}(a)). A structured accumulator using both
facts costs $O(4^{2n})$ per gate, and because a fast path that is
wrong would silently corrupt every rigidity number in this paper,
it is validated against the dense implementation
(Appendix~\ref{app:num}).

All optimizer comparisons use \texttt{full\_reduce} and
\texttt{teleport\_reduce} as shipped in the public
\texttt{PyZX} library~\cite{pyzx}, with no modification, so that
what we measure is the behaviour a practitioner would encounter.

\subsection{The exactness barrier (Theorem~\ref{thm:B})}

The barrier is a statement about two numbers, and this
subsection measures both. For each recursion depth $k$ we run
the SK recursion on the coarse net, take the synthesized word
$W_k$, and compute its exact floor
$\tmin(\llbracket W_k \rrbracket) = \mathrm{lde}(\hat W_k)$ by
the exact Bloch arithmetic of Lemma~\ref{lem:bridge}. Against
this we set the resynthesis benchmark: the leading-order
Ross--Selinger $T$-count $3\log_2(1/\epsilon_k)$ for a single
$z$-rotation at the accuracy $\epsilon_k$ that the SK word
actually achieves, measured per instance rather than assumed
from the convergence estimate. (A general element costs a
constant times this by the Euler decomposition of
Theorem~\ref{thm:B}; the constant shifts the benchmark
vertically and does not affect the growth comparison below.) Table~\ref{tab:barrier} and
Fig.~\ref{fig:barrier} report the comparison.

\begin{table}[htbp]
\centering
\caption{Exactness barrier: exact $\tmin$ floor vs.\ matched-accuracy
resynthesis, coarse net. $L$ is mean SK word length; $\sde$, the
mean smallest denominator exponent of the $\Zb[\omega]$ unitary
matrix, is reported as a diagnostic and distinguished from the
Bloch-image exponent $\mathrm{lde} = \tmin$ that certifies the
floor.}
\label{tab:barrier}
\begin{tabular}{ccccc}
\hline\hline
$k$ & $L$ & $\sde$ & $\tmin$ floor & sep.\ factor \\
\hline
2 & 116    & 12   & 22   & 1.9   \\
3 & 553    & 46   & 89   & 6.7   \\
4 & 2848   & 250  & 497  & 26.1  \\
5 & 14087  & 1232 & 2462 & 101.2 \\
\hline\hline
\end{tabular}
\end{table}

\begin{figure}[htbp]
\centering
\includegraphics[width=0.95\linewidth]{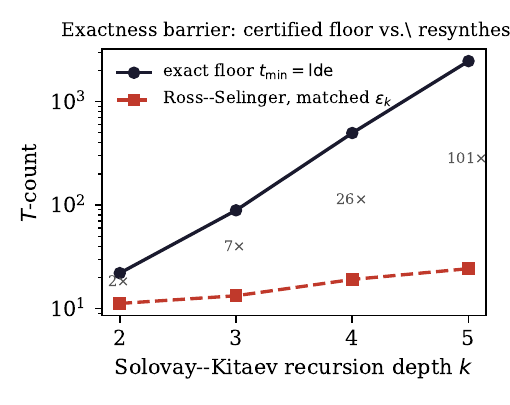}
\caption{The exactness barrier separates the certified exact floor
from approximation-aware resynthesis at matched accuracy, with a
geometrically growing gap (Theorem~\ref{thm:B}).}
\label{fig:barrier}
\end{figure}

The separation factor grows from $1.9\times$ at $k = 2$ to
$101\times$ at $k = 5$, by a factor of $3.5$--$3.9$ per level,
against the value $10/3$ that Theorem~\ref{thm:B}(b) predicts
from the measured floor growth. Three features of the table are
worth reading off explicitly, since each is a check on the
theory rather than a restatement of it.

First, the floor grows at the rate of the word, not of the
accuracy. Word length $L$ grows by $4.8$--$5.2$ per level, close
to the $5^k$ block count, and the floor tracks it at a
remarkably stable ratio: $\tmin/L$ sits at $0.16$--$0.19$ across
all four depths. This is the instance-level input that
Theorem~\ref{thm:B}(b) requires, the hypothesis
$\tmin = \Omega(c^k)$ with $c \approx 5 > 3/2$, and it is
measured rather than assumed.

Second, the floor is almost exactly twice the $\Zb[\omega]$
denominator exponent $\sde$ of the unitary itself, the ratio
$\tmin/\sde = \mathrm{lde}/\sde$ rising from $1.83$ at $k = 2$ to
$2.00$ at $k = 5$. The factor of two is the passage from the
$\Zb[\omega]$ representation of the unitary to the $\Zb[\sqrt2]$
representation of its Bloch image, under which denominators
square; the approach to exactly $2$ with depth indicates that the
sporadic reductions available at small exponents die out as the
words lengthen. As emphasized in the caption, $\sde$ is a
diagnostic only; the certificate is $\mathrm{lde}$.

Third, the resynthesis side behaves as it should. Inverting the
separation factors gives implied Ross--Selinger $T$-counts of
$11.6, 13.3, 19.0, 24.3$ at $k = 2, \ldots, 5$. These grow by a
factor of only $1.2$--$1.4$ per level, tracking
$3\log_2(1/\epsilon_k)$ with $\log(1/\epsilon_k) \sim (3/2)^k$:
resynthesis climbs geometrically in $k$, but at base $3/2$, while
the floor climbs at base $\approx 5$. The gap between them is the
ratio of the two, and its per-level growth of $3.5$--$3.9$ is
exactly the ratio of the two bases, $5/(3/2) = 10/3$, recovered
independently from each side of the table. The geometric growth
of the separation is therefore not an artifact of one side being
mismeasured: it is the arithmetic difference between a recursion
that lengthens words fivefold per level and an accuracy budget
that tightens at the Solovay--Kitaev rate.

One further measurement belongs here, though it tests
Theorem~\ref{thm:L1} rather than the barrier. On $24$ instances
spanning $k = 2$--$4$, with floors up to $568$, we compare the
$T$-count that \texttt{teleport\_reduce} actually reaches
against the exact $\tmin$ certificate. The optimality gap is
zero on every instance, without exception. This is a far more
demanding test than the planted family of
Sec.~\ref{sec:linkage}, whose words were built to exercise the
normalizer condition: these are real synthesis outputs, hundreds
to thousands of letters long, with no structure planted in them
at all, and the automated optimizer lands on the certified floor
every time.

\subsection{Compressibility plateau (Theorems~\ref{thm:A},
\ref{thm:Aprime})}

The plateau is the claim that Tier-1 compressibility is a
statistic of the net and not of the recursion depth, and the way
to see it is to vary the depth over as wide a range as the
arithmetic allows and watch the compression ratio not move.
Fig.~\ref{fig:plateau} plots $\gamma_1(W_k)$ against $k$ from $1$
to $6$ on both nets.

\begin{figure}[htbp]
\centering
\includegraphics[width=0.95\linewidth]{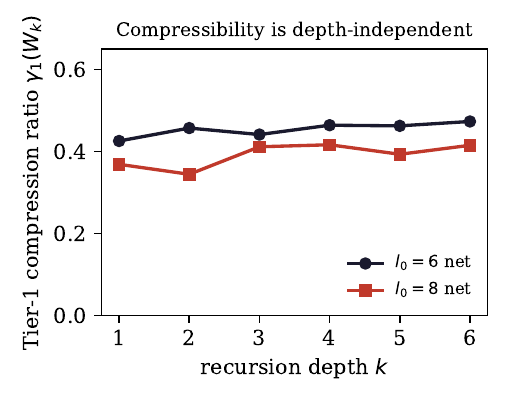}
\caption{Tier-1 compression ratio is depth-independent on both
nets, confirming Theorem~\ref{thm:Aprime}.}
\label{fig:plateau}
\end{figure}

Across those six levels the SK approximation error falls from
about $10^{-1}$ to about $10^{-6}$, five orders of magnitude, and
the word length grows from tens of letters to tens of thousands.
Over that entire range the compression ratio stays flat: at
$0.44$--$0.47$ on the coarse net and $0.34$--$0.42$ on the fine
net, a spread of two to four percentage points against a
five-order change in the quantity one might naively expect it to
track. This is the depth-independence that
Theorem~\ref{thm:Aprime} predicts, now visible directly rather
than inferred. The value of the plateau, and its dependence on the
net, are what the present theory supplies, and both nets confirm
the qualitative prediction that the compression fraction is set by
the net and not by the recursion depth.

The two plateaus differ, and the difference is itself a
prediction borne out. The coarse net, with the larger covering
radius, plateaus higher than the fine net, because a coarser
quantization leaves longer net words with more internal
redundancy for Tier-1 to remove. This is the qualitative content
of Theorem~\ref{thm:Aprime}: the limit is a functional of the
net's quantization measure, so two nets should give two
different plateaus, and they do. The quantitative version of the
same statement is the parameter-free test of
Corollary~\ref{cor:sqrtlaw} carried out at the end of
Sec.~\ref{sec:limit}, where the deep-leaf angle statistics of
these same two nets were found to land inside the windows their
covering radii predict, with nothing fitted.

Finally, the tiers above Tier-1 add little. Running the full
\texttt{teleport\_reduce} strategy, which brings local
complementation, pivoting, and phase teleportation to bear on
top of spider fusion, improves the $T$-count reduction by only
$3$--$9$ percentage points over Tier-1 across the same instances.
The bulk of what any exact strategy achieves on a
Solovay--Kitaev word is already achieved by the purely local
Tier-1 rules whose action Sec.~\ref{sec:tier1} characterizes
exactly, and the elaborate global machinery buys a modest
surplus on top. This is consistent with the single-qubit picture
of Sec.~\ref{sec:linkage}, where the extra power of the global
tiers was pinned down precisely as normalizer linkage: on these
long words the additional normalizer merges available beyond
local fusion are real but few.

\subsection{Multi-qubit rigidity (Theorem~\ref{thm:C})}

The single-qubit sections could exhibit an exact optimality gap of
zero because the minimal $T$-count was computable. At $n \ge 2$ it
remains computable in principle, by the exhaustive algorithm of
Gheorghiu, Mosca, and Mukhopadhyay~\cite{gmm2022}, but at a cost
exponential in the $T$-count itself, which puts the circuits
studied here, with $T$-counts in the hundreds to thousands, far
out of reach. The certificate of Corollary~\ref{cor:cert} is what
takes its place: for each synthesized circuit we compute
$\mathrm{lde}(\hat C)$ by accumulating the exact Pauli channel gate
by gate, which bounds from below the $T$-count of everything a
semantics-exact optimizer could return, and we compare it against
what \texttt{teleport\_reduce} actually achieves. The gap between
the two is the entire remaining headroom, and Theorem~\ref{thm:C}(c)
makes it a per-instance certified quantity rather than an estimate.

We run the full QSD+\texttt{gridsynth} pipeline on $143$
Haar-random targets across five configurations: $n = 2$ at
$\epsilon \in \{10^{-2}, 10^{-3}, 10^{-4}\}$ with fifty, fifty,
and fifteen targets, and $n = 3$ at
$\epsilon \in \{10^{-2}, 10^{-3}\}$ with twenty and eight targets,
the raw $T$-counts ranging from about $420$ to about $2570$.
Table~\ref{tab:e6} and Fig.~\ref{fig:e6} report the outcome, and
three findings follow from it, each an interpretation the theorem
licenses rather than a bare number.

\begin{table}[htbp]
\centering
\caption{Multi-qubit rigidity on the real QSD+\texttt{gridsynth}
pipeline (mean $\pm$ one standard deviation over $N$ Haar targets;
``incomp.''\ is the certified incompressible fraction
$\mathrm{lde}/T_{\mathrm{tel}}$, with the per-instance minimum in
brackets).}
\label{tab:e6}
\setlength{\tabcolsep}{3.2pt}
\begin{tabular}{cccccc}
\hline\hline
$n$ & $\epsilon$ & $N$ & $\bar T_{\mathrm{raw}}$ &
reduction (\%) & incomp.\ (\%) \\
\hline
2 & $10^{-2}$ & 50 & 419  & $3.6\pm2.2$ & $99.53\pm0.55$ [96.9] \\
2 & $10^{-3}$ & 50 & 644  & $2.6\pm1.4$ & $99.79\pm0.30$ [98.6] \\
2 & $10^{-4}$ & 15 & 851  & $1.8\pm0.7$ & $99.91\pm0.09$ [99.8] \\
3 & $10^{-2}$ & 20 & 1696 & $3.8\pm0.9$ & $99.43\pm0.32$ [98.5] \\
3 & $10^{-3}$ & 8  & 2570 & $2.1\pm0.7$ & $99.55\pm0.20$ [99.3] \\
\hline\hline
\end{tabular}
\end{table}

\begin{figure}[htbp]
\centering
\includegraphics[width=0.95\linewidth]{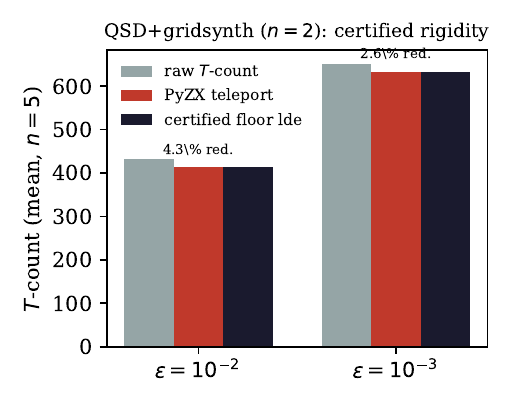}
\caption{Measured \texttt{PyZX} $T$-count reduction (red)
against the certified maximum possible exact reduction computed
from the channel lde (black), across all five pipeline
configurations (error bars: one standard deviation). The measured
reduction sits essentially on the certified ceiling everywhere:
rigidity is certified, not merely observed (Theorem~\ref{thm:C}).}
\label{fig:e6}
\end{figure}

First, the optimizer removes very little. The mean
\texttt{teleport\_reduce} reduction is $1.8$--$3.8\%$ across all
five configurations, which is the near-null behaviour that
Ref.~\cite{trasyn2025} observed for \texttt{gridsynth} circuits,
here confirmed on the assembled multi-qubit pipeline and extended
to $n = 3$. On its own this number is the kind of measurement that
invites the wrong two readings, that the optimizer is weak or that
the circuits happen to be hard, neither of which any amount of
further optimization could distinguish.

Second, the certificate settles which reading is correct, and the
answer is neither. The certified incompressible fraction
$\mathrm{lde}/T_{\mathrm{tel}}$ averages $99.4$--$99.9\%$ and never
falls below $96.9\%$ on any single instance, so the headroom left
to \emph{any} semantics-exact optimizer, not merely to
\texttt{PyZX}, is at most a few percent, and the measured reduction
sits essentially on the certified ceiling of
Theorem~\ref{thm:C}(c) in every configuration
(Fig.~\ref{fig:e6}). The gains are small because the circuits are
rigid, provably, and no optimizer will do better. The mechanism is
visible in the collapse rate that Theorem~\ref{thm:C}(b) identifies
as the only available source of compression: on the real grammar
it measures $1.0$--$2.3\%$, against $22$--$54\%$ on toy
commuting-parity grammars built to expose linkage, confirming that
the long Matsumoto--Amano blocks emitted by \texttt{gridsynth}
scramble channel residues far more thoroughly than a bare parity
ladder does, and leave almost no cross-block linkage for the
optimizer to find.

Third, rigidity tightens with accuracy, in the least convenient
direction. At $n = 2$ the mean incompressible fraction climbs from
$99.53\%$ at $\epsilon = 10^{-2}$ to $99.91\%$ at
$\epsilon = 10^{-4}$, while over the same range the collapse rate
falls from $2.2\%$ to $1.0\%$; the $n = 3$ configurations show the
same trend. The reason is structural: a tighter accuracy target
lengthens each \texttt{gridsynth} block, a longer block scrambles
its channel residue more completely, and a more scrambled residue
offers less cross-block linkage. The regime in which magic-state
overhead makes optimization matter most, deep circuits at tight
accuracy, is therefore exactly the regime in which the theory
guarantees there is least to gain.

\section{Discussion and Open Problems}
\label{sec:discussion}

The results of this paper are sharp where they are exact and
honest where they are not. This section separates the two,
setting out what remains open and why, and locating the work
relative to the neighbouring approaches it touches.

\emph{Hypothesis E and the limits of the limit law.} The
stationary limit law of Theorem~\ref{thm:Aprime} rests on two
kinds of input, and it is worth being explicit about which is
which. The dynamics of the residual scale are unconditional:
Lemma~\ref{lem:scale} and the path representation of
Lemma~\ref{lem:pathrep} are proven outright, and the
$\sqrt{}$-covering-radius prediction of
Corollary~\ref{cor:sqrtlaw} follows from them with no hypothesis
at all, which is why we could test it directly and
parameter-free in Sec.~\ref{sec:limit}. What Hypothesis~E adds is
directional ergodicity, the statement that the axis coordinate of
the recursion equidistributes rather than locking into a
measure-zero orbit. This is the same kind of assumption that
underlies the number-theoretic analyses of exact
synthesis~\cite{ross-selinger2016}, where the equidistribution of
the relevant Diophantine data is likewise taken as calibrated
rather than proven, and our empirical evidence for it is strong:
the plateau appears on both nets, at the predicted separation,
across five orders of magnitude in accuracy. A proof, as opposed
to a calibration, would most plausibly come from an
equidistribution theorem for the Voronoi quantization of the
specific net construction, a question in the metric theory of the
group rather than in quantum computation, and plausibly drawing
on the arithmetic machinery that yields optimal covering for
golden and super-golden gate
sets~\cite{lps1986,parzanchevski-sarnak}. We regard this as the
most interesting open mathematical problem the paper raises, and
we note that a negative result would not overturn the
certificates, which never invoke Hypothesis~E; it would only
demote the limit law from a theorem to a conjecture with strong
numerical support.

\emph{A closed characterization at $n \ge 2$.} The single-qubit
theory has a feature the multi-qubit theory lacks: a syntactic
criterion, membership in $N(Z)$, that decides collapse without
computing anything, and from which the closed-form $T$-count
formula of Theorem~\ref{thm:linkage} follows. Section~\ref{sec:multiqubit}
shows this cannot survive unchanged, since the identity
$\tmin = \mathrm{lde}$ already fails at the controlled-$S$ gate,
and with it any hope of reading the minimal $T$-count off the
word by a normalizer condition alone. What replaces it is the
certificate, which is unconditional and cheap but per-instance
rather than closed. Whether some closed characterization exists
at $n \ge 2$, perhaps one that is grammar-dependent, keyed to the
specific structure of QSD output rather than to arbitrary
Clifford+$T$ words, is open, and it is the natural next target for
anyone wishing to extend the single-qubit rigidity story to a
genuine multi-qubit formula. Our results neither establish such a
characterization nor rule it out; they show only that the naive
route through $N(Z)$ is closed.

\emph{Abstract strategy versus shipped implementation.} Theorem~\ref{thm:L1}
proves that the automated ZX strategies attain $\tmin$, and the
proof is about the abstract rule sets and their phase-table
semantics, exactly as specified in the
literature~\cite{duncan-kissinger2020,kissinger-tcount}. The
\texttt{PyZX} library implements those rules with particular
choices at the boundary, gadget conventions, the order in which
candidate rewrites are tried, and so on, and we do not audit the
source to prove that these choices conform to the abstract
specification. Instead we check conformance where it can be
checked exhaustively, on the $279$ instances of
Sec.~\ref{sec:linkage}, with agreement in every case. The
distinction matters because it locates precisely what is proven
and what is verified: the mathematics is a theorem about ZX
rewriting, and the claim that a specific released version of a
specific library realizes that mathematics is an empirical one,
held to the standard of exhaustive testing rather than of formal
verification.

\emph{Relation to phase squashing.} The exactness barrier of
Theorem~\ref{thm:B} says that no semantics-exact optimizer can
cross a floor set by the synthesized element. Kelly and
Kissinger's phase squashing~\cite{kelly-kissinger2025} is the
complementary move: it crosses the floor by giving up exactness,
rewriting a ZX diagram approximately with a rigorous bound on the
error introduced on diagrams with generalized flow. The two
results sit on opposite sides of the exact/approximate line this
paper draws. A synthesized circuit already carries an
approximation error from synthesis, so composing an
approximation-aware optimizer with a synthesis pass raises a
genuine question our exact analysis does not answer: how the two
error budgets should be jointly allocated, and whether phase
squashing applied to a synthesized circuit can recover part of
the gap that the exactness barrier forbids exact methods from
touching. Placing phase squashing as a third curve alongside the
exact floor and the resynthesis benchmark of
Fig.~\ref{fig:barrier} would make the trade-off visible, and is
the most natural experimental extension of this work.

\emph{Independent empirical confirmation.} The rigidity we prove
has a mirror in the recent synthesis literature. Hao, Xu, and
Tannu, whose main result is a tensor-network synthesis method
that improves on \texttt{gridsynth}, report as a subsidiary
finding (their RQ5) that applying \texttt{PyZX} to
already-synthesized circuits recovers almost nothing, and treat
this as an empirical observation without an underlying
cause~\cite{trasyn2025}. Our Theorem~\ref{thm:C} supplies the
cause: \texttt{gridsynth} emits Matsumoto--Amano normal forms,
each block sits on its own exact $T$-count floor by
Theorem~\ref{thm:B}, and a semantics-exact optimizer has only the
cross-block collapse of Corollary~\ref{cor:cert} left to harvest,
which we measure at a few percent at most. Their null result is
therefore not a limitation of \texttt{PyZX} but the barrier
operating at near-zero slack, and its magnitude is predicted
rather than merely reproduced. Read in the other direction, their
measurement is an independent, differently instrumented
confirmation of our multi-qubit certificates on real pipelines. The
same reading applies to the broader landscape of synthesis and
resynthesis tools~\cite{synthetiq,bqskit,fowler2011}: whatever
produces the circuit, once it is fixed the certificate bounds what
any exact optimizer can subsequently remove.

\emph{Other gate sets and architectures.} We have worked
throughout with Clifford+$T$ and with the exact ring
$\Zb[\omega]$ that makes its arithmetic finite. The valuation and
linkage machinery is not tied to that choice in principle: any
gate set whose exact synthesis lands in a ring with a denominator
valuation, such as the Clifford+$\sqrt{T}$ or $V$-gate libraries,
should admit an analogous floor, though the normalizer analysis
that gives the single-qubit formula would have to be redone for
each; the $V$-basis and the general approximation frameworks
built on other rings~\cite{bocharov-gurevich-svore,kbry2015,klmpp2023}
are the natural first targets. A separate and more practically pressing extension is
connectivity-constrained extraction, where the circuit must
respect a hardware coupling graph; the barrier is a statement
about the implemented element and is indifferent to how the
circuit is laid out, so the floor survives, but the achievable
side of the comparison would change and is left for future work.

\section{Conclusion}
\label{sec:conclusion}

We began with a puzzle from the compilation literature: the same
ZX optimizer, run on circuits computing the same unitaries to the
same accuracies, behaves in opposite ways depending only on which
algorithm synthesized its input, removing a stable fraction of a
Solovay--Kitaev circuit and almost nothing from a
\texttt{gridsynth} one. We have shown that these are not two
phenomena but one.

What made the unification possible was reading a synthesized
circuit as a word in a group rather than as a list of gates, so
that ZX simplification became the computation of a normal form
instead of a heuristic whose yield one measures. Tier-1
simplification computes precisely the normal form of
$\Zb_2 * \Zb_8$ (Lemma~\ref{lem:freeprod}), which makes the
compressibility of a Solovay--Kitaev word a combinatorial quantity
readable per instance. The step that follows was not visible from
the outset: if what an exact optimizer can remove is determined
algebraically, then so is what it cannot. Every synthesized
circuit carries the denominator exponent of the ring element it
implements, no semantics-exact optimizer can produce a $T$-count
below it, and that exponent is computable per instance from the
circuit alone (Theorem~\ref{thm:B}). With the floor in hand the
two contradictory behaviours resolve at once. A Solovay--Kitaev
word sits far above its floor, because the recursion multiplies
word length fivefold per level while the implemented element
wanders only as far as the accuracy budget allows, and the
distance between the two is what local rewriting harvests.
\texttt{gridsynth} output sits on its floor already, so an exact
optimizer has nothing to take. One barrier, two distances.

The mechanism at one qubit turned out to be identifiable exactly:
minimal $T$-count is a closed combinatorial function of phase
linkage through the $Z$-axis normalizer, and the automated ZX
strategies, which know nothing of rotation axes or denominator
exponents, compute that function every time. Where the closed
form fails, at two qubits and beyond, the same valuation still
yields unconditional per-instance rigidity certificates, and those
certificates explain and quantify a near-null optimization result
that the synthesis literature had recorded without a cause.

Two limits of scope are worth stating plainly. The first is
exactness: everything here concerns optimizers that preserve the
implemented element, and approximation-aware rewriting, of which
phase squashing~\cite{kelly-kissinger2025} is the current
instance, deliberately crosses the barrier by giving up that
constraint. The second is unitarity. The ZX-calculus is a
language for arbitrary linear maps, and much of what is done with
it, from measurement-based computation to the non-unitary
constructions used in graph algorithms~\cite{mansky2024}, lies
outside the circuit-to-circuit setting we analyse. Our floor is a
statement about elements of a group, and it says nothing about
diagrams that do not implement one.

Two threads run through what remains. The first is that exactness
is not a technical convenience but the source of the results: it
is because the arithmetic is exact that a floor can be certified
rather than estimated, that a lower bound can be shown attained
rather than merely approached, and that a compressibility fraction
can be predicted rather than fitted. The second is that the
resource cost of a synthesized circuit is legible in its algebra
before any optimizer is run, which is a statement about where
optimization can and cannot help: on the number-theoretically
synthesized circuits that dominate practice the available gains
are bounded by a quantity computable in advance, and that bound
tightens as accuracy tightens.

One gap remains, and it is a real one. Hypothesis~E asks that the
direction coordinate of the Solovay--Kitaev recursion
equidistribute over the net as the net is refined, which is a
question in ergodic theory rather than in quantum computation.
That the number-theoretic half of the synthesis literature rests
on assumptions of exactly this kind, the running-time guarantees
of Ross--Selinger among them, is not a coincidence: the
difficulty is intrinsic to the class of statement. Machinery
capable of proving such equidistribution exists, in the Hecke
operator methods of Lubotzky, Phillips, and
Sarnak~\cite{lps1986} and in the analysis of super-golden gate
sets~\cite{parzanchevski-sarnak}, but it draws on deep arithmetic
input and applies to gate sets with arithmetic structure, whereas
the nets used in practice, and in this paper, are generic
constructions. Bridging that distance is the most interesting
mathematical problem we leave open. It does not touch the
certificates, which never invoke the hypothesis.

Finally, the question this paper answers is the static one: for a
fixed target, how much of the synthesized word survives
rewriting. The dynamic question is untouched and, we think, more
interesting. A target moving continuously through $SU(2)$ does
not deform its synthesized word continuously; the word is a
discrete object and it jumps. The analysis of
Lemma~\ref{lem:nocollapse} already supplies the frame in which to
ask when: the recursion partitions $SU(2)$ into finitely many
cells on which every choice it makes is constant, so the word is
locally constant and changes only where a path crosses a cell
boundary, a locus of nearest-neighbour ties and solver branch
switches. What those boundaries look like, how a physically
meaningful path such as a Hamiltonian evolution meets them, and
whether the resulting jump statistics have their own limit law,
are questions the machinery developed here is built to address,
and they are questions about a picture. The object of study, in
the static case as in the dynamic one, is not the circuit but the
geometry that produced it.

\begin{acknowledgments}
The line of inquiry pursued here was prompted in part by an
unpublished draft of D.~De Silva on the numerical evaluation of
ZX-calculus optimization for Solovay--Kitaev synthesis, whose
questions motivated the theoretical treatment we develop.
C.-F.~K.\ thanks M.~Mansky for introducing him to the tools of the
ZX-calculus during a visit to the Ludwig-Maximilians-Universit\"at
M\"unchen, and S.\ Lorenzo for hosting the postdoctoral
appointment at the Universit\`a degli Studi di Palermo during
which much of this work was carried out. C.-F.K. is grateful to the European Union and the Region Reunion, France (POE FEDER 2021–2027, n°2025-0954-007180) for the funding support. K.D.S. is grateful to the European Union and the Region Reunion, France (POE FEDER 2021–2027, n°2025-0963-007179) for the funding support. A.M. is grateful to the European Union and the Region Reunion, France (POE FEDER 2021–2027, n°2026-0237-010722) for the funding support. PEACCEL is supported through a research program partially cofunded by the European Union (UE) and Region Reunion (FEDER).” 
\end{acknowledgments}

\section*{Author contributions}
C.-F.~K.\ conceived the project, developed the theory, wrote the
manuscript, and performed the numerical experiments.
A.~Mahasinghe and K.~De Silva advised on the initial ideas during
discussions at the University of Colombo and the University of
Western Australia, and reviewed the manuscript after the first
draft was completed. F.~Cadet and J.~Wang advised on the work, reviewed and edited the manuscript.
All authors read the manuscript.

\appendix

\section{The multiplicity audit for Lemma~\ref{lem:densweight}(b)}
\label{app:proofs}

This appendix proves the hard direction of
Lemma~\ref{lem:densweight}(b), the bound
$S \le 2\,\CM$ on the total pairwise saving. Throughout we use the
notation of Sec.~\ref{sec:tier1}: $v_1, \ldots, v_{N_k}$ are the
internally reduced blocks, $\CM$ the consumed mass,
$S = \sum_{i=1}^{N_k-1} s(v_i, v_{i+1})$ the total pairwise
saving, and, at a fixed junction $i$, the pairwise pops cancel or
merge the nested pairs $(x_m, y_m)$, $m = 1, \ldots, M_i$, with
$x_m$ the $m$-th syllable of $v_i$ from the right and $y_m$ the
$m$-th syllable of $v_{i+1}$ from the left. From Steps 1 and 2 of
the main-text proof, when $v_{i+1}$ arrives the stack ends in a
suffix of $v_i$ of length $\sigma_i \ge 0$, pristine except that
its deepest syllable may carry a phase altered by at most one
fusion during the absorption of $v_i$.

A pairwise reduction has a rigid depth structure, and the estimate
below leans on it at three separate points. Let $v$ and $w$ be
reduced words and reduce $vw$ by the stack algorithm, reading $w$
left to right against a stack initialised to $v$. Neither word
admits a rule internally, so every rule that fires does so between
the current stack top and the current letter of $w$, and the pops
therefore occur at consecutive depths: the first pairs the last
syllable $x_1$ of $v$ with the first syllable $y_1$ of $w$, and
the $m$-th pairs the $m$-th syllable $x_m$ of $v$ from the right
with the $m$-th syllable $y_m$ of $w$ from the left. Writing $M$
for the number of pops, the depths used are exactly
$1, \ldots, M$. Distinct depths consume distinct syllables of
$v$, whence
\begin{equation}
\label{eq:Mbound}
M \;\le\; \|v\| .
\end{equation}

Every pop but possibly the last is an annihilation. A merge
leaves a surviving fused syllable whose neighbours on both sides
have the type opposite to its own type's partner, so no further
rule can fire across it and the reduction halts. With $a$
annihilations and $\mu \in \{0,1\}$ merges we therefore have
$M = a + \mu$ and $s(v,w) = 2a + \mu$, and no single pop
contributes more than two units.

The global cascade at a junction absorbs a reduced block onto a
stack that is itself in normal form, so both facts apply to it
verbatim. We write $\hat a_j$ and $\hat\mu_j$ for its counts at
junction $j$, so that $\hat s_j = 2\hat a_j + \hat\mu_j$ with
$\hat\mu_j \in \{0,1\}$.

\begin{proposition}\label{prop:audit}
For every junction $i$,
\begin{equation}
\label{eq:perjunction}
s(v_i, v_{i+1}) \;\le\; \hat s_i + \hat s_{i-1},
\end{equation}
with the convention $\hat s_0 = 0$.
\end{proposition}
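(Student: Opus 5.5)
I would prove \eqref{eq:perjunction} by splitting the pairwise pops at junction $i$ according to the surviving suffix length $\sigma_i$, exactly as in Step~3 of the proof of Lemma~\ref{lem:densweight}. I then charge each regime either to the global cascade at junction $i$ (absorption of $v_{i+1}$, saving $\hat s_i$) or to the cascade at junction $i-1$ (absorption of $v_i$, saving $\hat s_{i-1}$). By the depth structure recorded above, the pairwise reduction of $v_iv_{i+1}$ pops at depths $1,\dots,M_i$. Consequently $s(v_i,v_{i+1}) = 2a_i+\mu_i$, with only the last pop possibly a merge. So it suffices to bound the number of pops in each regime, weighted by two, with a one-unit correction wherever a merge can occur.

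\emph{Upper regime, $m<\sigma_i$.} By Steps~1 and~2 the top $\sigma_i-1$ stack syllables are pristine copies of $x_1,\dots,x_{\sigma_i-1}$, and the letters of $v_{i+1}$ arrive in the same order. The global cascade therefore fires exactly the same pops there, with the same annihilation/merge type. These pops contribute identically to $s(v_i,v_{i+1})$ and to $\hat s_i$.

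\emph{Lower regime, $m>\sigma_i$.} The syllables $x_m$ with $m>\sigma_i$ lie in the consumed prefix of $v_i$ (Step~1), and there are exactly $c_{i-1}:=\|v_i\|-\sigma_i$ of them. These are the letters of $v_i$ removed during its own absorption. Every such removal is an annihilation against a stack letter, except possibly one terminal merge. Hence $\hat s_{i-1} = 2\hat a_{i-1}+\hat\mu_{i-1} \ge 2c_{i-1}-\hat\mu_{i-1}$, where a merge removes only the incoming letter of $v_i$. On the other side, the lower regime has at most $\min(M_i,\|v_i\|)-\sigma_i \le c_{i-1}$ pops by \eqref{eq:Mbound}, contributing at most $2(M_i-\sigma_i)$.

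\emph{Boundary depth $m=\sigma_i$, and the main obstacle.} The crude charges above leave a possible excess of roughly $2$ units at the single depth $\sigma_i$ plus $\hat\mu_{i-1}$. Closing this off-by-one is where the work lies, and I would handle it by a short case split.
\begin{itemize}
\item If the syllable at depth $\sigma_i$ is pristine, then depth $\sigma_i$ belongs to the upper regime, and the only deficit is $\hat\mu_{i-1}$. But $\hat\mu_{i-1}=1$ means the deepest surviving syllable was altered, which contradicts pristineness unless the merged syllable is a stack syllable below $v_i$. In that sub-case every letter of $v_i$ up to the merge was removed by annihilation, so $\hat s_{i-1}\ge 2c_{i-1}$ once the merge pairing is accounted for.
\item If the syllable at depth $\sigma_i$ is altered, it arose from the unique merge of junction $i-1$ (Step~2). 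That merge consumed the letter $x_{\sigma_i+1}$ of $v_i$ while keeping a stack syllable, so $\hat s_{i-1}=2c_{i-1}-1$. The pairwise pop at depth $\sigma_i$, if it fires at all, then contributes at most $2$ and must be paid for partly by this slack and partly by the global pop at that depth. Whether the global pop at depth $\sigma_i$ fires is the case distinction.
\end{itemize}
If the global pop does fire, it contributes at least $1$ to $\hat s_i$. If it does not fire, then the pairwise reduction reaching depth $\sigma_i+1$ requires $x_{\sigma_i}$ to be annihilated pairwise. I would check that this forces $M_i\le\sigma_i+c_{i-1}-1$, using the alternation of types, which recovers the missing units.

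I expect this boundary bookkeeping, in particular matching the single merge of $\hat\mu_{i-1}$ against the pairwise pop at depth $\sigma_i$, to be the only delicate step. Finally, summing \eqref{eq:perjunction} over $i$ gives $S\le 2\,\CM$, using $\hat s_0=0$ and Lemma~\ref{lem:densweight}(a).
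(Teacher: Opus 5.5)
Your skeleton is the paper's: split the pairwise pops at depth $\sigma_i$, match the shallow pops with the global cascade at junction $i$, and charge the deep pops to $\hat s_{i-1}$ via $M_i\le\|v_i\|$. But the boundary step, which you correctly flag as the only delicate one, is not closed, and as written it contains a false inequality.

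Your lower-regime paragraph fixes the convention. There $c_{i-1}=\|v_i\|-\sigma_i$ counts the letters of $v_i$ removed during its own absorption, including a merged incoming letter. So $\sigma_i$ is the length of the \emph{pristine} remnant, and $c_{i-1}=\hat a_{i-1}+\hat\mu_{i-1}$. Under that convention the syllable at depth $\sigma_i$ is always pristine. A merge at junction $i-1$ leaves the fused syllable one level deeper, at depth $\sigma_i+1$, in the position of the merged letter $x_{\sigma_i+1}$.

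Every merge case therefore lands in your first bullet, where you assert $\hat s_{i-1}\ge 2c_{i-1}$. In fact $\hat s_{i-1}=2\hat a_{i-1}+1=2c_{i-1}-1$ there, so the unit deficit you identified is still uncovered. Your second bullet puts the altered syllable at depth $\sigma_i$ while saying the merge consumed $x_{\sigma_i+1}$, which is inconsistent under either convention. It then leaves ``whether the global pop fires'' open and defers the other branch to an unverified bound on $M_i$.

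The missing idea is type preservation. A merge fuses two $Z$-syllables into a $Z$-syllable that stays on the stack. Suppose the pairwise reduction pops at the fused syllable's depth. Then all shallower pairwise pops were annihilations, the global cascade at junction $i$ reproduces them exactly, and so it meets the fused syllable against the same incoming $y$. It must fire there too, contributing at least one unit to $\hat s_i$ beyond the pristine-region pops, and that unit pays the deficit. Your ``does not fire'' branch is vacuous: if the global cascade does not fire at that depth, neither does the pairwise reduction, and there are no deeper pops.

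The paper organizes the same accounting by putting the fused syllable \emph{into} the remnant, $\sigma_i=\|v_i\|-\hat a_{i-1}$. Then the deep region has at most $\hat a_{i-1}$ pops, so $D\le 2\hat a_{i-1}\le \hat s_{i-1}$ with the merge unit $\hat\mu_{i-1}$ left over as slack. At depth $\sigma_i$ there are two subcases. If $\hat\mu_{i-1}=0$, then $\hat B=B$. If $\hat\mu_{i-1}=1$, the discrepancy $B-\hat B\le 2-1$ is paid by that slack.
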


\begin{proof}
Write $M_i$ for the number of pops in the pairwise reduction of
$(v_i, v_{i+1})$.

\emph{Step 1: the length of the surviving suffix.} Consider the
global cascade at junction $i-1$, which absorbs $v_i$ onto the
stack. By the depth structure just described, its $\hat a_{i-1}$
annihilations consume the syllables of $v_i$ at positions
$0, \ldots, \hat a_{i-1}-1$ counted from the left, one syllable of
$v_i$ per annihilation. If $\hat\mu_{i-1} = 1$ the cascade then
merges the stack top with the syllable of $v_i$ at position
$\hat a_{i-1}$, replacing the two by a single syllable which
remains in the stack and after which the cascade stops. Nothing
else of $v_i$ is touched. Hence the portion of the stack
contributed by $v_i$ consists, when $\hat\mu_{i-1} = 1$, of the
merged syllable together with the $\|v_i\| - \hat a_{i-1} - 1$
untouched syllables to its right, and when $\hat\mu_{i-1} = 0$ of
the $\|v_i\| - \hat a_{i-1}$ untouched syllables at positions
$\hat a_{i-1}$ and beyond. In both cases its length is
\begin{equation}
\label{eq:sigma}
\sigma_i \;=\; \|v_i\| - \hat a_{i-1},
\end{equation}
and in both cases only the deepest of its syllables can differ
from the corresponding syllable of $v_i$, and then only in its
phase. That deepest syllable sits at depth $\sigma_i$ from the top
of the stack and occupies the position of $x_{\sigma_i}$, the
$\sigma_i$-th syllable of $v_i$ from the right, which is exactly
the syllable at position $\hat a_{i-1}$ from the left. For
$i = 1$ nothing precedes $v_1$, so $\hat a_0 = 0$, $\sigma_1 =
\|v_1\|$, and $\hat s_0 = 0$, consistently
with~\eqref{eq:sigma}. Note also that a merge leaves a syllable
behind, so $\hat\mu_{i-1} = 1$ forces $\sigma_i \ge 1$;
equivalently, $\sigma_i = 0$ implies $\hat\mu_{i-1} = 0$ and
$\hat a_{i-1} = \|v_i\|$.

\emph{Step 2: the two reductions agree above the boundary.} We
claim that for every $m < \sigma_i$ the pairwise reduction fires a
pop at depth $m$ if and only if the global cascade at junction $i$
does, with the same outcome. Argue by induction on $m$. Both
processes read $v_{i+1}$ left to right, so both meet $y_m$ as
their $m$-th incoming syllable. Suppose both have fired identical
pops at all depths below $m$; then both have consumed the same
prefix of $v_{i+1}$ and the same suffix of the stack, so the
syllable each now meets at depth $m$ is the one occupying that
position, namely $x_m$ for the pairwise reduction and, by Step 1,
the same $x_m$ for the global cascade whenever $m < \sigma_i$.
Identical stack syllable and identical incoming syllable produce
an identical pop, or identically no pop, which closes the
induction. Let $\Sigma$ denote the contribution common to both
from these depths, let $B$ and $\hat B$ denote the pairwise and
global contributions at depth $\sigma_i$, and let $D$ denote the
pairwise contribution from depths $m > \sigma_i$. Then
\begin{equation}
\label{eq:split}
s(v_i,v_{i+1}) = \Sigma + B + D,
\qquad
\hat s_i \;\ge\; \Sigma + \hat B,
\end{equation}
the inequality because the global cascade may continue past depth
$\sigma_i$ into stack material older than $v_i$, which only adds
to $\hat s_i$.

\emph{Step 3: the deep pops are paid for by the previous
junction.} The pops contributing to $D$ occur at the consecutive
depths $\sigma_i + 1, \ldots, M_i$, so there are $M_i - \sigma_i$
of them, a quantity we may take to be positive since otherwise
$D = 0$. By~\eqref{eq:Mbound} and~\eqref{eq:sigma},
\begin{equation}
M_i - \sigma_i \;\le\; \|v_i\| - \sigma_i \;=\; \hat a_{i-1},
\end{equation}
and since each pop contributes at most two units,
\begin{equation}
\label{eq:deep}
D \;\le\; 2\hat a_{i-1} \;\le\; 2\hat a_{i-1} + \hat\mu_{i-1}
\;=\; \hat s_{i-1} .
\end{equation}
This is the step at which~\eqref{eq:sigma} does its work: had a
merge at junction $i-1$ shortened the remnant by a further
syllable, the right-hand side of~\eqref{eq:deep} would have been
$\hat s_{i-1} + 1$ and the estimate below would fail.

\emph{Step 4: the boundary.} If $\sigma_i = 0$ then $\Sigma = 0$,
there is no depth $\sigma_i$, every pop is deep, and by Step 1
$\hat a_{i-1} = \|v_i\|$, so $s(v_i,v_{i+1}) = D \le
2\hat a_{i-1} \le \hat s_{i-1}$ by~\eqref{eq:deep}, which
gives~\eqref{eq:perjunction}. If $0 < M_i < \sigma_i$ the pairwise
reduction halts at a depth strictly inside the region where Step 2
applies, so the global cascade halts there too and
$s(v_i,v_{i+1}) = \Sigma = \hat s_i$, which again
gives~\eqref{eq:perjunction}.

There remains the case $M_i \ge \sigma_i \ge 1$. The pairwise
reduction pops at depth $\sigma_i$, so $x_{\sigma_i}$ and
$y_{\sigma_i}$ are of the same type. By Step 1 the syllable the
global cascade meets at depth $\sigma_i$ is $x_{\sigma_i}$, with
its phase possibly altered; since a merge combines two $Z$
syllables into a $Z$ syllable and $H$ syllables never merge, the
type is unaltered. The global cascade therefore fires at depth
$\sigma_i$ as well, whether by annihilation or by merge, so
\begin{equation}
\label{eq:Bhat}
\hat B \;\ge\; 1 .
\end{equation}
Two subcases remain.

If $\hat\mu_{i-1} = 0$ then by Step 1 the whole remnant is
untouched, so the syllable at depth $\sigma_i$ is $x_{\sigma_i}$
itself and the argument of Step 2 extends one depth further: the
global pop at depth $\sigma_i$ is the pairwise one and
$\hat B = B$. With~\eqref{eq:deep},
\begin{equation}
B + D \;\le\; \hat B + \hat s_{i-1} .
\end{equation}

If $\hat\mu_{i-1} = 1$ then $\hat s_{i-1} = 2\hat a_{i-1} + 1$, so
$D \le 2\hat a_{i-1} = \hat s_{i-1} - 1$ by~\eqref{eq:deep}, while
$B \le 2$ because a single pop contributes at most two units.
With~\eqref{eq:Bhat},
\begin{equation}
B + D \;\le\; 2 + \hat s_{i-1} - 1 \;=\; \hat s_{i-1} + 1
\;\le\; \hat B + \hat s_{i-1} .
\end{equation}

In both subcases, substituting into~\eqref{eq:split},
\begin{equation}
s(v_i,v_{i+1}) = \Sigma + B + D
\;\le\; \Sigma + \hat B + \hat s_{i-1}
\;\le\; \hat s_i + \hat s_{i-1} . \qedhere
\end{equation}
\end{proof}

\begin{corollary}\label{cor:audit}
$S \le 2\,\CM - \hat s_{N_k-1}$, and consequently
$|R| \le \CM$.
\end{corollary}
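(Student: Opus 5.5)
Summing Proposition~\ref{prop:audit} over the junctions $i = 1, \ldots, N_k-1$, with $\hat s_0 = 0$, gives
\begin{equation}
S \;=\; \sum_{i=1}^{N_k-1} s(v_i, v_{i+1})
\;\le\; \sum_{i=1}^{N_k-1} \hat s_i \;+\; \sum_{i=1}^{N_k-1} \hat s_{i-1}
\;=\; \sum_{i=1}^{N_k-1} \hat s_i \;+\; \sum_{i=0}^{N_k-2} \hat s_i .
\end{equation}
Both partial sums sit inside the full sum $\sum_{i=1}^{N_k} \hat s_i$, and Lemma~\ref{lem:densweight}(a) identifies that full sum with $\CM$. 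What has to be done is to reconcile the indexing. The issue is that the quantity $\hat s_i$ appearing in Proposition~\ref{prop:audit} is the global saving attached to junction $i$, the cascade that absorbs $v_{i+1}$, whereas the main text indexes $\hat s_i$ by the absorbed block $v_i$. I would fix the junction indexing of the appendix throughout. Then the first sum runs over all junctions and is at most $\CM$. The second sum is the same family with the last junction $N_k-1$ omitted, so it equals $\CM - \hat s_{N_k-1}$. Each term is nonnegative because savings never lengthen a word, which justifies bounding each sum by $\CM$. Adding the two gives $S \le 2\,\CM - \hat s_{N_k-1}$.

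The consequence then follows as in the proof of Lemma~\ref{lem:densweight}(b). By part (a) we have $R = \CM - S$. Since $S \ge 0$, this gives $R \le \CM$. The new bound gives $R \ge \CM - 2\,\CM + \hat s_{N_k-1} \ge -\CM$, and together these yield $|R| \le \CM$.

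The only delicate step is the index bookkeeping. I have to check that the first block's absorption, which has zero saving (it is $\hat s_1 = 0$ in the main-text indexing), corresponds to the convention $\hat s_0 = 0$. I also have to check that the $N_k - 1$ junction cascades exhaust $\CM$, so that no removal is double counted or left out. That check reduces to Lemma~\ref{lem:densweight}(a), whose proof attributes every removed letter to exactly one absorption step.
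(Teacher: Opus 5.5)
Your proposal is correct and is essentially the paper's proof. You sum Proposition~\ref{prop:audit} over the junctions and reindex the shifted sum using $\hat s_0 = 0$ and $\CM = \sum_{i=1}^{N_k-1}\hat s_i$ in the appendix's junction indexing; the index shift relative to the main text, which you flag explicitly, is exactly what the paper handles silently. You then combine $R = \CM - S$ with $S \ge 0$, as the paper does. One small wording point: in that indexing the first sum equals $\CM$ rather than merely being at most $\CM$, and that equality is what your claim that the second sum equals $\CM - \hat s_{N_k-1}$ relies on.
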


\begin{proof}
The junctions are indexed $i = 1, \ldots, N_k - 1$ and
$\CM = \sum_{i=1}^{N_k-1}\hat s_i$.
Summing~\eqref{eq:perjunction} over them,
\begin{align}
S &\;\le\; \sum_{i=1}^{N_k-1}\hat s_i
\;+\; \sum_{i=1}^{N_k-1}\hat s_{i-1}
\;=\; \CM \;+\; \sum_{j=0}^{N_k-2}\hat s_j \notag\\
&\;=\; \CM + \big(\CM - \hat s_{N_k-1}\big),
\end{align}
using $\hat s_0 = 0$ in the reindexed sum. This is the first
assertion. For the second, Lemma~\ref{lem:densweight}(a) gives
$R = \CM - S$, so the bound just proved yields
\begin{align}
R = \CM - S &\;\ge\; \CM - \big(2\,\CM - \hat s_{N_k-1}\big)
\notag\\
&\;=\; -\CM + \hat s_{N_k-1} \;\ge\; -\CM ,
\end{align}
the last step because $\hat s_{N_k-1} \ge 0$. In the opposite
direction $S \ge 0$ gives $R = \CM - S \le \CM$. The two bounds
together read $-\CM \le R \le \CM$, which since $\CM \ge 0$ is
precisely $|R| \le \CM$.
\end{proof}

Neither direction of Corollary~\ref{cor:audit} admits a smaller
constant, and it is the negative one that binds. Take $2n$ blocks
alternating $Z_1$ and $Z_7$. Globally the blocks annihilate in
consecutive pairs, so $\CM = 2n$ and $\hat s_{N_k-1} = 2$, while
each of the $2n-1$ junctions has pairwise saving $2$ and hence
$S = 4n-2$. This meets the corollary with equality and gives
$|R|/\CM = 1 - 1/n$, which approaches $1$.

The positive side is not vacuous either, although it is never
attained. Values $R > 0$ arise whenever a block is consumed
entirely and the global cascade cancels straight across the gap
it leaves, an event that no pairwise term can register, so the
two-sided form of the statement is doing real work.

Two degenerate configurations are worth following through the
argument rather than excluding by hypothesis. If $\CM = 0$ then
nothing was removed anywhere, so no block was consumed and no
phase altered, and the stack holds every block pristine. Every
pairwise pop would then have been replicated by Step 2 of the
proof, and a replicated pop removes letters, so there can have
been no pops at all: $S = 0$ and $R = 0$. If instead some reduced
block is empty, it generates no pairwise pops on either side,
because $s(v,w)$ vanishes when either argument does, while the
global process may cancel straight across the gap. That inflates
$\hat s$ relative to the pairwise sum, and the easy direction
$R \le \CM$ absorbs it.

An earlier version of this audit charged pairwise savings to
individual letters and bounded the load a single letter could
carry, which yielded only $|R| \le 3\,\CM$. Two changes produce
the sharp constant. The first is to count per junction rather
than per letter, so that the deep pops at a junction are paid for
in bulk by the cascade at the junction before it instead of one
letter at a time. The second is the accounting recorded
in~\eqref{eq:sigma}: a merge leaves the remnant of the incoming
block one syllable longer than a per-letter count suggests, and
that single syllable is exactly the margin the estimate needs.

The numerical record sits inside the proven constant. Across the
$3065$ instances of the verification suite described in
Sec.~\ref{sec:tier1}, real SK runs on both nets at $k = 1$--$5$
together with $3000$ adversarial synthetic block sequences
including the family that falsified the earlier block-level
bound, the worst observed ratio is $|R|/(\CM + 1) = 0.909$. The
alternating family above shows what a sequence approaching the
ceiling has to look like, and no SK instance we generated
resembles it. (The $R = 0$ induction for Theorem~\ref{thm:A},
formerly deferred to this appendix, is proven in full in the main
text.)

\section{ZX conventions and the rule set}
\label{app:zx}

This appendix fixes the conventions used in the main text and
records the rewrite rules invoked in the proofs, in the form in
which we use them. It is a reference rather than an exposition;
for the calculus itself we refer to the standard
sources~\cite{coecke-duncan2011,coecke-kissinger-book,vandewetering-review},
and for the completeness of the Clifford+$T$ fragment to
Ref.~\cite{jeandel-perdrix-vilmart}.

\emph{Generators.} A ZX diagram is built from $Z$ spiders (green),
$X$ spiders (red), Hadamard boxes (yellow), and wires. The $Z$
spider with $m$ inputs, $n$ outputs, and phase $\alpha$ denotes
$|0\rangle^{\otimes n}\langle 0|^{\otimes m} + e^{i\alpha}
|1\rangle^{\otimes n}\langle 1|^{\otimes m}$, and the $X$ spider is
its Hadamard conjugate. A single-qubit Clifford+$T$ word translates
gate by gate, as in Fig.~\ref{fig:zxrules}(a): $T$, $S$, and $Z$
become one-input one-output $Z$ spiders of phases $\pi/4$, $\pi/2$,
and $\pi$, written $Z_1$, $Z_2$, $Z_4$ in the $\Zb_8$ labelling of
Sec.~\ref{sec:prelim}, and $H$ becomes a Hadamard box. Diagrams are
read left to right and all equalities are up to a nonzero scalar
and a global phase, which is why $\doteq$ rather than $=$ appears
in Sec.~\ref{sec:linkage}.

Two further identities are used in the main text, and both are
easier to read as pictures than as formulas
(Fig.~\ref{fig:zxid}). The colour change rule conjugates a spider
by Hadamard boxes on all of its legs and exchanges its colour,
which is what allows an arbitrary diagram to be brought to the
all-$Z$ form defined below. The Euler decomposition writes a
Hadamard box as three spiders of phase $\pi/2$ alternating in
colour; read as an operator identity it expresses an arbitrary
$SU(2)$ element as $R_z(\alpha) H R_z(\beta) H R_z(\gamma)$ up to
global phase, and the proof of Theorem~\ref{thm:B} uses it in
that form to reduce general single-qubit synthesis to
$z$-rotation synthesis.

\begin{figure}[t]
\centering
\begin{tikzpicture}[scale=0.90,every node/.style={transform shape}]
\node[anchor=west,font=\footnotesize\itshape] at (-0.15,1.15)
  {(a) colour change};
\node[zxX] (x) at (0.85,0.35) {$\alpha$};
\draw[zxwire] (0.15,0.65)--(x); \draw[zxwire] (0.15,0.05)--(x);
\draw[zxwire] (x)--(1.55,0.35);
\node[font=\small] at (2.0,0.35) {$=$};
\node[zxH] (h1) at (2.6,0.65) {};
\node[zxH] (h2) at (2.6,0.05) {};
\node[zxZ] (z) at (3.45,0.35) {$\alpha$};
\node[zxH] (h3) at (4.3,0.35) {};
\draw[zxwire] (2.15,0.65)--(h1); \draw[zxwire] (2.15,0.05)--(h2);
\draw[zxwire] (h1)--(z); \draw[zxwire] (h2)--(z);
\draw[zxwire] (z)--(h3); \draw[zxwire] (h3)--(4.85,0.35);

\node[anchor=west,font=\footnotesize\itshape] at (-0.15,-0.85)
  {(b) Euler decomposition of the Hadamard box};
\node[zxH] (hh) at (0.85,-1.65) {};
\draw[zxwire] (0.25,-1.65)--(hh); \draw[zxwire] (hh)--(1.45,-1.65);
\node[font=\small] at (1.9,-1.65) {$=$};
\node[zxZ] (e1) at (2.6,-1.65) {$\tfrac{\pi}{2}$};
\node[zxX] (e2) at (3.7,-1.65) {$\tfrac{\pi}{2}$};
\node[zxZ] (e3) at (4.8,-1.65) {$\tfrac{\pi}{2}$};
\draw[zxwire] (2.05,-1.65)--(e1); \draw[zxwire] (e1)--(e2);
\draw[zxwire] (e2)--(e3); \draw[zxwire] (e3)--(5.35,-1.65);
\node[font=\footnotesize,anchor=west] at (5.7,-1.65) {(up to a scalar)};
\end{tikzpicture}
\caption{The two auxiliary identities used in the main text.
(a) Colour change: conjugating a spider by Hadamard boxes on every
leg exchanges red for green and leaves the phase unaltered, which
is how a diagram is brought to the all-$Z$ graph-like form.
(b) Euler decomposition: a Hadamard box equals three alternating
spiders of phase $\pi/2$; read as an operator identity this is the
statement that any $SU(2)$ element factors into three
$z$-rotations conjugated by Hadamards, which the proof of
Theorem~\ref{thm:B} uses to pass from $z$-rotation synthesis to
general single-qubit synthesis.}
\label{fig:zxid}
\end{figure}

\emph{Tier 1.} Two adjacent spiders of the same colour joined by
one or more wires fuse into a single spider whose phase is the sum
of theirs. A spider of phase $0$ and degree two is the identity and
may be deleted, its two wires joined. These are the only rules used
in Sec.~\ref{sec:tier1}, and Lemma~\ref{lem:freeprod} identifies
their joint action on single-qubit words with the free-product
normal form.

\emph{Graph-like form.} A diagram is graph-like when every interior
spider is a $Z$ spider, all interior connections are Hadamard
edges, the underlying graph is simple, and each boundary vertex
meets exactly one spider, which meets at most one boundary
vertex. Every diagram can be brought to this form by colour change
and by the Hadamard-edge conventions, and the strategies of Tier 2
and Tier 3 operate on diagrams in this form.

\emph{Tier 2.} Local complementation at an interior spider $u$
carrying phase $\pm\pi/2$ deletes $u$, complements the subgraph
induced on its neighbourhood, and shifts the phases of the
neighbours by $\mp\pi/2$. Pivoting along an edge $uv$ between two
interior spiders of phase in $\{0,\pi\}$ is the composite
$\star u \star v \star u$ of three local complementations, and
deletes both endpoints. Both rules strictly decrease the number of
interior spiders, which is what drives the termination argument
used in the proof of Theorem~\ref{thm:L1}, and the general
simplification strategy assembled from them, together with the
statement that no interior proper Clifford spider survives, is
Theorem~5.4 of Ref.~\cite{duncan-kissinger2020}.

One feature of these rules matters for Theorem~\ref{thm:L1} and
is worth isolating. Neither rule inspects the value of a
non-Clifford phase: local complementation shifts the phases of the
neighbours of $u$ by $\mp\pi/2$ and pivoting shifts phases by
multiples of $\pi$, so a non-Clifford phase is only ever
translated by a Clifford amount or negated. A phase carried as a
formal variable therefore passes through the entire Tier-2
simplification with its identity intact, altered at most by a
sign and by added Clifford constants, which is what makes the
parametrization argument of Sec.~\ref{sec:linkage} legitimate.

\emph{Tier 3.} A phase gadget carries a non-Clifford phase on a
spider attached to the diagram through a pair of Hadamard edges
rather than sitting on a wire. Phase teleportation, implemented as
\texttt{teleport\_reduce} in \texttt{PyZX}~\cite{pyzx,kissinger-tcount},
parametrizes the non-Clifford phases, simplifies the Clifford
skeleton with the Tier-2 rules while the phases ride along as
gadgets, and records in a phase table which parameters have been
brought into a common slot and with which relative signs. It is
this table that Theorem~\ref{thm:L1} identifies with the linkage
procedure of Theorem~\ref{thm:linkage}.

\emph{Extraction.} Turning a simplified diagram back into a
circuit is the one step of the pipeline that is not automatic. A
graph-like diagram carries no notion of time order, so recovering
a gate sequence requires finding one, and this is possible when
the diagram admits a \emph{generalized flow}: an assignment to
each interior spider $u$ of a set $g(u)$ of spiders lying later in
some partial order, such that correcting $u$ propagates only
forward. Figure~\ref{fig:extract} shows the passage from circuit
to graph-like form and the flow condition. The simplification
strategies of Tiers 2 and 3 are designed to preserve the existence
of such a flow, which is why their output can be extracted at all;
the general analysis, including the cases where extraction becomes
hard, is that of Refs.~\cite{duncan-kissinger2020,backens-extraction},
and the same structure is what Kelly and Kissinger's phase
squashing requires~\cite{kelly-kissinger2025}.

Our results use none of this machinery directly. All that the
proofs need is the conclusion: the extraction step returns a
circuit denoting the diagram's linear map, so that the composite
of simplification and extraction preserves the implemented element
and is semantics-exact in the sense of Sec.~\ref{sec:barrier}. In
particular the barrier of Theorem~\ref{thm:B} applies to any
pipeline with this property, whether or not it proceeds through
graph-like diagrams and flows at all.

\begin{figure}[t]
\centering
\begin{tikzpicture}[scale=0.90,every node/.style={transform shape}]
\node[anchor=west,font=\footnotesize\itshape] at (-0.2,1.35) {(a) circuit};
\draw[zxwire] (0.1,0.55)--(2.6,0.55);
\draw[zxwire] (0.1,-0.35)--(2.6,-0.35);
\node[zxZ] (c1) at (0.75,0.55) {$\alpha$};
\node[zxdot] (c2) at (1.6,0.55) {};
\node[zxdot] (c3) at (1.6,-0.35) {};
\draw[zxwire] (c2)--(c3);
\node[zxZ] (c4) at (2.25,-0.35) {$\beta$};

\node[anchor=west,font=\footnotesize\itshape] at (3.3,1.35) {(b) graph-like};
\node[bnd] (i1) at (3.5,0.55) {};
\node[bnd] (i2) at (3.5,-0.35) {};
\node[zxZ,minimum size=4.6mm] (g1) at (4.35,0.55) {$\alpha$};
\node[zxZ,minimum size=4.6mm] (g2) at (4.35,-0.35) {};
\node[zxZ,minimum size=4.6mm] (g3) at (5.25,0.1) {$\beta$};
\node[bnd] (o1) at (6.1,0.55) {};
\node[bnd] (o2) at (6.1,-0.35) {};
\draw[zxwire] (i1)--(g1); \draw[zxwire] (i2)--(g2);
\draw[hedge] (g1)--(g3); \draw[hedge] (g2)--(g3);
\draw[hedge] (g1)--(g2);
\draw[zxwire] (g3)--(o1); \draw[zxwire] (g3)--(o2);

\node[anchor=west,font=\footnotesize\itshape] at (6.9,1.35) {(c) flow};
\node[zxZ,minimum size=4.6mm] (u) at (7.5,0.35) {$u$};
\node[zxdot] (w1) at (8.5,0.75) {};
\node[zxdot] (w2) at (8.5,-0.05) {};
\draw[hedge] (u)--(w1); \draw[hedge] (u)--(w2);
\draw[->,line width=0.7pt,draw=red!70] (7.85,0.6) to[bend left=18] (8.3,0.78);
\node[font=\scriptsize,anchor=west] at (7.0,-0.7) {$g(u)$ lies later};
\end{tikzpicture}
\caption{From circuit to graph-like diagram and back. (a) A
two-qubit Clifford+$T$ circuit as a ZX diagram, with time running
left to right. (b) The same diagram in graph-like form: every
interior spider is a $Z$ spider, interior connections are Hadamard
edges (dashed), and the temporal layout has been discarded, so a
gate order must be recovered before a circuit can be read off.
(c) A generalized flow supplies that order by assigning to each
interior spider $u$ a correction set $g(u)$ of spiders lying later
in a partial order; the simplification strategies preserve the
existence of such a flow, which is what makes their output
extractable.}
\label{fig:extract}
\end{figure}

\section{Numerical methodology and validation}
\label{app:num}

Every number in this paper comes from code written for it, and
this appendix records what that code is and how each piece was
checked before being trusted. The organising principle is that no
component was used in a reported result until it had been
validated against an independent computation of the same
quantity, either a floating-point evaluation, an exhaustive
enumeration, or a second implementation.

\emph{Software.} \texttt{numpy}~\cite{numpy},
\texttt{PyZX}~0.10.4~\cite{pyzx}, \texttt{qiskit}~2.5.0~\cite{qiskit},
\texttt{pygridsynth}~2.0.0, and exact-arithmetic modules
implemented for this work.

\emph{Exact arithmetic.} Four components carry the exact
arithmetic and each was validated separately. The free-product
reducer was checked for semantic correctness against direct matrix
multiplication, agreement up to global phase with squared-trace
tolerance below $10^{-15}$, and for normal-form correctness,
meaning strict alternation of syllable types and no zero phases,
on $2000$ random words. The $\Zb[\omega]$ arithmetic was compared
with floating-point evaluation on $300$ random words, with
worst-case discrepancy $3\times10^{-15}$, and was confirmed
invariant under Tier-1 reduction on $100$ further words, which
tests the reducer and the arithmetic against each other. The
$SO(3)$ Bloch representation was validated against floating point
on $100$ random words, and the projective Clifford group it
produces was confirmed to have order exactly $24$. The
$n$-qubit Pauli channel of Sec.~\ref{sec:multiqubit} was validated
against the single-qubit Bloch representation on $120$ random
words, agreeing in every entry.

\emph{Synthesis.} The balanced group-commutator solver, which
inverts $\sin(\theta/2) = 2q^2\sqrt{1-q^4}$ by bisection on the
branch isolated in Lemma~\ref{lem:scale}, was validated to
worst-case error $1.5\times10^{-8}$ on $200$ random targets. For
the multi-qubit pipeline we confirmed that every
\texttt{gridsynth} output arising in our experiments is in
Matsumoto--Amano normal form, the hypothesis under which
Theorem~\ref{thm:C}(a) applies, by parsing each returned word
against the grammar $(T|\varepsilon)(HT|SHT)^{\ell}\,c$; no
exception occurred. Assembled circuits were checked against
direct unitary evaluation at $n = 2$ and $n = 3$, with accumulated
distances consistent with the per-rotation $\epsilon$ budget,
which is what detects the gate-order and endianness mismatches
that silently corrupt cross-library pipelines.

\emph{Ground truth for Lemma~\ref{lem:bridge}.} Matsumoto--Amano
normal forms $(T|\varepsilon)(HT|SHT)^{\ell}\,c$ with
$c \in \mathcal{C}_1$ were enumerated exhaustively for
$t \le 8$, yielding $18{,}384$ distinct projective elements. The
histogram of elements by $T$-count is
$24\cdot[1,3,6,12,24,48,\ldots]$, matching the count predicted by
the normal-form characterization, which is itself a check on the
enumerator. The identity $\tmin = \mathrm{lde}$ was then tested
on every one of these elements, with no mismatch.

\emph{Net regularity.} On both nets the regularity assumption
$\ell_{\min} \ge 2D^*+1$ of Theorem~\ref{thm:A} fails for
$k \ge 2$. Empty and near-empty blocks occur at a stationary
density of $7$--$20\%$ at every depth $k = 1$--$6$, and the number
of degenerate junctions, where adjacent blocks are mutually
inverse, grows to about $300$ at $k = 6$ on the coarse net. This
is the measurement that motivates Theorem~\ref{thm:Aprime} and
Lemma~\ref{lem:densweight}: the clean theorem does not apply to
the instances one actually generates.

\emph{Multiplicity audit.} The per-junction
inequality~\eqref{eq:perjunction} underlying
Corollary~\ref{cor:audit} was checked on $3\times10^5$ random
sequences of internally reduced blocks, with up to nine blocks of
up to six syllables each, and exhaustively on all block triples
over the $22$ reduced words of length at most two and on all
quintuples of single-syllable blocks. No violation was found. The
alternating family exhibited in Appendix~\ref{app:proofs} meets
the bound with equality at every length tested, which confirms
that the search was reaching the extremal region rather than
sampling a slack interior.

\emph{Valuation lemma.} The mod-$2$ residue automaton of
Lemma~\ref{lem:valuation} was explored by breadth-first search.
States are pairs $(\bar P, \bar Q)$ of $3\times3$ matrices over
$\mathbb{F}_2$, seeded from the residues of the $24$ Clifford
prefixes, which collapse to six distinct states because the signs
of a signed permutation vanish modulo $2$. One transition is
counted for each triple of state, admissible interior Clifford,
and odd phase, although the residue action does not depend on the
phase; the counts reported in Sec.~\ref{sec:linkage} follow this
convention.

Because that computation verifies an intermediate representation
rather than the conclusion, we also checked the conclusion
directly on words. Using exact $\Zb[\sqrt2]$ arithmetic in the
form~\eqref{eq:ringform}, reducing after every product, we
generated $3200$ separated words, $400$ at each length
$n = 1, \ldots, 8$, with uniformly random odd phases, uniformly
random interior Cliffords outside $N$, and uniformly random outer
Cliffords. The lde equalled $n$ in every case. Lifting the
normalizer restriction, so that interior Cliffords range over all
of $\mathcal{C}_1$, gives $\mathrm{lde} < n$ for $69\%$ of $2400$
words, and the deficit $n - \mathrm{lde}$ is always even,
consistent with phases annihilating in pairs as the transport
picture of Theorem~\ref{thm:linkage} predicts.

\emph{Collapse criterion and optimality.} The refinement in
Lemma~\ref{lem:mqval}(b), that a criterion-positive step lowers
the exponent by exactly one rather than by more, was checked on
$400$ random steps at $n = 1$ and $150$ at $n = 2$, with no
exception; as noted in Sec.~\ref{sec:multiqubit}, no downstream
result depends on it. Separately, the claim of
Theorem~\ref{thm:L1} that automated ZX simplification attains
$\tmin$ was tested on the planted family
$\{T\,c\,T, T\,c\,T^\dagger : c \in \mathcal{C}_1\}$, on the SK
instances of Sec.~\ref{sec:barrier} with $\tmin$ up to $568$, on
adversarial $\pi$-blocker suites, and on random words, giving
$279$ exact matches in $279$ trials.

\emph{Certificates at scale.} The structured accumulator
described in Sec.~\ref{sec:numerics}, which reduces the cost of a
channel product from $O(4^{3n})$ to $O(4^{2n})$ per gate, was
validated bit for bit against the dense implementation on random
$n = 2$ circuits: for each circuit the two routines were required
to return identical integer numerators and identical exponents at
every gate, not merely equal lde at the end, so that a
discrepancy anywhere along the accumulation would be caught.

\end{document}